\newcommand{\introduceterm}[1]{{\emph{#1}}}
\newtheorem{theorem}{Theorem}[section]
\newtheorem{proposition}[theorem]{Proposition}
\newtheorem{lemma}[theorem]{Lemma}
\newtheorem{fact}[theorem]{Fact}
\newtheorem{claim}[theorem]{Claim}
\newtheorem{definition}[theorem]{Definition}
\theoremstyle{remark}
\newtheorem{remark}{Remark}
\newcommand{\refsec}[1]{Section~\ref{#1}}
\newcommand{\refth}[1]{Theorem~\ref{#1}}
\newcommand{\refthm}[1]{Theorem~\ref{#1}}
\newcommand{\refdef}[1]{Definition~\ref{#1}}
\newcommand{\refeq}[1]{\eqref{#1}}
\numberwithin{equation}{section}
\newcommand{\eqperiod}{\enspace .}
\newcommand{\eqcomma}{\enspace ,}
\newcommand{\formuladots}{\cdots}
\newcommand{\ie}{i.e.,\ }
\newcommand{\aas}{asymptotically almost surely\xspace}
\newcommand{\ErdosRenyi}{Erd\H{o}s-R\'enyi\xspace}
\newcommand{\E}{\mathrm{e}}%
\newcommand{\Rplus}     {\mathbb{R}^{+}}
\newcommand{\Nplus}     {\mathbb{N}^{+}}
\newcommand{\ceiling}[1]{\lceil #1 \rceil}
\newcommand{\set}[1]{\{ #1 \}}
\newcommand{\Set}[1]{\bigl\{ #1 \bigr\}}
\newcommand{\setdescr}[3][\mid]{\set{ #2 #1 #3 }}
\newcommand{\Setsize}[1]{\bigl\lvert#1\bigr\rvert}
\newcommand{\setsize}[1]{\lvert#1\rvert}
\newcommand{\Union}{\bigcup}
\newcommand{\bigoh}[1]{\mathrm{O} ( #1 )}
\newcommand{\littleoh}[1]{\mathrm{o} ( #1 )}
\newcommand{\Bigomega}[1]{\Omega \bigl( #1 \bigr)}
\newcommand{\bigomega}[1]{\Omega ( #1 )}
\newcommand{\complclassformat}[1]%
        {\textrm{\upshape{\textsf{#1}}}\xspace}
\newcommand{\Pclass}{\complclassformat{P}}
\newcommand{\NP}{\complclassformat{NP}}
\newcommand{\Wclass}[1]{\complclassformat{W}\kern-.01em$[#1]$\xspace}
\newcommand{\FPTclass}{\complclassformat{FPT}}
\newcommand{\prob}[2][]{\Pr_{#1} [ #2 ]}
\newcommand{\Prob}[2][]{\Pr_{#1} \bigl[ #2 \bigr]}
\newcommand{\PROB}[2][]{\Pr_{#1} \left[ #2 \right]}
\newcommand{\twincommandJN}[6]%
    {#1#2#3\vphantom{#2#5}\mspace{-2.05mu}#4.#5#6}
\newcommand{\Condprob}[3][]%
    {\Pr_{#1}\twincommandJN{\bigl[}{#2}{\bigl|}{\bigr}{\,#3}{\bigr]}}
\newcommand{\CONDPROB}[3][]%
    {\Pr_{#1}\twincommandJN{\left[}{#2}{\left|}{\right}{\,#3}{\right]}}
\newcommand{\expect}[1]{\mathbb{E}[#1]} %
\newcommand{\Expect}[1]{\mathbb{E}\big[#1\big]} %
\newcommand{\sample}[2]{#1\sim#2}
\newcommand{\event}[1]{\mathsf{#1}} %
\newcommand{\eventE}{\event{E}}
\newcommand{\eventA}{\event{P}}
\newcommand{\randomvar}[1]{\boldsymbol{#1}} %
\newcommand{\randomvarG}{\randomvar{G}}
\newcommand{\Xrand}{\randomvar{X}}%
\newcommand{\randomvaralpha}{\randomvar{\alpha}}
\newcommand{\distribution}[1]{\mathscr{#1}} %
\newcommand{\distributionD}{\distribution{D}}
\newcommand{\proofstd}{\pi}
\newcommand{\emptycl}{\bot}
\newcommand{\formf}{\ensuremath{F}}
\newcommand{\varx}{\ensuremath{x}}
\newcommand{\clb}{\ensuremath{B}}
\newcommand{\clc}{\ensuremath{C}}
\newcommand{\cld}{\ensuremath{D}}
\newcommand{\prooflength}{L}
\newcommand{\lengthsize}{length\xspace}
\newcommand{\bpbit}{\gamma}
\newcommand{\bprogpi}{P}
\newcommand{\vars}[1]{\mathit{Vars}({#1})}
\newcommand{\restrict}[2]{{{#1}\!\!\upharpoonright_{#2}}}
\newcommand{\cliqueformulaname}{\textit{Clique}}%
\newcommand{\cliquevariant}[1]{_{\mathrm{#1}}} %
\newcommand{\clique}[2]{\cliqueformulaname(#1,#2)}
\newcommand{\wclique}[2]{\cliqueformulaname^*(#1,#2)}
\newcommand{\cliquecount}[2]{\cliqueformulaname\cliquevariant{count}(#1,#2)}
\newcommand{\cliquemap}[2]{\cliqueformulaname\cliquevariant{map}(#1,#2)}
\newcommand{\cliqueblock}[2]{\cliqueformulaname\cliquevariant{block}(#1,#2)}
\newcommand{\cliquemembervar}[1]{x_{#1}}
\newcommand{\countingvar}[2]{y_{#1,#2}}
\newcommand{\icliquemembervar}[2]{x_{#1,#2}}
\newcommand{\cliquesize}{k} %
\newcommand{\cliqueaxiom}{clique axiom\xspace}
\newcommand{\cliqueaxioms}{clique axioms\xspace}
\newcommand{\edgeaxiom}{edge axiom\xspace}
\newcommand{\edgeaxioms}{edge axioms\xspace}
\newcommand{\fun}{functionality\xspace}
\newcommand{\funaxiom}{functionality axiom\xspace}
\newcommand{\funaxioms}{functionality axioms\xspace}
\newcommand{\orderaxioms}{ordering axioms\xspace}
\newcommand{\Gp}{G} %
\newcommand{\Vp}{V} %
\newcommand{\Ep}{E} %
\newcommand{\np}{n} %
\newcommand{\randG}[2]{\distribution{\Gp}(#1,#2)} %
\newcommand{\neigh}[1]{N(#1)} %
\newcommand{\neighc}[2]{\widehat{N}_{#2}(#1)}
\newcommand{\Neighcsize}[2]{\Setsize{\neighc{#1}{#2}}}
\newcommand{\commonneighbourhood}[2]{\neighc{#1}{#2}} %
\newcommand{\neighcV}[1]{\widehat{N}(#1)}
\newcommand{\setzeros}[2]{V^0_{#1}(#2)}
\newcommand{\setones}[2]{V^1_{#1}(#2)}
\newcommand{\unionsetones}{V^1(a)}
\newcommand{\betaone}{\beta^1}
\newcommand{\genericforR}[1]{$#1$\nobreakdash-neigh\-bour-dense\xspace}
\newcommand{\generic}[2]{$(#1,#2)$\nobreakdash-neigh\-bour-dense\xspace}
\newcommand{\genericNOparam}{neighbour-dense\xspace}
\newcommand{\genericnessNOparam}{neighbour-denseness\xspace}
\newcommand{\GenericNOparam}{Neighbour-dense\xspace}
\newcommand{\robgeneric}[4]{$(#1,#2,#3,#4)$\nobreakdash-mostly neighbour-dense\xspace}
\newcommand{\RobgenericNOparam}{Mostly neighbour-dense\xspace}
\newcommand{\cliquedense}[5]{$(#1,#2,#4,#5)$\nobreakdash-clique-dense\xspace}
\newcommand{\cliquedenseNOparam}{clique-dense\xspace}
\newcommand{\CliquedenseNOparam}{Clique-dense\xspace}
\newcommand{\cliquedensenessNOparam}{clique-denseness\xspace}
\newcommand{\rp}{r}
\newcommand{\qp}{q}
\newcommand{\spar}{s}
\newcommand{\Wp}{W}
\newcommand{\Rp}{R}
\newcommand{\Sp}{S}
\newcommand{\newepsilon}{\varepsilon}
\newcommand{\oldxi}{\newepsilon}
\newcommand{\largerp}{r'}
\newcommand{\oldeta}{\xi}
\newcommand{\useful}{useful\xspace}
\newcommand{\frugally}{frugally\xspace}
\renewcommand{\frugally}{usefully\xspace}
\newcommand{\alphar}{\randomvaralpha} %
\newcommand{\Ddist}{\distributionD} %
\newcommand{\tp}{t} %
\newcommand{\randomvarR}{R(\alphar)} %
\newcommand{\setWR}{\mathcal{W}} %
\newcommand{\disj}[1]{$#1$-dis\-joint\xspace} %
\newcommand{\neighcineq}[1]{\event{F}(#1)}
\newcommand{\neighcineqp}[1]{\event{F}'(#1)}
\newcommand{\yp}{\ell}
\begin{document}

\title{Clique Is Hard on Average for Regular Resolution}

\author[1]{Albert Atserias}
\author[1]{Ilario Bonacina}
\author[2]{Susanna F. de Rezende}
\author[3]{Massimo Lauria}
\author[4]{Jakob Nordstr\"om}
\author[5]{Alexander Razborov}
\affil[1]{Universitat Polit\`ecnica de Catalunya}
\affil[2]{Institute of Mathematics of the Czech Academy of Sciences}
\affil[3]{Sapienza - Universit\`a di Roma}
\affil[4]{University of Copenhagen and Lund University}
\affil[5]{University of Chicago and Steklov Mathematical Institute}

\renewcommand\Authands{ and }

% \author{
% Albert Atserias \\
% Universitat Polit\`ecnica de Catalunya 
% \and
% Ilario Bonacina \\
% Universitat Polit\`ecnica de Catalunya 
% \and
% Susanna F. de Rezende \\
% Institute of Mathematics of the Czech Academy of Sciences
% \and
% Massimo Lauria \\
% Sapienza - Universit\`a di Roma
% \and
% Jakob Nordstr\"om \\
% University of Copenhagen \\
% Lund University
% \and
% Alexander Razborov \\
% University of Chicago \\
% Steklov Mathematical Institute
% }

\maketitle

\begin{abstract}
  We prove that for $k \ll \sqrt[4]{n}$ regular resolution requires length $n^{\bigomega{k}}$ to establish that
  an Erd\H{o}s--Rényi graph with appropriately chosen edge density
  does not contain a $k$-clique. This lower bound is optimal up to the
  multiplicative constant in the exponent, and  also implies
  unconditional $n^{\bigomega{k}}$ lower bounds on running time for
  several state-of-the-art algorithms for finding maximum cliques in
  graphs.
\end{abstract}

\section{Introduction}
\label{sec:intro}

Deciding whether a graph has a $k$-clique is one of the most basic
computational problems on graphs, and has been extensively studied in
computational complexity theory ever since it appeared in Karp's list
of $21$ \NP-complete problems~\cite{Karp72Reducibility}.  Not only is
this problem widely believed to be intractable to solve
exactly (unless $\Pclass = \NP$), there does not even exist any
polynomial-time algorithm for approximating the maximum size of a
clique to within a factor~$n^{1-\epsilon}$ for any constant
$\epsilon > 0$, where $n$ is the number of vertices in the graph
\cite{Hastad99Clique,Zuckerman07LinearDegreeExtractors}.  Furthermore,
the problem appears to be hard not only in the worst case but
also on average in the \ErdosRenyi random graph model---we know of no
efficient algorithms for finding cliques of maximum size
asymptotically almost surely on random graphs with appropriate edge
densities~\cite{Karp76ProbabilisticAnalysis,Rossman10Thesis}.

In terms of upper bounds, the $k$-clique problem can clearly be solved
in time roughly~$n^k$ simply by checking if any of the $\binom{n}{k}$
many sets of vertices of size~$k$ forms a clique. This takes polynomial time
if $k$~is constant. This can be improved slightly
to~$\bigoh{n^{\omega k/3}}$, where $\omega\leq 2.373$ is the
matrix multiplication exponent, using algebraic
techniques~\cite{NP85CplxSubgraph}, although in practice such algebraic
algorithms are outperformed by combinatorial
ones~\cite{Vassilevska09EfficientClique}.

The motivating problem behind this work is to determine the exact time
complexity of the clique problem when $k$~is given as a parameter.  As
noted above, all known algorithms require time $n^{\Omega(k)}$.  It appears
quite likely that some dependence on~$k$ is needed in the exponent,
since otherwise we have the parameterized complexity collapse
$\FPTclass = \Wclass{1}$ \cite{DF95FPTandCompletenessII}.  Even more
can be said if we are willing to believe the Exponential Time
Hypothesis (ETH) \cite{IP01Complexity}---then the exponent has to
depend linearly on~$k$~\cite{CHKX04LinearFPTreductions}, so that the
trivial upper bound is essentially tight.

Obtaining such a lower bound unconditionally would, in particular,
imply $\Pclass \neq \NP$, and so
currently seems completely out of reach.
But is it possible to prove $n^{\Omega(k)}$ lower bounds in restricted but
nontrivial models of computation?
For circuit complexity, this challenge has been met for circuits that
are of bounded depth~\cite{Rossman08ConstantDepth} or
are monotone~\cite{Rossman14Monotone}.
In this paper we focus on computational models that are powerful
enough to capture several algorithms that are used in practice.

When analysing such algorithms, it is convenient to view the execution
trace as a proof establishing the maximum clique size for the input
graph.  In particular, if this graph does not have a $k$-clique, then
the trace provides an efficiently verifiable proof of the statement
that the graph is \mbox{$k$-clique}-free.  If one can establish a lower
bound on the \lengthsize of such proofs, then this implies a lower bound on
the running time of the algorithm, and this lower bound holds even if
the algorithm is a non-deterministic heuristic that somehow magically
gets to make all the right choices.
This brings us to the topic of
\emph{proof complexity}~\cite{CR79Relative},
which can be viewed as the study of upper and lower bounds in
restricted nondeterministic computational models.

Using a standard reduction from \mbox{$k$-clique} to SAT,
we can translate the problem of \mbox{$k$-cliques} in
graphs to that of satisfiability of formulas in conjunctive
normal form (CNF).
If an algorithm for finding $k$-cliques is run on a graph $G$ that
is \mbox{$k$-clique}-free, then we can extract a proof of the
unsatisfiability of the corresponding CNF formula---the $k$-clique
formula on~$G$---from the execution trace of the algorithm.
Is it possible to show
any non-trivial lower bound on the \lengthsize of such proofs?
Specifically,
does
the \emph{resolution} proof system---the method
of reasoning underlying state-of-the-art SAT
solvers~\cite{BS97UsingCSP,MS99Grasp,MMZZM01Engineering}---require
\lengthsize $n^{\Omega(k)}$, or at least $n^{\omega_k(1)}$ (i.e. the exponent as a function of $k$ is not bounded by a constant), to prove the
absence of $k$-cliques in a graph?  This question was asked
in, e.g.,~\cite{BGLR12Parameterized} and remains open.

The hardness of $k$-clique formulas for resolution is also a problem
of intrinsic interest in proof complexity, since these formulas escape
known methods of proving resolution lower bounds for
a range of interesting values of~$k$ including $k = \bigoh{1}$.
In particular,
the interpolation
technique~\cite{Krajicek97Interpolation,Pudlak97LowerBounds}, the
random restriction method~\cite{BP96Simplified}, and the size-width
lower bound~\cite{BW01ShortProofs}
all seem to fail.

To make this more precise, we should mention that some previous works
do use the size-width method, but only for very large~$k$.  It was shown
in~\cite{BIS07IndependentSets} that for $n^{5/6} \ll k \leq n/3$
 resolution requires \lengthsize
$\exp\mathopen{}\big({n^{\Omega(1)}}\big)\mathclose{}$ to certify that a dense enough \ErdosRenyi
random graph is $k$-clique-free.
The constant hidden in the $\Omega(1)$ increases with the density of the
graph and, in particular, for very dense graphs and $k=n/3$
the \lengthsize required is $2^{\Omega(n)}$.
Also, for a specially tailored CNF encoding, where the  $i$th~member of the
claimed $k$-clique is encoded in binary by $\log n$ variables, a lower bound of
$n^{\Omega(k)}$ for $k \leq \log n$ can be extracted from a careful reading
of~\cite{LPRT17ComplexityRamsey}.
However, in the more natural unary encodings, where indicator variables
specify whether a vertex is in the clique,
the size-width method cannot yield more than a~$2^{\Omega(k^2/n)}$
lower bound since there are resolution proofs of width~$\bigoh{k}$. This bound becomes trivial when $k \leq \sqrt{n}$.

In the restricted subsystem of \emph{tree-like resolution}, optimal
$n^{\bigomega{k}}$ \lengthsize lower bounds were established
in~\cite{BGL13ParameterizedDPLL} for $k$-clique formulas on complete
\mbox{$(k-1)$-partite} as well as on average for \ErdosRenyi random
graphs of appropriate edge density.
There is no hope to get hard instances for general resolution from complete
\mbox{$(k-1)$-partite} graphs, however---in the same paper it was shown
that all instances from the more general class of
$(k-1)$-colourable graphs are easy for resolution.
A closer study of these resolution proofs reveals that they are
\emph{regular}, meaning that if the proof is viewed
as a directed acyclic graph (DAG), then no variable is eliminated more
than once on any source-to-sink path.

More generally, regular resolution is an interesting and non-trivial model to analyse
for the $k$-clique problem since it captures
the reasoning used in many state-of-the-art algorithms
used in practice
(for a survey,
see, e.g., \cite{Prosser12Exact,McCreesh17Thesis}). Nonetheless, it has
remained consistent with state-of-the-art knowledge that
for $k\leq n^{5/6}$
regular resolution might be able to certify \mbox{$k$-clique}-freeness
in polynomial \lengthsize independent of the value of~$k$.

\paragraph{Our contributions}
We prove optimal
$n^{\bigomega{k}}$
average-case  lower bounds for regular resolution proofs of
unsatisfiability for $k$-clique formulas on \ErdosRenyi random graphs.

\begin{theorem}
  [Informal]
	For any integer $k\ll \sqrt[4]{n}$, given an $n$-vertex graph~$G$
        sampled at
	random from the \ErdosRenyi model with the appropriate edge density,
	regular resolution asymptotically almost surely requires
        \lengthsize $n^{\Omega({k})}$
	to certify that $G$ does not contain a $k$-clique.
	\label{thm:informal}
\end{theorem}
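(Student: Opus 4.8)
The plan is to make the informal statement precise and then prove it in three layers: (i) a translation of regular resolution into read‑once branching programs, (ii) an isolated combinatorial lower bound for such branching programs solving the "$k$‑clique search problem" on any graph that is simultaneously $k$‑clique‑free and sufficiently dense in a pseudorandom sense, and (iii) a probabilistic argument that $G\sim\distribution{G}(n,p)$ with the right edge density has both properties \aas. For (i), fix the standard unary encoding $\clique{G}{k}$ with block variables $x_{i,v}$ ($i\in[k]$, $v\in V(G)$), the block clauses $\bigvee_v x_{i,v}$, the \funaxioms $\lnot x_{i,v}\lor\lnot x_{i,w}$, and the \edgeaxioms $\lnot x_{i,v}\lor\lnot x_{j,w}$ for $\{v,w\}\notin E(G)$; this CNF is unsatisfiable exactly when $G$ is $k$‑clique‑free. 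A length‑$L$ regular resolution refutation becomes, in the usual way, a size‑$L$ branching program $\mathcal P$ whose source is labelled $\emptycl$, whose sinks are labelled by axioms, and in which every internal node queries a variable and sends each total assignment to a falsified child; regularity of the refutation is precisely the property that along every source‑to‑sink path no variable is queried twice.

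For layer (iii) I would choose $p=p(n,k)$ (around $n^{-2/(k-1)}$) so that the expected number of $k$‑cliques $\binom nk p^{\binom k2}$ is $\littleoh 1$, forcing $k$‑clique‑freeness \aas, while $p$ is still large enough that \aas, for every clique $U$ with $\setsize U = \bigoh k$ and every "forbidden" set $R$ with $\setsize R\le n^{\Theta(k)}$, the punctured common neighbourhood $\bigcap_{u\in U}N(u)\setminus R$ has size $n^{\Omega(1)}$ — this is the \robgenericNOparam / \cliquedenseNOparam property. Each such event fails with probability $\exp(-n^{\Omega(1)})$ by Chernoff bounds, and one union‑bounds over at most roughly $n^{\bigoh k}\cdot 2^{n^{\Theta(k)}}$ choices of $(U,R)$; this is exactly the step where the hypothesis $k\ll\sqrt[4]{n}$ is needed, to keep the failure exponent dominant. (A few edge deletions, not affecting denseness, can enforce exact $k$‑clique‑freeness if needed.)

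Layer (ii) is the heart. Suppose $\mathcal P$ has size $L\le n^{\epsilon k}$ for small $\epsilon$; I would derive a contradiction by playing an adversary against $\mathcal P$ that maintains a partial clique $\kappa\colon[k]\rightharpoonup V(G)$ whose image always induces a clique, initially empty. On query $x_{i,v}$: answer $[\kappa(i)=v]$ if $i\in\mathrm{dom}\,\kappa$; otherwise answer $0$, unless $\mathcal P$ has already been answered "$0$" too many times in block $i$ (a threshold chosen well below the available common‑neighbourhood size), in which case pick — using clique‑denseness — a fresh common neighbour of $\kappa$'s image that avoids every vertex ruled out so far, commit it as $\kappa(i)$, and answer accordingly. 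Since any path has at most $L\le n^{\epsilon k}$ queries and the threshold keeps the number of committed blocks at $\bigoh k$, clique‑denseness guarantees the adversary is never stuck, so no block clause, \funaxiom, or \edgeaxiom is ever falsified along the walk — hence the walk cannot reach a sink, contradicting correctness. A single run this way only rules out $L$ up to the common‑neighbourhood size (about $n^{1/2}$), not $n^{\Omega(k)}$, so to get the full bound one randomises the adversary's commitments: the randomised process produces $n^{\Omega(k)}$ distinct partial cliques $\kappa$, each reaching — at the moment $\Theta(k)$ blocks are first committed — a well‑defined node $u(\kappa)$ of $\mathcal P$; if $L<n^{\Omega(k)}$ two distinct $\kappa\neq\kappa'$ reach a common node $u$, and one argues that $u$ cannot "remember" enough to distinguish their continuations, so a hybrid assignment steers $\mathcal P$ to an axiom it does not falsify — a contradiction. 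Equivalently, this is a bottleneck‑counting argument: put the clique‑based distribution on inputs, track the measure flowing through each node, and show a small $\mathcal P$ cannot absorb it at the sinks.

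The step I expect to be the main obstacle is exactly this bottleneck/hybrid argument, and for a structural reason: regular resolution is not tree‑like, so a node may be reached by genuinely different paths and the information "attached" to it is only the partial assignment common to all of them. One therefore has to show that the read‑once restriction alone still forces $u$ to retain the committed clique long enough for the collision to be contradictory — which in turn requires a carefully engineered schedule for when the adversary is forced to commit a block, a potential/weight function tracking blocks that are "committed but not yet forgotten" by the program, and a choice of randomised adversary whose distribution over partial cliques is simultaneously of support $n^{\Omega(k)}$ and spread‑out enough that the colliding cliques have genuinely diverging continuations. Balancing all of these against the clique‑denseness parameters — hence against the admissible window for $p$, hence against $k\ll\sqrt[4]{n}$ — is the quantitative crux of the whole proof.
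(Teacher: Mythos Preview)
Your three–layer plan is the right skeleton, and layer~(i) is fine. The genuine gaps are in the combinatorial property you posit and in the adversary/collision argument that is meant to exploit it.

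First, the property in layer~(iii) as you state it cannot hold. With $p\approx n^{-2\xi/(k-1)}$ the common neighbourhood of a set~$U$ of size~$m$ has expected size $n^{1-2\xi m/(k-1)}$, which drops below~$1$ once $m>(k-1)/(2\xi)$; so ``for every clique $U$ with $|U|=O(k)$ the punctured common neighbourhood is $n^{\Omega(1)}$'' is simply false whenever the hidden constant exceeds $1/(2\xi)$. Relatedly, your union bound over $2^{n^{\Theta(k)}}$ choices of~$R$ is astronomically larger than any Chernoff exponent you can extract here (and anyway $R\subseteq V$ forces $|R|\le n$, so the stated range is vacuous). The paper does \emph{not} ask for large common neighbourhoods of $\Theta(k)$-sets. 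Instead it works with a much smaller scale $r=\Theta(k/t^2)$ and a two–level property: each block $V_i$ is $(tr,tq)$-neighbour-dense, and every $(r,q)$-neighbour-dense set $W$ is \emph{mostly} neighbour-dense, meaning there is a small ``localizer'' $S$ of size $s=\sqrt n$ such that every $R$ with $|R|\le tr$ for which $W$ fails to be $q'$-dense must satisfy $|R\cap S|\ge r$. Proving that the random graph has this property uses a product/chain-rule argument (over an $r$-disjoint sequence of bad $R_j$'s) that yields a failure probability $\exp(-n^{1+\Omega(1)})$, strong enough to beat a union bound over all $2^n$ choices of~$W$; this is considerably more delicate than Chernoff-plus-union-bound, and it is precisely here that $k\ll n^{1/4}$ enters.

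Second, your layer~(ii) adversary will get stuck for the reason above: as soon as $\Theta(k)$ blocks are committed (with constant $>1/(2\xi)$), there is no fresh common neighbour to pick, so your single-run argument does not even reach a contradiction, let alone bound~$L$. You correctly sense that something like bottleneck counting is needed, but the paper's execution is quite different from the collision/hybrid sketch you give. The random object is not an adversary maintaining a clique; it is a random \emph{path} that answers~$0$ when forced (by forgetting, functionality, or an edge axiom) and otherwise flips an $s^{-(1+\varepsilon)}$-biased coin, so that at most~$k$ ones ever appear. Bottlenecks are \emph{pairs} $(a,b)$: split the path into $t$ segments each containing $\le\lceil k/t\rceil$ ones, and show some segment has the property that $V^0_{i^*}(b)\setminus V^0_{i^*}(a)$ is $(r,q)$-neighbour-dense for the index~$i^*$ of the final clique axiom (this is where neighbour-denseness of $V_{i^*}$ is used). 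For the probability that a random path \frugally traverses a fixed such pair, one splits on whether $|V^1(a)|>r/2$ (then the $\ge r/2$ biased coin flips to~$1$ already cost $s^{-\Omega(r)}$), or not, and in the latter case uses the localizer~$S$: either the ones between $a$ and $b$ hit $S$ at least $r/2$ times (cost $\binom{s}{r/2}s^{-(1+\varepsilon)r/2}\le s^{-\varepsilon r/2}$), or they miss~$S$, which by mostly-neighbour-denseness forces $\ge q'=\varepsilon r s^{1+\varepsilon}\log s$ \emph{unforced} zero-flips inside~$W$, costing $(1-s^{-(1+\varepsilon)})^{q'}\le s^{-\varepsilon r/2}$. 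This last dichotomy---small localizer versus many unforced zeros---is the idea your plan is missing, and without it neither the probabilistic verification in~(iii) nor the counting in~(ii) can be made to close.
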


\ifthenelse{\boolean{conferenceversion}}
{In order to make this formal,
we need to define how the problem is encoded:
depending on the formula considered, the exact statement of what
we can prove differs.
In this conference paper we consider the simpler encoding for which we can
prove an $n^{\Omega({k})}$ lower bound for $k\ll \sqrt{n}$.
For a stronger encoding, which in particular captures this simpler one, we
prove the above result in the full-length version of this paper.}
{}

At a high level, the proof is based on a bottleneck counting
argument in the style of~\cite{Haken85Intractability} with a slight twist
that was introduced in \cite{RWY02Read-once}.
In its classical form, such a proof takes four steps. First,
one defines
a distribution of random source-to-sink paths on the DAG
representation of the
proof. Second, a subset of the vertices of the DAG is identified%
---the set of \emph{bottleneck nodes}---such that any random path must
necessarily pass through at least one such node.
Third, for any fixed bottleneck
node, one shows that it is very unlikely that a random path passes
through
this particular node.
Given this, a final union bound argument yields the conclusion
that the DAG must have many bottleneck nodes, and so the resolution
proof must be long.

The twist in our argument is that, instead of single bottleneck nodes,
we need to define \emph{bottleneck pairs} of nodes.
We then argue that
any random path passes through at least one such pair but that few random
paths pass through any fixed pair; the latter part is based on Markov chain-type
reasoning similar to \cite[Theorems 3.2, 3.5]{RWY02Read-once}. Furthermore,
it crucially
relies on the graph satisfying a certain combinatorial property,
which captures the idea that the common neighbourhood of a
small set of vertices is well distributed across the graph.
Identifying this combinatorial property is a key contribution of our work.
In a separate argument (that, surprisingly, turned out to be much more elaborate
than most arguments of this kind) we then establish that
\ErdosRenyi random graphs  of the appropriate edge density satisfy
this property asymptotically almost surely.
Combining these two facts yields our average-case lower bound.

The idea of counting bottlenecks of more than one node comes from~\cite{RWY02Read-once}
and was also used in~\cite{BBI16TimeSpace}.

Another contribution of this paper is a relatively simple observation that not only is regular
resolution powerful enough to distinguish graphs that contain
$k$-cliques from $(k-1)$-colourable
graphs~\cite{BGL13ParameterizedDPLL}, but it can also distinguish them
from graphs that have a homomorphism to any fixed
graph~$H$ with no $\cliquesize$-cliques.
\paragraph{Recent Developments}
A preliminary version of this work appeared in the proceedings of the
STOC'18 conference~\cite{ABdRLNR18Clique}.  The techniques used there
to prove the $n^{\Omega(k)}$ average-case lower bound for regular
resolution were recently extended %
by Pang~\cite{Shuo19LARGE} to work for a proof system %
between regular and general resolution.  In the same paper, Pang also
shows a $2^{\Omega(k^{(1-\epsilon)})}$ %
resolution lower bound %
for $k$-clique formulas on \ErdosRenyi\ random graphs, for $k=n^{c}$,
$c<1/3$ and $\epsilon>0$.

Regarding the proof complexity of $k$-clique formulas for tree-like resolution, 
the lower bounds from~\cite{BGL13ParameterizedDPLL} and~\cite{LPRT17ComplexityRamsey} 
were simplified and unified in \cite{Lauria18Clique}. 
The resolution lower bound in~\cite{LPRT17ComplexityRamsey}
for $k$-clique formulas on \ErdosRenyi random graphs 
under the binary encoding was recently extended 
to an $n^{\bigomega{k}/d(s)}$ lower bound for $\mathrm{Res}(s)$, where $s = o((\log\log n)^{1/3})$
and $d(s)$ is a doubly exponential function~\cite{DGGM20ProofComplexity}.
\paragraph{Paper outline}
The rest of this paper is organized as follows.
Section~\ref{sec:preliminaries} presents some preliminaries.
We show that some nontrivial $\cliquesize$-clique instances are 
easy for regular resolution in
Section~\ref{sec:shortproofs}.
Section \ref{sec:recap-lower-bounds} contains the formal statement of 
the lower bounds we prove for \ErdosRenyi random graphs.
In Section~\ref{sec:lowerbound1} we define a combinatorial property of
graphs and show that clique formulas on such graphs are hard for
regular resolution,    
and the proof that \ErdosRenyi random graphs satisfy this property
asymptotically almost surely is in Section~\ref{sec:lowerbound2}.
\ifthenelse{\boolean{conferenceversion}}{}{%
Section~\ref{sec:algorithms} explains why our
results imply lower bounds on the running time of state-of-the-art algorithms
for $\cliquesize$-clique.
}
We conclude in Section~\ref{sec:open-problems} with a discussion of
open problems.

\section{Preliminaries}
\label{sec:preliminaries}
We write  $\Gp=(\Vp,\Ep)$ to denote a graph with vertices~$\Vp$ and
edges~$\Ep$, where $\Gp$ is always undirected, without loops and
multiple edges.
Given a vertex $v \in \Vp$,
we write
$\neigh{v}$
to denote the set of \emph{neighbours of $v$}.
For a set of vertices $\Rp \subseteq \Vp$ we write
$\neighcV{\Rp}=\bigcap_{v\in \Rp}\neigh{v}$ to denote the set of
\emph{common neighbours of $\Rp$}.
For two sets of vertices $\Rp \subseteq \Vp$ and $\Wp \subseteq \Vp$
we write
$\neighc{\Rp}{\Wp}=\neighcV{\Rp}\cap \Wp$
to denote the set of \emph{common neighbours of $\Rp$ inside $\Wp$}.
For a set $U \subseteq \Vp$ we denote by $\Gp[U]$ the subgraph of $G$ induced
by the set $U$.
For $n \in \Nplus$ we write
$[n] = \set{1,\ldots,n}$.
We say that $\Vp_1 \stackrel .\cup \Vp_2 \stackrel .\cup \cdots \stackrel .\cup \Vp_k = \Vp$ is a \emph{balanced}
$k$-partition of~$\Vp$
if for all $i,j\in [k]$ it holds that $\setsize{\Vp_i}\leq \setsize{\Vp_j} + 1$.
All logarithms are natural (base $\E$) if not specified otherwise.

\paragraph{Probability and \ErdosRenyi random graphs}
We often denote random variables in boldface and write
$\sample{\Xrand}{\distributionD}$
to denote that
$\Xrand$ is
sampled from the distribution $\distributionD$.
A {\em $p$-biased coin}, or a {\em Bernoulli variable}, is the outcome
of a
coin flip that yields $1$ with probability $p$ and
$0$ with probability $1-p$. We use the
special case of Markov's
inequality
saying that if $\Xrand$ is
non-negative,
then $\prob{\Xrand
  \geq 1} \leq \expect{\Xrand}$. We also need the following special case of the
multiplicative Chernoff bound:
if $\Xrand$ is a binomial random variable (i.e., the sum of
i.i.d.\ Bernoulli variables)
with expectation $\mu = \expect{\Xrand}$, then $\prob{\Xrand \leq \mu/2} \leq
\E^{-\mu/8}$.

We consider the \ErdosRenyi distribution $\randG{\np}{p}$ of random graphs
on a fixed set $V$ of $n$ vertices.  A random graph sampled from
$\randG{\np}{p}$ is produced by placing each potential edge $\{u,v\}$
independently
with probability $p$, $0 \leq p \leq 1$ (the edge probability~$p$
may be a function of $n$).
A property of graphs is said to hold
\emph{\aas} on $\randG{\np}{p(n)}$ if it
holds with probability that approaches $1$ as $n$ approaches infinity.

For a positive integer $k$, let $\Xrand_k$ be the random variable that
counts the number of $k$-cliques in
a random graph
from ${\randG{\np}{p}}$.  It follows from Markov's
inequality that
\aas
there are no $k$-cliques in $\randG{\np}{p}$
whenever $p$ and $k$ are such that
$\expect{\Xrand_k}=p^{\binom{k}{2}}\binom{\np}{k}$ approaches $0$ as $n$
approaches infinity.  This is the case, for example, if
$p = \np^{-2\eta/(k-1)}$ for $k \geq 2$ and $\eta > 1$. 
Actually, the clique number, i.e. the size of the largest clique, $\omega(G)$ for a graph $G$ sampled from $\mathscr{G}(n,p)$ is a well studied quantity and very strong concentrations bounds are known for it. For instance, one of the first concentration results is that $\omega(G)=(2-o(1))\log_{\frac{1}{p}}(n)$ with probability $1$ as $n\rightarrow \infty$ (see for instance \cite{BE.76}). 

\paragraph{CNF formulas and resolution}
A \introduceterm{literal} over a Boolean variable $\varx$ is either
the variable $\varx$ itself (a \introduceterm{positive literal}) or
its negation $\lnot \varx$ (a
\introduceterm{negative literal}).
A \introduceterm{clause}
$\clc = \ell_1 \lor \formuladots \lor \ell_{w}$ is a disjunction
of literals; we say that the
\introduceterm{width} of $\clc$ is~$w$. The empty clause will be denoted
by $\emptycl$.
A \introduceterm{CNF formula}
$F = \clc_1 \land \formuladots \land \clc_m$ is a conjunction
of clauses.
We think of clauses as sets of literals and of CNF formulas as sets of
clauses, so that order is irrelevant and there are no repetitions.
For a formula $F$ we denote by $\vars{F}$ the set of variables of~$F$.

A \introduceterm{resolution derivation} from
a CNF formula~$F$
is as an ordered sequence of clauses
$\proofstd = (\cld_1, \dotsc, \cld_{L})$
such that
for each $i \in [L]$
either $\cld_i$
is a clause in $\formf$
or there exist $j < i$ and $k < i$ such that
$\cld_i$ is derived from $\cld_j$ and $\cld_k$
by the
\introduceterm{resolution rule}
\begin{equation}
\label{eq:resolution-rule}
\AxiomC{$\clb \lor x$}
\AxiomC{$\clc \lor \lnot{x}$}
\BinaryInfC{$\clb \lor \clc$}
\DisplayProof
\eqcomma
\end{equation}
$\cld_i = \clb\lor\clc,\ \cld_j = \clb \lor x,\ \cld_k= \clc \lor \lnot{x}$.
We refer to $\clb \lor \clc$ as the \introduceterm{resolvent} of $\clb
\lor x$ and~$\clc \lor \lnot{x}$ over $x$, and to $x$ as the
\emph{resolved variable}. The \introduceterm{\lengthsize}  (or
\introduceterm{size}) of a resolution derivation $\proofstd = (\cld_1, \dotsc, \cld_{L})$ is~$L$ and it is denoted by $\setsize{\proofstd}$.
A \introduceterm{resolution  refutation} of~$F$, or
\introduceterm{resolution proof} for (the unsatisfiability of)~$F$,
is a resolution derivation from~$F$ that
ends in the empty clause~$\emptycl$.

A resolution derivation $\proofstd = (\cld_1, \dotsc, \cld_{L})$ can
also be viewed as a labelled
DAG
with the set of nodes
$\{1,\ldots,L\}$ and
edges
$(j,i)$,
$(k,i)$
for each application of the resolution rule deriving
$\cld_i$ from $\cld_j$ and~$\cld_k$.
Each node $i$ in this DAG is labelled by its associated clause
$\cld_i$, and each non-source node is also labelled by the resolved
variable in its associated derivation step in the refutation.
A resolution refutation is called \introduceterm{regular} if along any
source-to-sink path in its associated DAG every variable is resolved
at most once.

For a partial assignment $\rho$ we say that a clause $C$
 \introduceterm{restricted by~$\rho$}, denoted $\restrict{C}{\rho}$, is
 the trivial $1$-clause if any of the literals in~$C$ is satisfied by~$\rho$
 or otherwise is $C$ with all falsified literals removed. We extend this definition
 to CNFs in the obvious way: $\restrict{(C_1\land\dots\land C_m)}{\rho} = \restrict{C_1}{\rho}\land\dots\land
\restrict{C_m}{\rho}$.
Applying a restriction
preserves (regular) resolution derivations. To see this, observe
that in every application of the
resolution rule, the restricted consequence either becomes identically 1, or it is obtained, as before, by
resolving the two restricted premises, or it is a weakening of one of them, but weakenings can be removed at no cost.
Thus, we have:
\begin{fact}
  \label{fact:restrict}
  Let $\proofstd$ be a (regular) resolution refutation of a CNF
  formula $F$. For any partial assignment $\rho$ to the variables of
  $F$ there is an efficiently constructible (regular) resolution
  refutation $\restrict{\proofstd}{\rho}$ of the CNF formula
  $\restrict{F}{\rho}$, so that the \lengthsize of
  $\restrict{\proofstd}{\rho}$ is at most the \lengthsize of $\proofstd$.
\end{fact}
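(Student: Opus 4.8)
The plan is to build $\restrict{\proofstd}{\rho}$ by a single left-to-right pass over $\proofstd = (\cld_1,\dots,\cld_L)$, assigning to each index~$i$ a clause $\cld_i'$ that is the trivially true clause exactly when $\restrict{\cld_i}{\rho}$ is (in which case node~$i$ is discarded), and otherwise satisfies $\cld_i' \subseteq \restrict{\cld_i}{\rho}$ with clauses viewed as sets of literals. Moreover $\cld_i'$ will be obtained from $\restrict{\formf}{\rho}$ and the earlier non-discarded clauses either as an axiom of $\restrict{\formf}{\rho}$, or by a resolution step over the same variable that is resolved at step~$i$ in $\proofstd$, or as a copy of $\cld_j'$ or $\cld_k'$ for the premises $\cld_j, \cld_k$ of step~$i$ (in which case we discard node~$i$ and redirect later references to that copy). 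This suffices: since $\cld_L = \emptycl$ has no literals and is not satisfied by~$\rho$ we get $\restrict{\cld_L}{\rho} = \emptycl$ and hence $\cld_L' = \emptycl$, so $\restrict{\proofstd}{\rho}$ is a genuine refutation of $\restrict{\formf}{\rho}$; each index contributes at most one clause and only axiom and genuine resolution steps contribute, so the \lengthsize is at most~$L$; and the construction is a linear scan, hence efficiently computable.

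Next I would carry out the step-by-step case analysis. If $\cld_i$ is an axiom of~$\formf$, set $\cld_i' := \restrict{\cld_i}{\rho}$, which is either trivial or a clause of $\restrict{\formf}{\rho}$. Now suppose $\cld_i = \clb \lor \clc$ is the resolvent of $\cld_j = \clb \lor x$ and $\cld_k = \clc \lor \lnot x$ over~$x$. If $\restrict{\cld_i}{\rho}$ is trivial, discard node~$i$; otherwise no literal of $\clb$ or $\clc$ is satisfied by~$\rho$, so $\restrict{\clb}{\rho}$ and $\restrict{\clc}{\rho}$ are honest (non-trivial) clauses with $\restrict{\cld_i}{\rho} = \restrict{\clb}{\rho}\lor\restrict{\clc}{\rho}$, and we distinguish three cases. (i)~If $\rho$ satisfies the literal~$x$, then $\restrict{\cld_j}{\rho}$ is trivial and $\restrict{\cld_k}{\rho} = \restrict{\clc}{\rho} \subseteq \restrict{\cld_i}{\rho}$; by the invariant $\cld_k'$ is non-trivial and contained in $\restrict{\cld_i}{\rho}$, so let $\cld_i'$ be a copy of~$\cld_k'$. (ii)~Symmetrically, if $\rho$ satisfies~$\lnot x$, let $\cld_i'$ be a copy of~$\cld_j'$. (iii)~If $\rho$ leaves~$x$ unassigned, then $\restrict{\cld_j}{\rho}$ and $\restrict{\cld_k}{\rho}$ are non-trivial, so $\cld_j', \cld_k'$ are non-trivial with $\cld_j' \subseteq \restrict{\clb}{\rho}\lor x$ and $\cld_k' \subseteq \restrict{\clc}{\rho}\lor\lnot x$; if $x \in \cld_j'$ and $\lnot x \in \cld_k'$, resolve them over~$x$ to obtain $\cld_i' \subseteq \restrict{\clb}{\rho}\lor\restrict{\clc}{\rho} = \restrict{\cld_i}{\rho}$, and otherwise one of $\cld_j', \cld_k'$ already lies inside $\restrict{\cld_i}{\rho}$ and we let $\cld_i'$ be a copy of it. In every case the invariant is restored.

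Finally I would verify that regularity is preserved. For this, define on the non-discarded indices the representative map $\mathrm{rep}(i) = i$ when step~$i$ is kept as an axiom or as a genuine resolution step, and $\mathrm{rep}(i) = \mathrm{rep}(j)$ (resp.\ $\mathrm{rep}(k)$) when node~$i$ is a copy of~$\cld_j'$ (resp.~$\cld_k'$). The DAG of $\restrict{\proofstd}{\rho}$ then has the genuine-step nodes as vertices, a genuine step~$i$ resolved over~$x$ having premises $\mathrm{rep}(j), \mathrm{rep}(k)$ and resolved-variable label~$x$, which is exactly the variable resolved at step~$i$ of $\proofstd$. Following a source-to-sink path in this new DAG and chasing the $\mathrm{rep}$-links back into $\proofstd$, one recovers an honest source-to-sink path of $\proofstd$ that visits, in the same order, the nodes of the new path (interleaved with the copied nodes); since that path resolves each variable at most once, so does the new one. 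Getting this path-lifting exactly right — in particular checking that a copied node never causes a variable to be resolved twice along a path — is the one point that genuinely needs care; everything else follows directly from the definition of restriction.
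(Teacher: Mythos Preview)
Your construction is correct and is precisely the detailed unfolding of the paper's one-sentence justification preceding the Fact: ``the restricted consequence either becomes identically $1$, or it is obtained, as before, by resolving the two restricted premises, or it is a weakening of one of them, but weakenings can be removed at no cost.'' Your cases (i)--(iii) are exactly this trichotomy, and your path-lifting argument for regularity (every source-to-sink path in $\restrict{\proofstd}{\rho}$ embeds into one in~$\proofstd$ whose resolved variables contain those of the new path) is the standard way to make the parenthetical ``(regular)'' rigorous, which the paper leaves implicit.
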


\paragraph{Branching programs}
A branching program on variables $x_1, \ldots, x_n$ is a
DAG
that has
one source node
and where
every non-sink node is
labelled by one of the variables $x_1, \ldots, x_n$ and has exactly two
outgoing edges labelled $0$ and $1$. %
The size of a branching program is the total number of nodes in the
graph.
In a \emph{read-once branching program} it holds in addition that
along every path every variable appears as a node label at most once.

For each node $a$ in a branching program, let $X(a)$ denote the
variable that labels $a$, and
let $a^0$ and $a^1$ be the nodes that are reached from $a$ through the
edges labelled~$0$ and $1$, respectively.
A truth-value assignment $\sigma : \{x_1,\ldots,x_n\} \rightarrow
\{0,1\}$ determines a path in a branching program in the following
way. The path starts at the source node. At an internal node $a$, the
path is extended along the edge labelled~$\sigma(X(a))$ so that the next
node in the path is $a^{\sigma(X(a))}$.  The path ends when it reaches
a sink. We write $\mathrm{path}(\sigma)$ for the path determined by
$\sigma$.
When extending the path from a node $a$ to the
node $a^{\sigma(X(a))}$, we
say that the \emph{answer to the query} $X(a)$ at~$a$ is $\sigma(X(a))$ and that the path
\emph{sets}
the variable
$X(a)$ to
the value
$\sigma(X(a))$.
For a node $a$ of $\mathrm{path}(\sigma)$, let $\beta(\sigma,a)$ be the restriction of $\sigma$ to the variables that are 
queried in $\mathrm{path}(\sigma)$ in the segment of the path that goes from the 
source to $a$.
For
each node $a$ of the branching program,
let $\beta(a)$ be the maximal partial assignment
that is contained in every $\beta(\sigma,a)$ for all $\sigma$ such that $\mathrm{path}(\sigma)$ 
passes through $a$.
Equivalently, this is the set of all those assignments $x_i \mapsto \gamma$ for
which the query $x_i$ is made, and answered by~$\gamma$, along every
consistent path from the source to $a$. If the program is read-once,
the consistency condition becomes redundant.

The \emph{falsified clause search problem} for an unsatisfiable CNF
formula $F$ is the task of finding a clause $\clc \in F$ that is
falsified by a given truth value assignment~$\sigma$.
A branching program~$P$ on the variables $\vars F$ \emph{solves} the falsified clause search
problem for $F$ if each sink is labelled by a clause of~$F$ such that for
every assignment~$\sigma$, the clause that labels the sink
reached by $\mathrm{path}(\sigma)$ is falsified by~$\sigma$.
The minimal size of any  regular
resolution refutation of an unsatisfiable CNF formula~$F$ is exactly
the same as the minimal size of any
read-once branching program solving the falsified clause
search problem for $F$.
This can be seen by taking the refutation DAG and reversing the edges
to get a branching program or vice versa.
For a formal proof see, e.g., \cite[Theorem  4.3]{Krajicek96BoundedArithmetic}.

\paragraph{The $k$-clique formula}
In order to analyse the complexity of resolution proofs that establish that a
given graph does not contain a $k$-clique we must formulate the
problem as a propositional formula in conjunctive
normal form (CNF).
We consider two
distinct encodings for the clique problem originally defined
in \cite{BIS07IndependentSets}.

\smallskip
The first propositional encoding we present, $\clique{G}{k}$, is based on mapping of vertices to clique members.
This formula is defined
over variables $\icliquemembervar{v}{i}\ (v\in V, i\in [k])$ and consists of the  following set of clauses:
\begin{subequations}
  \begin{align}	
    &\lnot \icliquemembervar{u}{i}\lor \lnot \icliquemembervar{v}{j}
    &&
       i,j \in [\cliquesize], i \neq j,
       u,v \in \Vp, \{u,v\} \notin \Ep \eqcomma
	\label{eq:clique_edge}
    \\[1em]
    &\bigvee_{v\in \Vp}\icliquemembervar{v}{i}
    &&
    i\in [\cliquesize]\eqcomma
    \label{eq:clique_defined}
    \\
    &\lnot \icliquemembervar{u}{i}\lor \lnot \icliquemembervar{v}{i}
    &&
       i \in [\cliquesize],
       u,v \in \Vp, u \neq v \eqcomma
	\label{eq:functionality-axiom}
	\end{align}
\end{subequations}
We refer to
\eqref{eq:clique_edge} as \emph{\edgeaxioms{}},
\eqref{eq:clique_defined} as \emph{\cliqueaxioms{}} and
\eqref{eq:functionality-axiom}
as \emph{\funaxioms{}}.
Note that $\clique{G}{k}$ is satisfiable if and only if $G$ contains a
$k$-clique, and that this is true even if
clauses~\eqref{eq:functionality-axiom}
are omitted---we
write
$\wclique{G}{k}$ to denote this formula with
only clauses~\eqref{eq:clique_edge} and~\eqref{eq:clique_defined}.

\newboolean{extraencodings}
\setboolean{extraencodings}{false}

\ifthenelse{\boolean{extraencodings}}
{
A strengthening of the $\clique{G}{k}$ formula considered in \cite{BIS07IndependentSets} is the \introduceterm{$\cliquemap{G}{k}$} formula that is the conjunction of $\clique{G}{k}$ and the \emph{\orderaxioms{}}
\begin{align}
 &\lnot \icliquemembervar{u}{i+1}\lor \lnot \icliquemembervar{v}{i}
    &&
    u,v \in \Vp, u < v \eqperiod
    \label{eq:order_axiom}	
\end{align}

Notice that, trivially, the size of a minimum (regular) resolution refutation of $\wclique{\Gp}{k}$ is
	bounded from below by the size of a minimum (regular) resolution refutation of
	$\clique{\Gp}{k}$ which in turn is bounded from below by the size of a minimum (regular) resolution refutation of
	$\cliquemap{\Gp}{k}$.

The second encoding is based on counting the number of
vertices in the clique.
The formula, which we denote \introduceterm{$\cliquecount{\Gp}{k}$}, is defined over variables
$\cliquemembervar{v}$ and $\countingvar{v}{i}$ and is encoded
in the following way (where for simplicity we use equivalence $\leftrightarrow$
as a short hand to denote the set of clauses encoding this
equivalence---see~\cite{BIS07IndependentSets} for the full details on the
CNF representation):
\begin{subequations}
	\begin{align}	
		&\lnot \cliquemembervar{u}\lor \lnot \cliquemembervar{v}
		&&
		u,v \in \Vp, \{u,v\} \notin \Ep \eqcomma
		\label{eq:clique_edge1}
		\\
		& \countingvar{v}{0} \leftrightarrow (\countingvar{v-1}{0} \land \lnot \cliquemembervar{v})
		&&
		v \in \Vp \eqcomma
		\label{eq:counting_a}
		\\
		& \countingvar{v}{i} \leftrightarrow ((\countingvar{v-1}{i} \land \lnot \cliquemembervar{v}) \lor (\countingvar{v-1}{i-1} \land \cliquemembervar{v}))
		\ifthenelse{\boolean{conferenceversion}}{\hspace{-1cm} \nonumber  \\
		&}{}&&
		i \in [\cliquesize],
		v \in \Vp, i < v \eqcomma
		\label{eq:counting_b}
		\\
		& \countingvar{i}{i} \leftrightarrow  (\countingvar{i-1}{i-1} \land \cliquemembervar{i})
		&&
		i \in [\cliquesize]
		\eqcomma
		\label{eq:counting_c}
		\\
		& \countingvar{\np}{k}  \eqperiod
		&&
		\label{eq:k_members}
	\end{align}
\end{subequations}

The third and final
}
{\smallskip
	The second}
version of clique formulas that we consider is the block encoding $\cliqueblock{G}{k}$. This formula differs from the previous ones in that it requires a $k$-clique that has a certain ``block-respecting'' structure.
Let $\Vp_1 \dot\cup \Vp_2  \dot\cup \cdots \dot\cup \Vp_k = \Vp$ be a \emph{balanced} $k$-partition of~$\Vp$, that is a partition of $V$ into $k$ disjoint sets each of them of size at most one integer away from $\frac{\lvert V\rvert}{k}$.
The formula $\cliqueblock{G}{k}$, defined
over variables $\cliquemembervar{v}$, encodes the fact that the graph
contains a \introduceterm{transversal $k$-clique}, that is, a $k$-clique in which each clique member belongs to a different block.
Formally, for any positive $k$ and any graph $G$,
the formula
$\cliqueblock{\Gp}{k}$ consists of the following set of clauses:
\begin{subequations}
	\begin{align}	
		&\lnot \cliquemembervar{u}\lor \lnot \cliquemembervar{v}
		&&
		u,v \in \Vp, u\neq v, \{u,v\} \notin \Ep \eqcomma
		\label{eq:clique_edge3}
		\\[1em]
		&\bigvee_{v\in \Vp_i} \cliquemembervar{v}
		&&
		i\in [\cliquesize]\eqcomma
		\label{eq:clique_defined3}
		\\
		&\lnot \cliquemembervar{u} \lor \lnot \cliquemembervar{ v}
		&&
		i \in [\cliquesize],
		u,v \in \Vp_i, u \neq v  \eqperiod
		\label{eq:functionality_axiom3}
	\end{align}
\end{subequations}
\ifthenelse{\boolean{conferenceversion}}{}{
We refer to
\eqref{eq:clique_edge3} as \emph{\edgeaxioms{}},
\eqref{eq:clique_defined3} as \emph{\cliqueaxioms{}},
and
\eqref{eq:functionality_axiom3}
as \emph{\funaxioms{}}.}

Note that a graph can contain a $k$-clique but contain no
transversal $k$-clique for a given partition. Intuitively it is clear that proving that a graph does
not contain a transversal $k$-clique should be easier than
proving it does not contain any $k$-clique, since any proof of the
latter fact must in particular establish the former. We make this
intuition formal below.

\begin{lemma}[\cite{BIS07IndependentSets}]
\label{lem:block-map}
	For any graph $G$ and any $k\in \Nplus$,
	the size of a minimum regular resolution refutation of $\clique{\Gp}{k}$ is
	bounded from below by the size of a minimum regular resolution refutation of
	$\cliqueblock{\Gp}{k}$.
\end{lemma}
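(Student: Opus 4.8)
The plan is to turn any regular resolution refutation of $\clique{\Gp}{\cliquesize}$ into a regular resolution refutation of $\cliqueblock{\Gp}{\cliquesize}$ of no greater \lengthsize, by combining a restriction with a renaming of variables and invoking Fact~\ref{fact:restrict}. Fix the balanced partition $\Vp_1\dot\cup\cdots\dot\cup\Vp_{\cliquesize}=\Vp$ underlying $\cliqueblock{\Gp}{\cliquesize}$, and for each $v\in\Vp$ let $i(v)$ be the unique index with $v\in\Vp_{i(v)}$ --- this is where the partition hypothesis is used. Define the partial assignment $\rho$ to the variables of $\clique{\Gp}{\cliquesize}$ by setting $\icliquemembervar{v}{j}=0$ for every $v\in\Vp$ and every $j\in[\cliquesize]\setminus\{i(v)\}$, leaving exactly the variables $\icliquemembervar{v}{i(v)}$, $v\in\Vp$, unassigned. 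These surviving variables are in bijection with the variables $\cliquemembervar{v}$ of $\cliqueblock{\Gp}{\cliquesize}$ via $\icliquemembervar{v}{i(v)}\mapsto\cliquemembervar{v}$, a renaming I will apply silently below.

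First I would verify, family by family, that each axiom of $\clique{\Gp}{\cliquesize}$ restricted by $\rho$ is either satisfied outright or, under the renaming, an axiom of $\cliqueblock{\Gp}{\cliquesize}$. An edge axiom $\lnot\icliquemembervar{u}{i}\lor\lnot\icliquemembervar{v}{j}$ with $i\neq j$ and $\{u,v\}\notin\Ep$ is satisfied by $\rho$ unless $u\in\Vp_i$ and $v\in\Vp_j$, in which case $u\neq v$ (disjoint blocks) and it becomes the edge axiom $\lnot\cliquemembervar{u}\lor\lnot\cliquemembervar{v}$ of~\eqref{eq:clique_edge3}. A clique axiom $\bigvee_{v\in\Vp}\icliquemembervar{v}{i}$ becomes $\bigvee_{v\in\Vp_i}\cliquemembervar{v}$, which is~\eqref{eq:clique_defined3}. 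A functionality axiom $\lnot\icliquemembervar{u}{i}\lor\lnot\icliquemembervar{v}{i}$ with $u\neq v$ is satisfied unless $u,v\in\Vp_i$, in which case it becomes $\lnot\cliquemembervar{u}\lor\lnot\cliquemembervar{v}$, an instance of~\eqref{eq:functionality_axiom3}. Hence, discarding the clauses that $\rho$ satisfies outright, $\restrict{\clique{\Gp}{\cliquesize}}{\rho}$ is (after renaming) a subformula of $\cliqueblock{\Gp}{\cliquesize}$.

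To conclude, let $\proofstd$ be a minimum-size regular resolution refutation of $\clique{\Gp}{\cliquesize}$. By Fact~\ref{fact:restrict} there is a regular resolution refutation $\restrict{\proofstd}{\rho}$ of $\restrict{\clique{\Gp}{\cliquesize}}{\rho}$ with $\setsize{\restrict{\proofstd}{\rho}}\le\setsize{\proofstd}$. Since $\rho$ assigns every $\icliquemembervar{v}{j}$ with $j\neq i(v)$, this refutation only mentions the surviving variables, so applying the renaming $\icliquemembervar{v}{i(v)}\mapsto\cliquemembervar{v}$ to all its clauses yields a sequence of clauses over $\vars{\cliqueblock{\Gp}{\cliquesize}}$ ending in $\emptycl$; its source clauses are clauses of $\restrict{\clique{\Gp}{\cliquesize}}{\rho}$, and any that $\rho$ satisfied outright cannot be a premise of the resolution rule and so can be deleted, leaving source clauses that are axioms of $\cliqueblock{\Gp}{\cliquesize}$ by the previous paragraph, while the remaining steps stay valid resolution inferences. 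As the renaming is a bijection, it preserves along every source-to-sink path the property that no variable is resolved twice, so we obtain a regular resolution refutation of $\cliqueblock{\Gp}{\cliquesize}$ of \lengthsize at most $\setsize{\proofstd}$, which is the asserted inequality. I expect no genuine obstacle here: the only points needing care are that the renaming is well defined and injective (exactly where the partition hypothesis enters) and that regularity survives, which is immediate from Fact~\ref{fact:restrict} together with the injectivity of the renaming.
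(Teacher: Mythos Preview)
Your proposal is correct and is precisely the restriction argument the paper alludes to (the paper gives no details beyond citing \cite{BIS07IndependentSets} and remarking that the same restriction argument works for regular resolution). Your choice of $\rho$ and the clause-by-clause verification are exactly what is needed, and Fact~\ref{fact:restrict} handles the preservation of regularity.
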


This lemma was proven in~\cite{BIS07IndependentSets} for tree-like and for general resolution via a restriction argument, and it is straightforward to see that the same proof holds for regular resolution as well.
\section{Graphs That Are Easy for Regular Resolution}
\label{sec:shortproofs}

Before proving our main $n^{\bigomega{k}}$ lower bound,
in this section
we exhibit classes of graphs
whose clique formulas have
regular resolution refutations
of fixed-parameter tractable length, i.e., length $f(k)\cdot n^{O(1)}$
for some function $f$.  This illustrates the strength of regular
resolution for the $k$-clique problem.  We note that the upper bounds
claimed in this section hold not only for $\clique{\Gp}{k}$ but even for
the subformula $\wclique{\Gp}{\cliquesize}$ that omits the
\funaxioms~\eqref{eq:functionality-axiom}. %

The first example is the class of $(\cliquesize-1)$-colourable graphs.
Such graphs are
hard for tree-like resolution~\cite{BGL13ParameterizedDPLL},
and the known algorithms that distinguish
them from graphs
that contain $\cliquesize$-cliques are highly non-trivial
\cite{Lovasz79ShannonCapacity,Knuth94SandwichTheorem}.
The second example is
the class of graphs that have
a homomorphism into a fixed $\cliquesize$-clique free
graph. 

Recall that a homomorphism from a graph $G = (V,E)$ into a
graph $G' = (V',E')$ is a mapping $h : V \rightarrow V'$ that maps
edges $\{u,v\} \in E$ into edges $\{h(u),h(v)\} \in E'$.
A graph is $(\cliquesize-1)$-colourable if and only if it has
a homomorphism into the $(\cliquesize-1)$-clique, which is of course
$k$-clique free. Therefore our second example is a generalization of
the first one (but the function $f(k)$ becomes larger).

Both upper bounds follows from a generic procedure, based on
Algorithm~\ref{alg:search}, that builds read-once branching programs
for the falsified clause search problem for
$\wclique{G}{\cliquesize}$.

  Given a $\cliquesize$-clique free graph $G$ define
  \begin{equation}
    I(G) = \Set{G\big[\neighcV{R}\big] \;:\;\text{$R$ is a clique in $G$} }\eqperiod
  \end{equation}
  
  \begin{proposition}
    \label{stm:algsearch}
  There is an efficiently constructible read-once branching program
  for the falsified clause search problem on formula
  $\wclique{G}{\cliquesize}$ of size at most
  $\setsize{I(G)}\cdot \cliquesize^{2} \cdot\setsize{V(G)}^{2}$.
\end{proposition}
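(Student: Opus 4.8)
The plan is to prove the bound by exhibiting, directly from $\Gp$, a read-once branching program~$\bprogpi$ that solves the falsified clause search problem for $\wclique{\Gp}{\cliquesize}$ and then counting its nodes, following the strategy of Algorithm~\ref{alg:search}; the program implements the natural greedy ``grow a partial clique'' heuristic. It proceeds in $\cliquesize$ rounds, and before round $i$ it has committed to vertices $v_1,\dots,v_{i-1}$ forming a clique in $\Gp$, with each query $\icliquemembervar{v_1}{1},\dots,\icliquemembervar{v_{i-1}}{i-1}$ already answered~$1$; its current node records, among other things, the induced subgraph $\graphaux=\Gp[\neighcV{R}]\in I(\Gp)$ on the common neighbourhood of $R=\set{v_1,\dots,v_{i-1}}$. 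In round $i$ the program scans the vertices of~$\Vp$ in a fixed order, querying $\icliquemembervar{v}{i}$: on the first answer~$1$, if the vertex $v$ lies in $\neighcV{R}$ it is adopted as $v_i$ and the program moves to round $i+1$ with common neighbourhood $\neighcV{R}\cap\neigh{v}$ (again an element of $I(\Gp)$); if instead $v\notin\neighcV{R}$, then $v$ is non-adjacent to some earlier member $v_j$ and the program stops at a sink labelled by the falsified \edgeaxiom{} $\cliqueneg{v_j}{j}\lor\cliqueneg{v}{i}$; and if the entire round returns only~$0$ answers, it stops at a sink labelled by the falsified \cliqueaxiom{} $\bigvee_{v\in\Vp}\icliquemembervar{v}{i}$.

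For correctness I would argue that if $\bprogpi$ ever finished all $\cliquesize$ rounds, the vertices $v_1,\dots,v_{\cliquesize}$ would form a $\cliquesize$-clique in~$\Gp$, contradicting $\cliquesize$-clique-freeness; hence every computation reaches one of the sinks described above, and a direct check shows the labelling clause is indeed falsified by the answers collected so far (for an \edgeaxiom{}, both $\icliquemembervar{v_j}{j}$ and $\icliquemembervar{v}{i}$ were answered~$1$ and $\set{v_j,v}\notin\Ep$ with $j\neq i$; for a \cliqueaxiom{}, every $\icliquemembervar{v}{i}$ was answered~$0$). The read-once property is immediate: variables with distinct second index are distinct, and within a single round each variable $\icliquemembervar{\cdot}{i}$ is queried at most once because the scan visits each vertex once.

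For the size bound I would describe a node of~$\bprogpi$ by the tuple consisting of: the current round index $i\in[\cliquesize]$; the element $\Gp[\neighcV{R}]\in I(\Gp)$ recording the common neighbourhood of the committed partial clique; the vertex at which the current scan stands, one of $\setsize{V(\Gp)}$ possibilities; and the bounded side information needed to name a falsified \edgeaxiom{} in the second case above, namely a previously committed clique member together with the round at which it was committed, i.e.\ an element of $V(\Gp)\times[\cliquesize]$. The product of these ranges is $\setsize{I(\Gp)}\cdot\cliquesize^{2}\cdot\setsize{V(\Gp)}^{2}$, which bounds the number of nodes. Finally, $I(\Gp)$ can be generated from~$\Gp$ by closing $\set{\Vp}$ under the operation $C\mapsto C\cap\neigh{v}$ for $v\in C$, so the whole program --- its nodes, sink labels, and transition function --- is computable from~$\Gp$ in time polynomial in $\setsize{I(\Gp)}$ and $\setsize{V(\Gp)}$, giving the claimed efficient constructibility.

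I expect the delicate point to be exactly that last item of side information: when a vertex~$v$ answered~$1$ in round~$i$ turns out to lie outside $\neighcV{R}$, the program must commit to a specific earlier member $v_j$ and its round~$j$ in order to name $\cliqueneg{v_j}{j}\lor\cliqueneg{v}{i}$, but the node has only stored the common neighbourhood $\Gp[\neighcV{R}]$, not the full annotated clique $(v_1,\dots,v_{i-1})$ --- and the same subgraph $\Gp[\neighcV{R}]$ can be reached with very different committed cliques. Arranging the scan order within each round, and the way this auxiliary pair is updated across earlier rounds, so that the stored member is guaranteed to be a non-neighbour of \emph{every} vertex that can trigger the edge case at that node, all without ever re-querying a variable, is what forces the extra $V(\Gp)\times[\cliquesize]$ of bookkeeping and hence the factor $\cliquesize\cdot\setsize{V(\Gp)}$ over the naive count $\setsize{I(\Gp)}\cdot\cliquesize\cdot\setsize{V(\Gp)}$; this is the part of the construction I would need to set up most carefully.
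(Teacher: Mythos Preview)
Your construction departs from Algorithm~\ref{alg:search} precisely at the point you flag as delicate, and the gap there is real rather than merely technical. In the paper's algorithm, after the scan finds a vertex $v$ with $\icliquemembervar{v}{\cliquesize}=1$, there is a \emph{check phase} (lines~\ref{line:checkA}--\ref{line:checkB}): it queries $\icliquemembervar{w}{i}$ for every $w\in V(G)\setminus\neigh{v}$ and every $i<\cliquesize$. This either exhibits a falsified edge axiom $\neg\icliquemembervar{v}{\cliquesize}\vee\neg\icliquemembervar{w}{i}$ directly, or certifies that all these variables are~$0$, so the recursion may safely restrict to $G[\neigh{v}]$ and still output a clique axiom of the full graph when it bottoms out. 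The check-phase variables have $w\notin\neigh{v}$ while the recursion touches only $w\in\neigh{v}$, so read-once is preserved; the merging by common-neighbourhood subgraph then occurs at the entry to the recursion, \emph{after} every vertex outside it has already been handled.

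Your version omits the check phase and instead scans all of $V$ in every round, hoping that a single stored pair $(u,j)$ will name a falsified edge axiom whenever the scan meets some $v\notin V(\graphaux)$ with answer~$1$. But the transition on answer~$0$ from a node $(i,\graphaux,v,(u,j))$ to the next scan position $v'$ must produce a new pair $(u',j')$ that is a committed member along \emph{every} path reaching that successor and is non-adjacent to $v'$; the only committed member the node has retained is $u$. Since $V(\graphaux)=\neighcV{R}\subseteq\neigh{u}$, the set $\neigh{u}\setminus V(\graphaux)$ is generally non-empty once $|R|\geq 2$, and for $v'$ in that set you need a \emph{different} member of the forgotten clique $R$ --- information the node no longer has. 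Carrying the whole annotated clique instead of one pair would fix the transition but destroys the $|I(G)|$ factor, since distinct cliques with the same common neighbourhood would no longer merge. The check phase in Algorithm~\ref{alg:search} is exactly the device that avoids this: it \emph{discovers} the offending non-neighbour by querying, rather than by remembering.
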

\begin{proof}
  We build the branching program recursively, following the strategy
  laid out by Algorithm~\ref{alg:search}. For the base case
  $\cliquesize=1$, $G$ must be the graph with no vertices.
  The branching program is a single sink node that outputs the clique
  axiom of index $1$, \ie the empty clause.

  For $\cliquesize>1$, fix $n=\setsize{V(G)}$ and an ordering $v_{1}, \ldots, v_{n}$
  of the vertices in $V(G)$.
  We first build a decision tree $T$ by querying the variables
  $\icliquemembervar{v_{1}}{\cliquesize},
  \icliquemembervar{v_{2}}{\cliquesize}, \ldots$ in order, until
  we get an answer $1$, or until all variables
  with second index $\cliquesize$ have been queried.
  If $\icliquemembervar{v_{j}}{\cliquesize}=0$ for all $j \in [n]$ then
  the $\cliquesize$th clique axiom~\eqref{eq:clique_defined} is
  falsified by the assignment (see line~\ref{line:failure}).
  Otherwise, let $v$ be the first vertex in the order where
  $\icliquemembervar{v}{\cliquesize}=1$. The decision tree now queries
  $\icliquemembervar{w}{i}$ for all $w \in V(G)\setminus\neigh{v}$ and all
  $i<\cliquesize$ to check whether an edge axiom involving~$v$ is
  falsified (lines~\ref{line:checkA}--\ref{line:checkB}).
  If any of these variables is set to $1$ the branching stops and the leaf
  node is labelled with the corresponding edge axiom
  $\neg\icliquemembervar{v}{\cliquesize} \vee
  \neg\icliquemembervar{w}{i}$.

\begin{algorithm}[t]
\ifthenelse{\boolean{conferenceversion}}{
{\bf Input} $\cliquesize \in \Nplus$, a $\cliquesize$-clique free graph $G$, an assignment $\alpha\!:\!{\{\icliquemembervar{v}{i}\text{ for }v \in V(G), i \in [\cliquesize]\}}\rightarrow{\{0,1\}}$

{\bf Output} A clause of $\wclique{G}{\cliquesize}$ falsified by $\alpha$

\begin{algorithmic}[1]
\Procedure{Search}{$G,\cliquesize,\alpha$}
	\For{$v\in V(G)$}
		 \If{$\alpha(\icliquemembervar{v}{\cliquesize})=1$}
		 	\For{$w \not\in\neigh{v}$ and $i<\cliquesize$} \label{line:checkA}
		 		\If{$\alpha(\icliquemembervar{w}{i})=1$}
		 			\State \Return edge axiom $\neg\icliquemembervar{v}{\cliquesize}
                \vee \neg\icliquemembervar{w}{i}$ ~\eqref{eq:clique_edge}. \label{line:checkB}
		 		\EndIf
		 	\EndFor
		 	\State $G'   \leftarrow G[\neigh{v}]$ \label{line:recurseA}
		 	\State $\alpha' \leftarrow \alpha$ restricted to variables $\icliquemembervar{w}{j}$ for $w\in V(G')$ and $1\leq j \leq \cliquesize-1$
		 	\State \Return \textsc{Search}$(G',\cliquesize-1,\alpha')$\label{line:recurseB}
		 \EndIf
	\EndFor
	\State \Return the $\cliquesize$th clique axiom~\eqref{eq:clique_defined}.\label{line:failure}
\EndProcedure
\end{algorithmic}
}
{
  \begin{algorithm2e}[H]
    \SetKwInOut{Input}{Input}
    \SetKwInOut{Output}{Output}
    \SetKwInOut{Dummy}{Dummy} %
    \Input{~$\cliquesize \in \Nplus$, a $\cliquesize$-clique free graph $G$, an assignment
      $\alpha\!:\!{\{\icliquemembervar{v}{i}\text{ for }v \in V(G),
        i \in [\cliquesize]\}}\rightarrow{\{0,1\}}$}
    \Output{~A clause of $\wclique{G}{\cliquesize}$ falsified by $\alpha$}
    $\mathtt{Search}(G,\cliquesize,\alpha)$:
    \Begin{
      \For{$v \in V(G)$}{
        \If{$\alpha(\icliquemembervar{v}{\cliquesize})=1$}
        {\For{$w \in V(G)\setminus\neigh{v} $ and $i<\cliquesize$}{\label{line:checkA}
            \lIf{$\alpha(\icliquemembervar{w}{i})=1$}
            {\Return{edge axiom $\neg\icliquemembervar{v}{\cliquesize}
                \vee \neg\icliquemembervar{w}{i}$ ~\eqref{eq:clique_edge} }}}\label{line:checkB}
          $G'   \leftarrow G[\neigh{v}]$\;\label{line:recurseA}
          $\alpha' \leftarrow \alpha$ restricted to variables $\icliquemembervar{w}{j}$ for $w\in V(G')$ and $1\leq j \leq \cliquesize-1$\;
          \Return{$\mathtt{Search}(G',\cliquesize-1,\alpha')$}\label{line:recurseB}
        }
      }
      \Return{the $\cliquesize$th clique axiom~\eqref{eq:clique_defined} }\label{line:failure}
      }
  \end{algorithm2e}
}

  \caption{Read-once branching program for the falsified clause search problem on $\wclique{G}{\cliquesize}$.}
  \label{alg:search}
\end{algorithm}

  The decision tree $T$ built so far has at most $\cliquesize n^{2}$
  nodes, and we can identify $n$ ``open'' leaf nodes
  $a_{v_{1}}, a_{v_{2}}, \ldots, a_{v_{n}}$, where $a_{v_{i}}$ is the leaf node
  reached by the path that sets $\icliquemembervar{v_{i}}{\cliquesize}=1$ and that does
  not yet determine the answer to the search problem.
  Let us focus on a specific node $a_{v}$ for some $v \in V(G)$. %
  The partial assignment  $\mathrm{path}(a_{v})$ sets $v$ to be the $\cliquesize$th
  member of the clique and every vertex in $V(G)\setminus\neigh{v}$ to
  not be in the clique.
  Let $G_{v}$ be the subgraph induced on~$G$ by $\neigh{v}$,
  let $S_{v}$ be the set of variables
  $\icliquemembervar{w}{i}$ for $w \in\neigh{v}$ and $i<\cliquesize$,
  and let
  $\rho_{v}$ be the partial assignment setting
  $\icliquemembervar{w}{i}=0$ for $w \in V(G)\setminus\neigh{v}$ and
  $i<\cliquesize$. Clearly $\rho_{v} \subseteq \mathrm{path}(a_{v})$.

  By the inductive hypothesis there exists a branching program~$B_{v}$ that
  solves the search problem on $\wclique{G_{v}}{\cliquesize-1}$
  querying only variables in $S_{v}$.
  This corresponds to the recursive call for the subgraph $G_{v}$ and
  $\cliquesize-1$ (lines~\ref{line:recurseA}--\ref{line:recurseB}).
  If we attach each $B_{v}$ to $a_{v}$ we get a complete branching
  program for $\wclique{G}{\cliquesize}$. This is read-once because
  $B_{v}$ only queries variables in $S_{v}$ and these variables are not in
  $\mathrm{path}(a_{v})$.

  To prove that the composed program is correct we consider an
  assignment $\sigma$ to the variables in $S_{v}$ and show that the clause
  output by $B_{v}$ on $\sigma$ is also a valid output for the search
  problem on $\wclique{G}{\cliquesize}$, \ie it is falsified by the
  assignment $\mathrm{path}(a_{v}) \cup \sigma$.
  Actually we show the stronger claim that it is falsified by
  $\rho_{v} \cup \sigma$, which is a subset of $\mathrm{path}(a_{v}) \cup \sigma$.
  To this end, note that if the output of $B_{v}$ on $\sigma$ is an edge axiom of
  $\wclique{G_{v}}{\cliquesize-1}$, this must be some
  $\neg \icliquemembervar{u}{i} \vee \neg\icliquemembervar{w}{j}$ for
  $i,j < \cliquesize$, which is also an edge axiom of
  $\wclique{G}{\cliquesize}$ and is falsified by
  $\sigma$. %
  Now if the output of $B_{v}$ on $\sigma$ is the $i$th clique axiom of
  $\wclique{G_{v}}{\cliquesize-1}$, then
  $\sigma$ falsifies $\bigvee_{w\in\neigh{v}} \icliquemembervar{v}{i}$,
  and therefore $\rho_{v} \cup \sigma$ falsifies the $i$th clique
  axiom 
  of 
  $\wclique{G}{\cliquesize}$.

  The construction so far is correct but produces a very large
  branching program (in particular it has tree-like structure on top).
  In order to create a smaller branching program, we observe that if
  $u,v\in V(G)$ are such that $\neigh{u}=\neigh{w}$ then
  $G_{u}=G_{w}$, $B_{u}=B_{w}$ and $\rho_{u}=\rho_{w}$.
  This observation allows us to merge together all nodes $a_{v}$ that
  have the same value of $\neigh{v}$ into a single node, and to
  identify all the corresponding copies of the same branching program
  $B_{v}$.
  Now let us focus on some node $a^{*}$ obtained by this merge process,
  and pick arbitrarily some $a_{v}$ that was merged into it (the
  specific choice is irrelevant).
  By construction $\rho_{v}$ is consistent with all paths reaching
  $a^{*}$, but we can claim further: $\rho_{v}$ is consistent with all
  paths \emph{passing through} $a^{*}$ because $B_{v}$ only queries
  variables in $S_{v}$, which is disjoint from the domain of
  $\rho_{v}$.
  Because of this last fact all paths that pass through node $a^{*}$
  and reach an output node $b^{*}$ in the attached copy of $B_{v}$
  must contain the partial assignment $\rho_{v} \cup \sigma$, where
  $\sigma$ is the common partial assignment consistent with all paths
  from the root of $B_{v}$ to $b^{*}$.
  If $b^{*}$ outputs an edge axiom, this is already falsified by
  $\sigma$ because of the correctness of $B_{v}$. If $b^{*}$ outputs
  the $i$th clique axiom, the correctness of $B_{v}$ guarantees that
  $\sigma$ falsifies the $i$th axiom for $G_{v}$, and therefore
  $\rho_{v} \cup \sigma$ falsifies the $i$th clique axiom of $G$.
  Hence the new branching program is correct.
  
  This merge process leads to having only one subprogram for each distinct
  induced subgraph at each level of the recursion.
  In order to bound the size of this program, we decompose it into~$k$ levels.
  The source is at level zero and corresponds to the
  graph $G$. At level $i$ there are nodes corresponding to all
  subgraphs induced by the common neighbourhood of cliques of size $i$.
  Each node in the $i$th level connects to the nodes of the $(i+1)$th
  level by a branching program of size at most $\cliquesize n^{2}$.
  Notice that an induced subgraph in $I(G)$ cannot occur twice in the
  same layers, so the total size of the final branching program is
  at most $\setsize{I(G)}\cdot \cliquesize^{2} n^{2}$ nodes.
\end{proof}

We now proceed to prove the upper bounds mentioned previously. A graph
$G$ that has a homomorphism into a small $\cliquesize$-clique free
graph $H$ may still have a large set $I(G)$, making
Proposition~\ref{stm:algsearch} inefficient.
The first key observation is that if~$G$ has a homomorphism
into a graph~$H$ then it is a subgraph of a blown up version of~$H$,
namely, of a graph obtained by transforming
each vertex of~$H$ into a ``cloud'' of vertices where a cloud does
not contain any edge, two clouds corresponding to two adjacent
vertices in~$H$ have all possible edges between them, and two clouds
corresponding to two non-adjacent vertices in~$H$ have no edges between
them.
A second crucial point is that if~$G'$ is a blown up version of~$H$
then it turns out that $\setsize{I(G')}=\setsize{I(H)}$, making
Proposition~\ref{stm:algsearch} effective for~$G'$.
The upper bound then follows from observing that the task of proving that~$G$ is
$\cliquesize$-clique free should not be harder than the same task for
a supergraph of~$G$. Indeed Fact~\ref{fact:subgraph} formalises this intuition.
It is interesting to observe that the constructions in
Proposition~\ref{stm:algsearch} and in Fact~\ref{fact:subgraph} are
efficient. The non-constructive part is guessing the %
homomorphism to~$H$.

\begin{fact}
  \label{fact:subgraph}
  Let $G=(V,E)$ and $G'=(V',E')$ be graphs with no $\cliquesize$-clique
  such that $V\subseteq V'$ and
  \mbox{$E \subseteq E' \cap \binom{V}{2}$}.
  If $\wclique{G'}{\cliquesize}$ has a (regular) refutation of \lengthsize~$\prooflength$, then
  $\wclique{G}{\cliquesize}$ has a (regular) refutation of \lengthsize at most~$\prooflength$.
\end{fact}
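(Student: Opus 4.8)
The plan is to obtain a refutation of $\wclique{G}{\cliquesize}$ simply by hitting the given refutation of $\wclique{G'}{\cliquesize}$ with a restriction and invoking Fact~\ref{fact:restrict}. The only point to get right is that restricting the ``larger'' formula down to the vertex set of $G$ cannot produce any clause that is not already present in $\wclique{G}{\cliquesize}$; this is exactly where the hypothesis $E\subseteq E'\cap\binom{V}{2}$ is used.

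First I would observe that, since $V\subseteq V'$, the variables $\icliquemembervar{v}{i}$ of $\wclique{G}{\cliquesize}$ form a subset of the variables of $\wclique{G'}{\cliquesize}$. Let $\rho$ be the partial assignment that sets $\icliquemembervar{v}{i}\mapsto 0$ for every $v\in V'\setminus V$ and every $i\in[\cliquesize]$, and I would then compute $\restrict{\wclique{G'}{\cliquesize}}{\rho}$ clause by clause. A clique axiom $\bigvee_{v\in V'}\icliquemembervar{v}{i}$ restricts to $\bigvee_{v\in V}\icliquemembervar{v}{i}$, which is precisely the $i$th clique axiom of $\wclique{G}{\cliquesize}$. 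An edge axiom $\lnot\icliquemembervar{u}{i}\lor\lnot\icliquemembervar{v}{j}$ (so $\{u,v\}\notin E'$, $i\neq j$) becomes the trivial $1$-clause whenever $u\notin V$ or $v\notin V$, since then $\rho$ satisfies one of its literals; and when both $u,v\in V$ it is left unchanged, and because $E\subseteq E'$ we have $\{u,v\}\notin E$, so it is an edge axiom of $\wclique{G}{\cliquesize}$. Hence every non-trivial clause of $\restrict{\wclique{G'}{\cliquesize}}{\rho}$ is a clause of $\wclique{G}{\cliquesize}$; in fact $\restrict{\wclique{G'}{\cliquesize}}{\rho}$ is a subformula of $\wclique{G}{\cliquesize}$ (all clique axioms, a possibly smaller set of edge axioms).

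Finally I would apply Fact~\ref{fact:restrict} to the given (regular) refutation $\proofstd$ of $\wclique{G'}{\cliquesize}$ of \lengthsize $\prooflength$: it yields a (regular) refutation $\restrict{\proofstd}{\rho}$ of $\restrict{\wclique{G'}{\cliquesize}}{\rho}$ of \lengthsize at most $\prooflength$, and by construction this refutation uses only genuine (non-trivial) clauses, all of which are axioms of $\wclique{G}{\cliquesize}$. Therefore the very same derivation is a (regular) refutation of $\wclique{G}{\cliquesize}$ of \lengthsize at most $\prooflength$, which is the claim. I do not expect a real obstacle here: regularity is inherited directly from Fact~\ref{fact:restrict}, and the only genuinely load-bearing observation is the direction of the edge-set inclusion, which guarantees that deleting vertices can only remove edge axioms, never create new non-edges.
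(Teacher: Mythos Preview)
Your proposal is correct and follows essentially the same approach as the paper: define the restriction $\rho$ that zeroes out all variables for vertices in $V'\setminus V$, observe that the restricted formula is a subformula of $\wclique{G}{\cliquesize}$, and invoke Fact~\ref{fact:restrict}. The only cosmetic difference is that the paper phrases the intermediate step via an auxiliary graph $\widetilde{G}$ on vertex set $V$ with edge set $E'\cap\binom{V}{2}$, whereas you carry out the same clause-by-clause analysis directly.
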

\begin{proof}
  Consider the partial assignment $\rho$ that sets
  $\icliquemembervar{v}{i}=0$ for every $v \not\in V$ and
  $i\in[\cliquesize]$.
  The restricted formula $\restrict{\wclique{G'}{\cliquesize}}{\rho}$
  is isomorphic to $\wclique{\widetilde{G}}{\cliquesize}$, where
  $V(\widetilde{G})=V$ and $E(\widetilde{G})=E' \cap \binom{V}{2}$, and thus,
  by Fact~\ref{fact:restrict}, has a (regular) refutation $\proofstd$
  of \lengthsize at most~$\prooflength$.
  Removing edges from a graph only introduces additional edge
  axioms~\eqref{eq:clique_edge} in the corresponding formula,
  therefore
  $\wclique{\widetilde{G}}{\cliquesize} \subseteq \wclique{G}{\cliquesize}$
  and $\proofstd$ is a valid refutation of $\wclique{G}{\cliquesize}$ as well.
\end{proof}

It was shown in
~\cite{BGL13ParameterizedDPLL}
that the $\cliquesize$-clique
formula of a complete $(\cliquesize-1)$-partite graph on $n$ vertices has
a regular resolution refutation of \lengthsize $2^{\cliquesize} n^{O(1)}$,
although the regularity is not stressed in that paper.
Since it is instructive to see how this refutation is constructed in this
framework, we give a self-contained proof.

\begin{proposition}[{\cite[Proposition 5.3]{BGL13ParameterizedDPLL}}]
  \label{stm:upperbound_colourable}
  If $G$ is a $(\cliquesize-1)$-colourable graph on $n$ vertices, then
  $\wclique{G}{\cliquesize}$ has a regular resolution refutation of
  \lengthsize at most $2^{\cliquesize} \cliquesize^{2} n^{2}$.
\end{proposition}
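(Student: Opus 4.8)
The plan is to follow the route sketched just above the statement: realise $G$ as a subgraph of a blown-up copy of the $(\cliquesize-1)$-clique, apply Proposition~\ref{stm:algsearch} to that supergraph, and then pull the refutation back down with Fact~\ref{fact:subgraph}. Concretely, fix a proper $(\cliquesize-1)$-colouring of $G$, i.e.\ a partition $V(G)=P_1\dot\cup\cdots\dot\cup P_{\cliquesize-1}$ of $V(G)$ into (possibly empty) independent sets, and let $G'$ be the complete $(\cliquesize-1)$-partite graph on the same vertex set with these parts, so that two vertices are adjacent in $G'$ exactly when they lie in distinct parts. Because the colouring is proper, every edge of $G$ joins two different parts and hence is an edge of $G'$; thus $V(G)=V(G')$ and $E(G)\subseteq E(G')$, and both graphs are $\cliquesize$-clique free since a clique of $G'$ contains at most one vertex per part and so has size at most $\cliquesize-1$. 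So $G,G'$ satisfy the hypotheses of Fact~\ref{fact:subgraph}, and it suffices to give a regular resolution refutation of $\wclique{G'}{\cliquesize}$ of length at most $2^{\cliquesize}\cliquesize^2 n^2$.

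The key computation is the bound $\setsize{I(G')}\le 2^{\cliquesize-1}$. A clique $R$ of $G'$ meets each part in at most one vertex, so it determines the set $S_R\subseteq[\cliquesize-1]$ of parts it occupies, with $\setsize{S_R}=\setsize{R}$. A vertex $w\in P_j$ is adjacent in $G'$ to every vertex of $R$ iff $j\notin S_R$, so $\neighcV{R}=\bigcup_{j\in[\cliquesize-1]\setminus S_R}P_j$, which depends only on $S_R$. Hence the collection of sets $\neighcV{R}$, as $R$ ranges over cliques of $G'$, has at most $2^{\cliquesize-1}$ members, and therefore so does $I(G')$, since the induced subgraph $G'[\neighcV{R}]$ is determined by the vertex set $\neighcV{R}$.

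Plugging this into Proposition~\ref{stm:algsearch} yields a read-once branching program for the falsified clause search problem on $\wclique{G'}{\cliquesize}$ of size at most $\setsize{I(G')}\cdot\cliquesize^2\cdot\setsize{V(G')}^2\le 2^{\cliquesize-1}\cliquesize^2 n^2$. By the equivalence between read-once branching programs for the falsified clause search problem and regular resolution refutations recalled in Section~\ref{sec:preliminaries}, $\wclique{G'}{\cliquesize}$ then has a regular resolution refutation of length at most $2^{\cliquesize-1}\cliquesize^2 n^2\le 2^{\cliquesize}\cliquesize^2 n^2$, and Fact~\ref{fact:subgraph} transfers this to a regular resolution refutation of $\wclique{G}{\cliquesize}$ of length at most $2^{\cliquesize}\cliquesize^2 n^2$.

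I do not expect any real obstacle here: once Proposition~\ref{stm:algsearch} and Fact~\ref{fact:subgraph} are available, the whole argument is the observation that for the complete $(\cliquesize-1)$-partite graph the common neighbourhood of a clique — and hence the induced subgraph on it — is controlled entirely by which of the $\cliquesize-1$ parts the clique touches, so $\setsize{I(\cdot)}$ stays bounded by $2^{\cliquesize-1}$ no matter how large $n$ grows. The only points requiring a little care are checking that $G\subseteq G'$ on the same vertex set (immediate from properness of the colouring) and that both graphs remain $\cliquesize$-clique free.
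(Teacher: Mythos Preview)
Your proposal is correct and follows essentially the same approach as the paper: build the complete $(\cliquesize-1)$-partite supergraph $G'$, observe that the common neighbourhood of any clique in $G'$ depends only on which colour classes are hit so that $\setsize{I(G')}\le 2^{\cliquesize-1}$, apply Proposition~\ref{stm:algsearch}, and transfer back via Fact~\ref{fact:subgraph}. Your writeup is in fact slightly more careful (noting the bound is $\le 2^{\cliquesize-1}$ rather than claiming a bijection, which need not hold if some colour classes are empty).
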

\begin{proof}
  Let $V=V(G)$ and let
  $V_{1} \dot\cup V_{2} \dot\cup \dots \dot\cup V_{(\cliquesize-1)}$
  be a partition of $V$ into colour classes.
  Define the graph $G'=(V,E')$ where the edge set~$E'$ has an edge between
  any pair of vertices belonging to two different colour classes.
  Clearly~$G$ is a subgraph of~$G'$.
  Observe that any clique~$R$ in~$G'$ has at most one vertex
  in each colour class, and that the common neighbours of~$R$ are all
  the vertices in the colour classes not touched by~$R$.

  Therefore, there is a one-to-one correspondence between the members
  of $I(G')$ and the subsets of $[\cliquesize-1]$.
  By Proposition~\ref{stm:algsearch} there is a read-once branching program
  for the falsified clause search problem on formula
  $\wclique{G'}{\cliquesize}$ of size at most
  $2^{\cliquesize} \cliquesize^{2} n^{2}$.
  This read-once branching program corresponds to a regular resolution
  refutation of $\wclique{G'}{\cliquesize}$ of the same size.
  By Fact~\ref{fact:subgraph} there must be a regular resolution
  refutation of size at most
  $2^{\cliquesize} \cliquesize^{2} n^{2}$ for
  $\wclique{G}{\cliquesize}$ as well.
\end{proof}

Next we generalize Proposition~\ref{stm:upperbound_colourable} to
graphs~$G$ that have a homomorphism to a $k$-clique free graph~$H$.

\begin{proposition}
  \label{stm:upperbound_homomorsphism}
  If $G$ is a graph on $n$ vertices that has a homomorphism into a $\cliquesize$-clique free
  graph $H$ on $m$ vertices, then $\wclique{G}{\cliquesize}$ has a regular resolution
  refutation of \lengthsize at most $m^{\cliquesize} \cliquesize^{2} n^{2}$.
\end{proposition}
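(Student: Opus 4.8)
The plan is to realise $G$ as a subgraph of a suitable blow-up $G'$ of $H$, apply Proposition~\ref{stm:algsearch} to $G'$ (whose associated set $I(G')$ will turn out to be small), and then push the resulting refutation down to $G$ via Fact~\ref{fact:subgraph}. First I fix a homomorphism $h\colon G\to H$. We may assume $h$ is surjective, since otherwise we can replace $H$ by the subgraph induced on $h(V)$, which is still $\cliquesize$-clique free and has at most $m$ vertices, so the target bound only improves. Define $G'=(V,E')$ on the same vertex set $V=V(G)$ by declaring $\{u,v\}\in E'$ exactly when $h(u)\ne h(v)$ and $\{h(u),h(v)\}\in E(H)$; concretely, $G'$ is obtained from $H$ by blowing up each vertex $x$ into the cloud $h^{-1}(x)$, putting all edges between clouds of adjacent vertices of $H$ and no other edges. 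Because $h$ is a homomorphism we have $E\subseteq E'$, so $G$ is a subgraph of $G'$; and $G'$ has no $\cliquesize$-clique, since for any clique $R$ of $G'$ the map $h$ is injective on $R$ (distinct $G'$-adjacent vertices have distinct images) and $h(R)$ is a clique of $H$ with $\setsize{h(R)}=\setsize{R}$, forcing $\setsize{R}<\cliquesize$. The same reasoning shows $G$ itself is $\cliquesize$-clique free.

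Next I bound $\setsize{I(G')}$, the crucial claim being that $I(G')$ is in bijection with $I(H)$. Unwinding the definition of $E'$ shows that for every clique $R$ of $G'$ one has $\widehat{N}_{G'}(R)=h^{-1}\bigl(\widehat{N}_{H}(h(R))\bigr)$, where the common neighbourhoods are taken in $G'$ and $H$ respectively; consequently $G'$ induced on this set is again the blow-up, via the restriction of $h$, of $H$ induced on $\widehat{N}_{H}(h(R))$. Since $h$ is surjective, $S\mapsto h^{-1}(S)$ is injective on subsets of $V(H)$, so distinct members of $I(G')$ arise from distinct members $H\bigl[\widehat{N}_{H}(Q)\bigr]$ of $I(H)$; conversely every clique $Q$ of $H$ equals $h(R)$ for some clique $R$ of $G'$ (lift each vertex of $Q$ through $h$ and use surjectivity), so every member of $I(H)$ is attained. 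Hence $\setsize{I(G')}=\setsize{I(H)}$, and the latter is at most the number of cliques of $H$, hence at most $\sum_{i=0}^{\cliquesize-1}\binom{m}{i}\le m^{\cliquesize}$ because $H$ has no $\cliquesize$-clique (the degenerate cases $m\le 1$, which force $G$ to be edgeless, being immediate).

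It then remains to assemble the pieces. By Proposition~\ref{stm:algsearch} applied to $G'$ there is a read-once branching program for the falsified clause search problem of $\wclique{G'}{\cliquesize}$ of size at most $\setsize{I(G')}\cdot\cliquesize^{2}\cdot\setsize{V(G')}^{2}\le m^{\cliquesize}\cliquesize^{2}n^{2}$, and hence, by the equivalence recalled in the preliminaries, a regular resolution refutation of $\wclique{G'}{\cliquesize}$ of the same length. Since $G$ is a subgraph of $G'$ and both graphs are $\cliquesize$-clique free, Fact~\ref{fact:subgraph} turns this into a regular resolution refutation of $\wclique{G}{\cliquesize}$ of length at most $m^{\cliquesize}\cliquesize^{2}n^{2}$, as claimed. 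The main obstacle is the second step: correctly tracking how common neighbourhoods transform under the blow-up so as to obtain the exact bijection $I(G')\leftrightarrow I(H)$ rather than a lossy estimate, and reducing to a surjective homomorphism at the outset is what makes this clean. Everything else is bookkeeping on top of Proposition~\ref{stm:algsearch} and Fact~\ref{fact:subgraph}.
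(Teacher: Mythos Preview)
Your proof is correct and follows essentially the same approach as the paper: construct the blow-up $G'$ of $H$ along the fibres of $h$, bound $\setsize{I(G')}$ via $\setsize{I(H)}$, apply Proposition~\ref{stm:algsearch} to $G'$, and then descend to $G$ via Fact~\ref{fact:subgraph}. Your reduction to a surjective $h$ and your explicit verification that $G'$ is $\cliquesize$-clique free are minor elaborations that make the argument slightly more self-contained than the paper's version, but the structure is the same.
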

\begin{proof}
  Fix a homomorphism $h\!:\! V(G) \rightarrow V(H)$ and an ordering
  $u_{1},\ldots,u_{m}$ of the vertices of $H$.
  Let $V_{1} \dot\cup V_{2} \dot\cup \dots \dot\cup V_{m}$
  be the partition of $V(G)$ such that $V_{i}$ is the set of
  vertices of $G$ mapped to $u_{i}$ by~$h$.
  We define the graph $G'=(V,E')$ where
  \begin{equation}
    E' = \!\!\!\!\!\!\!\! \bigcup_{\{u_{i},u_{j}\}\in E(H)} \!\!\!\!\!\!\!\!  V_{i} \times V_{j} \eqcomma
  \end{equation}
  that is, $G'$ is a blown up version of $H$ that contains $G$ as a subgraph.
  To prove our result
  we note that, by Proposition~\ref{stm:algsearch}, there is a read-once
  branching program for the falsified clause search problem on
  $\wclique{G'}{\cliquesize}$---and hence also a regular resolution
  refutations of the same formula---of size at most $\setsize{I(G')}\cdot \cliquesize^2 n^2$.
  This implies that, by
  Fact~\ref{fact:subgraph}, there is a regular resolution refutation of
  $\wclique{G}{\cliquesize}$ of at most the same size.

  To conclude the proof it remains only to show that $\setsize{I(G')}\leq m^k$.
  By construction, $h$ maps injectively a clique $R \subseteq V(G')$ into
  a clique $R_H \subseteq V(H)$ of the same size.
  Moreover, note that if $U=\neighcV{R_{H}}$,
  then $\neighcV{R}=\cup_{u_i \in U}V_{i}$.
  Therefore, for any clique $R'\subseteq V(G')$ that is mapped by $h$ to $R_H$ it holds that $\neighcV{R}=\neighcV{R'}$, \ie $\neighcV{R'}$ is completely characterized by the clique in $H$ it is mapped to.
  Thus $I(G)$ has at most one
  element for each clique in $H$ and we have that $\setsize{I(G')} = \setsize{I(H)}$. Finally, note that $\setsize{I(H)} \leq m^{\cliquesize}$ since, being $\cliquesize$-clique free, $H$
  cannot have more than $\sum_{i=0}^{\cliquesize-1} m^i  \leq m^{\cliquesize}$ cliques.
\end{proof}
\ifthenelse{\boolean{conferenceversion}}
{\section{Random Graphs Are Hard}}
{\section{Random Graphs Are Hard for Regular Resolution}}
\label{sec:recap-lower-bounds}

The main result of this paper is an
average case lower bound of $\np^{\bigomega{k}}$ for regular resolution
for the $k$-clique problem. As we
saw in Section~\ref{sec:preliminaries}, the $k$-clique problem can
be encoded in different ways and depending on the preferred formula
the range of~$k$ for which we can obtain a lower bound differs.
In this section we present a summary of our results for the different encodings.

\begin{theorem}
	\label{thm:k-clique-erdos-renyi-block}
        For any real constant $\epsilon > 0$,
        any sufficiently large integer~$n$,
        any positive integer~$k\leq {n^{1/4-\epsilon}}$,
        and any real $\oldeta>1$,\,
        if\, $\sample{\randomvarG}{\randG{\np}{\np^{-2\oldeta/(k-1)}}}$ is an
        \ErdosRenyi random graph, then,
        with probability at least $1-\exp(-\sqrt{\np})$,
        any regular resolution refutation of
        $\cliqueblock{\randomvarG}{k}$ has \lengthsize at
        least~$\np^{\Omega(k/{\oldeta^2})}$.
\end{theorem}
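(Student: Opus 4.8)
The plan is to deduce Theorem~\ref{thm:k-clique-erdos-renyi-block} from two statements proved in the two sections that follow. The first, deterministic, statement (Section~\ref{sec:lowerbound1}) isolates a combinatorial ``dense common neighbourhood'' property of graphs and shows that whenever a graph~$\Gp$ has this property, every regular resolution refutation of $\cliqueblock{\Gp}{\cliquesize}$ has \lengthsize at least $\np^{\Omega(\cliquesize/\oldeta^{2})}$. Informally the property asks that for every not-too-large vertex set~$\Rp$ the common neighbourhood $\neighcV{\Rp}$ has size close to its expectation $\np\cdot\np^{-2\oldeta\setsize{\Rp}/(\cliquesize-1)}$ and is moreover spread evenly across the blocks $\Vp_{1},\dots,\Vp_{\cliquesize}$ of the fixed balanced partition, and that this remains true after the deletion of a small set of ``spoiled'' vertices. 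The second, probabilistic, statement (Section~\ref{sec:lowerbound2}) is that $\sample{\randomvarG}{\randG{\np}{\np^{-2\oldeta/(\cliquesize-1)}}}$ has this property with probability at least $1-\exp(-\sqrt{\np})$. Putting the two together yields the theorem.

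For the deterministic half I would run a bottleneck-counting argument in the style of~\cite{Haken85Intractability}, in the ``bottleneck pair'' variant of~\cite{RWY02Read-once}. Using the equivalence recalled in Section~\ref{sec:preliminaries}, a putative short regular refutation becomes a small read-once branching program~$\bprogpi$ solving the falsified-clause search problem for $\cliqueblock{\Gp}{\cliquesize}$. I would then put a distribution on source-to-sink paths of~$\bprogpi$ via a random walk that builds a transversal partial clique, choosing the member of each successive block uniformly at random among the common neighbours of the members chosen so far (and answering $0$ to all queries about vertices thereby excluded); the denseness property guarantees that at the $t$-th of the~$\cliquesize$ stages there are about $\np^{1-2\oldeta t/(\cliquesize-1)}$ roughly-equiprobable choices. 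One designates certain \emph{pairs} of nodes $(a,b)$ along a path as \emph{bottlenecks} --- intuitively a transition across which the walk commits to an entire new clique member while recording only a few other queries --- and argues (i) every random path traverses at least one bottleneck pair, and (ii) for any \emph{fixed} pair the probability that a random path passes through it is at most $\np^{-\Omega(\cliquesize/\oldeta^{2})}$, by a Markov chain-type estimate along the walk that exploits the many near-uniform choices at successive stages (the $\oldeta^{2}$ rather than a bare $\oldeta$ reflecting that the edge density governs both the number of live extensions per stage and the length of the exploitable prefix of the walk). A union bound over the at most $\setsize{\bprogpi}^{2}$ pairs then forces $\setsize{\bprogpi}\geq\np^{\Omega(\cliquesize/\oldeta^{2})}$.

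For the probabilistic half I would fix the balanced partition and, for every relevant vertex set~$\Rp$ and every block~$\Vp_{i}$, apply the multiplicative Chernoff bound to the binomial variable $\setsize{\neighc{\Rp}{\Vp_{i}}}$ (and to analogous quantities encoding the robustness under vertex deletions), and then union-bound over the whole family of such sets. The hypothesis $\cliquesize\le\np^{1/4-\epsilon}$ is exactly what is needed to keep the relevant expectations large enough relative to the number of events in the union bound that the total failure probability stays below $\exp(-\sqrt{\np})$.

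The step I expect to be the real obstacle is (ii) in the deterministic half: defining a bottleneck pair so that it is simultaneously \emph{unavoidable} on every random path and \emph{extremely unlikely} on any fixed one. Since the branching program may query variables in an adversarial, adaptive order, the ``new clique member committed between $a$ and $b$'' must be read off the partial assignments $\beta(a)$ and $\beta(b)$ in a way that respects both demands, and pushing the dense-neighbourhood property through the conditioning of the walk is delicate --- this is the place where the argument of~\cite{RWY02Read-once} has to be adapted rather than quoted. On the probabilistic side, the elaborate point flagged in the introduction is controlling those sets~$\Rp$ for which the expected common neighbourhood is only modestly above the threshold scale, where Chernoff concentration is at its weakest.
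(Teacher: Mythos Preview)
Your high-level decomposition --- a deterministic ``clique-dense graphs are hard'' lemma plus a probabilistic ``random graphs are clique-dense'' lemma, combined to yield the theorem --- is exactly the paper's. What differs is the mechanism inside each half, and in the deterministic half your sketch would not work as written.

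The random walk you describe is not well-posed on a branching program: the program, not the walker, chooses which variable to query next, so there is no moment at which the walker gets to ``choose the member of each successive block uniformly among the common neighbours''. The paper's walk is cruder and purely local: at every non-forced query it answers $1$ with probability $s^{-(1+\varepsilon)}$ (where $s=\sqrt n$) and $0$ otherwise. Correspondingly, a \emph{useful pair} $(a,b)$ is \emph{not} one across which a new clique member is committed; it is one for which some block index $i$ is still live at $b$ and the set of block-$i$ vertices set to $0$ between $a$ and $b$ is $(r,q)$-neighbour-dense. Unavoidability (Lemma~\ref{lemma:legitimate}) comes from splitting any path into $t$ segments by the at most $k$ one-answers and observing that one segment must witness a dense chunk of zeros in the block of the final clique axiom. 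The rare event in Lemma~\ref{lemma:probability} is that the walk either lands $\Omega(r)$ of its scarce $1$-coins inside a prescribed small set $S$, or else gets $0$ on $q'$ unforced flips in a row --- and it is here, in the case analysis, not in the walk's definition, that the mostly-neighbour-dense property enters.

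For the probabilistic half, property~\ref{item:hyp-generic} is indeed Chernoff plus a union bound over $R$. But property~\ref{item:hyp-awesome} quantifies over all $2^n$ subsets $W\subseteq V$, and the ``elaborate'' part the introduction warns about is making the per-$W$ failure probability beat $2^{-n}$. The paper does this by greedily building, for each $W$, a candidate witness $S_W$ as a union of an $r$-disjoint sequence of bad sets $R_j$, and then bounding $\Pr[|S_W|>s]$ by a product of conditional Chernoff bounds along that sequence; the $r$-disjointness is precisely what makes the conditioning factor. The difficulty you name --- Chernoff being weak when the expectation is near threshold --- is not the actual obstacle.
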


The parameter $\oldeta$ determines the density of the graph:
the larger~$\oldeta$ the sparser the graph and the problem
of determining whether~$\randomvarG$ contains a $k$-clique becomes
easier. 
For constant $\oldeta$, the edge probability implies the graph $\randomvarG$ has clique number concentrated around $k/\oldeta$ and the theorem yields
a $\np^{\bigomega{k}}$ lower bound which is tight up to the multiplicative
constant in the exponent. 
The lower bound decreases smoothly with the
edge density and is non-trivial for $\oldeta = o(\sqrt{k})$.

A problem which is closely related to the problem we consider is that of
distinguishing a random graph sampled from $\randG{\np}{p}$ from
a random graph from the same distribution with a planted $k$-clique.
The most studied setting is when $p=1/2$. In this scenario the problem
can be solved in polynomial time with high probability for
$k\approx \sqrt{\np}$~\cite{Kucera95ExpectedComplexity,AKS98Finding}.
It is still an open problem whether there exists a
polynomial time algorithm solving this problem for
$\log \np \ll k\ll \sqrt{n}$. %
For $\sample{\randomvarG}{\randG{\np}{1/2}}$, setting $\xi = 
\frac{k-1}{2\log_2(n)}$, Theorem~\ref{thm:k-clique-erdos-renyi-block} implies that to refute
$\cliqueblock{\randomvarG}{k}$
\aas regular resolution requires $n^{\Omega(\log^2(n)/k)}$ size; which is $\np^{\bigomega{\log n}}$  size
for $k= O(\log n)$ and super-polynomial size
for $k=\littleoh{\log^2 n}$.
We note that, in the case $k = O(\log n)$, the lower bound is tight. 
This follows from Proposition~\ref{stm:algsearch}
since \aas there are at most $n^{O(\log n)}$ different 
cliques in $\sample{\randomvarG}{\randG{\np}{1/2}}$
(because \aas the largest clique has size 
at most $2\log n$) and, therefore, the set
${I(\randomvarG)}$ in Proposition~\ref{stm:algsearch} has size
at most $n^{O(\log n)}$.

An interesting question is whether \refthm{thm:k-clique-erdos-renyi-block} holds for larger values of $k$.
We show that for the formula $\clique{\Gp}{k}$ (recall that by Lemma \ref{lem:block-map} this encoding is easier
for the purpose of lower bounds) we can prove the lower bound
for $k\leq {n^{1/2-\epsilon}}$ as long as the edge density of the graph is close
to the threshold for containing a $k$-clique.

\begin{theorem}
	\label{thm:k-clique-erdos-renyi}
        For any real constant $\epsilon > 0$,
        any sufficiently large integer~$n$,
        any positive integer~$k$,
        and any real $\oldeta>1$ such that $k \sqrt{{\oldeta}} \leq  {n^{1/2-\epsilon}}$,
        if $\sample{\randomvarG}{\randG{\np}{\np^{-2\oldeta/(k-1)}}}$ is an
        \ErdosRenyi random graph, then,
        with probability at least $1-\exp(-\sqrt{\np})$,
        any regular resolution refutation of
        $\clique{\randomvarG}{k}$ has \lengthsize at
        least~$\np^{\Omega(k/{\oldeta^2})}$.
\end{theorem}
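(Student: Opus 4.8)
The plan is to mirror the two-stage strategy that underlies \refthm{thm:k-clique-erdos-renyi-block}, but to carry it out \emph{directly for the mapping encoding} $\clique{\Gp}{k}$ rather than for the block encoding, so as to exploit that in $\clique{\Gp}{k}$ each of the $k$ clique positions ranges over all $\np$ vertices instead of over a block of only $\np/k$ of them. Concretely I would: (i) isolate a combinatorial ``spread-out common neighbourhood'' property $\mathcal P=\mathcal P(\np,k,\oldeta,\epsilon)$ of $\np$-vertex graphs; (ii) prove a purely combinatorial hardness lemma---if $\Gp$ is $k$-clique-free and satisfies $\mathcal P$, then every regular resolution refutation of $\clique{\Gp}{k}$ has \lengthsize $\np^{\Omega(k/\oldeta^{2})}$; and (iii) prove that $\sample{\randomvarG}{\randG{\np}{\np^{-2\oldeta/(k-1)}}}$ satisfies $\mathcal P$ with probability at least $1-\exp(-\sqrt\np)$ whenever $k\sqrt{\oldeta}\le \np^{1/2-\epsilon}$ (which in particular makes $\randomvarG$ $k$-clique-free). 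The theorem then follows by combining (ii) and (iii). Note that \refthm{thm:k-clique-erdos-renyi-block} does not help here: Lemma~\ref{lem:block-map} only transfers hardness from $\clique$ to $\cliqueblock$, i.e.\ in the wrong direction, and restricting $\clique{\Gp}{k}$ to its block partition merely replaces the effective vertex pool $\np$ by $\np/k$, which would give the \emph{worse} range $k\le \np^{1/4}$; the wider range genuinely requires a direct argument on $\clique{\Gp}{k}$, paralleling but not reusing the block proof (and also needing a little extra care because the \funaxioms{} are present).

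For step (ii) I would use the equivalence between regular resolution refutations of $\clique{\Gp}{k}$ and read-once branching programs solving its falsified-clause search problem (\refsec{sec:preliminaries}) and then run a bottleneck-counting argument in the style of~\cite{Haken85Intractability}, with the twist from~\cite{RWY02Read-once} of counting \emph{bottleneck pairs} of nodes. Fix a read-once branching program~$P$. First define a distribution $\mathcal D$ over source-to-sink paths: maintain a partial injective assignment of clique indices to vertices that always spells out a clique of~$\Gp$, and whenever $P$ queries a variable $\icliquemembervar{v}{i}$ answer it consistently with what has been committed so far, committing a fresh pair $v\mapsto i$ (and answering~$1$) with probability proportional to the weight of~$v$ inside the current common neighbourhood. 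Since $\Gp$ has no $k$-clique, every path in the support eventually reaches a node forced to falsify a clique axiom; analysing where this happens lets one designate along each path a bottleneck pair $(a,b)$ of nodes that between them pin down a large sub-clique (of size $\Theta(k/\oldeta)$). Then, using property~$\mathcal P$ together with a Markov-chain estimate as in \cite[Theorems~3.2,~3.5]{RWY02Read-once}, I would bound the $\mathcal D$-probability that a random path passes through any fixed pair $(a,b)$ by $\np^{-\Omega(k/\oldeta^{2})}$: spread-outness of common neighbourhoods is precisely what guarantees that the freedom remaining to complete the clique after visiting~$a$ and~$b$ is large, so no single pair can capture many paths. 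A union bound over bottleneck pairs, at least one of which lies on every path, then forces $P$ to have $\np^{\Omega(k/\oldeta^{2})}$ nodes.

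For step (iii), the property $\mathcal P$ should say, roughly, that for every clique $R$ of~$\Gp$ of size up to about $k/\oldeta$ and every not-too-small vertex set $W$, the common neighbourhood $\neighcV{R}$ meets $W$ in close to the expected fraction $\setsize{W}\cdot \pp^{\setsize{R}}$ of its vertices---so in particular $\setsize{\neighcV{R}}$ is never much below $\np\cdot\pp^{\setsize{R}}$ and is never concentrated on a small set. For $\randomvarG$ the quantity $\setsize{\neighcV{R}\cap W}$ is a sum of independent Bernoulli variables with mean $\setsize{W}\pp^{\setsize{R}}$, so for each fixed pair $(R,W)$ a multiplicative Chernoff bound gives the desired deviation estimate; the point of the analysis is to make the failure probabilities small enough, and the enumeration of relevant pairs coarse enough, that a union bound over all of them still leaves probability at least $1-\exp(-\sqrt\np)$. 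This is exactly where $k\sqrt{\oldeta}\le \np^{1/2-\epsilon}$ enters, since it simultaneously controls the number of cliques~$R$ of the relevant sizes and the smallness of the means $\setsize{W}\pp^{\setsize{R}}$, which can drop to a constant when $\setsize{R}\approx k/\oldeta$.

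I expect step (iii) to be the main obstacle. Unlike a routine first-moment calculation it must be done uniformly over a doubly-indexed family of events with only $\exp(-\sqrt\np)$ slack, splitting into several regimes according to the size of~$R$ and, for the smallest common neighbourhoods (where naive Chernoff-plus-union-bound is too lossy), arguing more carefully---e.g.\ by controlling neighbourhoods of cliques incrementally, one vertex at a time, and reusing bounds across nested~$R$'s. This is precisely the part flagged in \refsec{sec:intro} as ``surprisingly more elaborate than most arguments of this kind,'' and it is carried out in \refsec{sec:lowerbound2}, with the combinatorial hardness lemma of step (ii) in \refsec{sec:lowerbound1}.
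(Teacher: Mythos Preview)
Your proposal takes essentially the same two-stage approach as the paper (which defers this particular proof to the conference version~\cite{ABdRLNR18Clique} but states explicitly that it parallels Sections~\ref{sec:lowerbound1}--\ref{sec:lowerbound2}): isolate a clique-denseness-type combinatorial property, prove hardness via bottleneck-pair counting on read-once branching programs with a random-path distribution, and verify the property for $\randG{\np}{p}$ via Chernoff plus union bounds---with the wider range of~$k$ coming precisely from each clique index in $\clique{\Gp}{k}$ ranging over all of~$V$ rather than a block of size~$\np/k$. One detail to adjust in execution: the paper's path distribution answers~$1$ via a \emph{fixed} $s^{-(1+\varepsilon)}$-biased coin rather than with probability proportional to common-neighbourhood size, since the uniformity of that bias is exactly what makes the irreversibility estimates in the analogue of Lemma~\ref{lemma:probability} go through cleanly.
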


\ifthenelse{\boolean{conferenceversion}}{
In this extended abstract we prove \refthm{thm:k-clique-erdos-renyi} and we refer to the upcomming full-length version of this paper for the proof of \refthm{thm:k-clique-erdos-renyi-block}.
}{
In this paper we prove \refthm{thm:k-clique-erdos-renyi-block} and we refer to the conference version of this paper~\cite{ABdRLNR18Clique} for the proof of \refthm{thm:k-clique-erdos-renyi}.
}
We note, however, that both proofs are very similar and having seen one it is an easy exercise to obtain the other.
The proof of
\ifthenelse{\boolean{conferenceversion}}{\refthm{thm:k-clique-erdos-renyi}}{\refthm{thm:k-clique-erdos-renyi-block}} is deferred to Section~\ref{sec:lowerbound2} and is based on a general lower bound technique we develop in Section~\ref{sec:lowerbound1}.
\section{Clique-Denseness Implies Hardness for Regular Resolution}
\label{sec:lowerbound1}

\newcommand{\parameterspec}[2]{#2\xspace}

In this section we define a combinatorial property of graphs,
which we call \emph{\cliquedensenessNOparam{}},
and prove that if a \mbox{$k$-clique}-free graph~$\Gp$ is
\cliquedenseNOparam with the appropriate parameters, then this implies a
lower bound~$n^{\bigomega{k}}$ on the \lengthsize of any regular resolution
refutation of the $k$-clique formula on~$\Gp$.

In order to argue that regular resolution has a hard time certifying
the \mbox{$k$-clique}-freeness of a  graph~$\Gp$, one property
that seems useful to have is that
for every small enough clique in the graph there are many ways of
extending it to a larger clique.
In other words, if $\Rp\subseteq \Vp$ forms a clique and $\Rp$
is small, we would like
the common neighbourhood~$\commonneighbourhood{\Rp}{\Vp}$ to be large. This
motivates the following definitions.

\begin{definition}[\GenericNOparam set]
  \label{def:genericforR}	
  Given
  \parameterspec
  {a graph $\Gp=(\Vp,\Ep)$}
  {$\Gp=(\Vp,\Ep)$}
  and
  \parameterspec
  {real numbers $\qp$ and $\rp$,}
  {$\qp, \rp \in \Rplus$,}
  a set
  $\Wp\subseteq \Vp$ is \emph{\genericforR{\qp} for $\Rp \subseteq \Vp$}
  if $\Neighcsize{\Rp}{\Wp}\geq \qp$. We say that $\Wp$ is
  \emph{\generic{\rp}{\qp}} if it
  is \genericforR{\qp} for every $\Rp \subseteq \Vp$
  of size~$\setsize{\Rp} \leq \rp$.
\end{definition}

If $\Wp$ is  an \generic{\rp}{\qp} set, then
we know that any clique of size~$\rp$ can be extended to a clique of
size~$\rp+1$ in at least $\qp$ different
ways by adding some vertex of $\Wp$.  Note,
however,
that the definition of \generic{\rp}{\qp} is more general than this
since $\Rp$ is not required to be a clique.

Next we define a more robust notion of
\genericnessNOparam.
For some settings of~$\rp$ and~$\qp$ of interest to us it is too much to hope for a set
$\Wp$ that is \genericforR{\qp} for every $\Rp\subseteq \Vp$ of size
at most~$\rp$. In this case we would still like to be able to find a
``mostly \genericNOparam'' set~$\Wp$ in the sense that we can ``localize''
bad (\ie those for which $\Wp$
fails to be \genericforR{\qp}) sets
$\Rp\subseteq \Vp$ of size $\setsize{\Rp}\leq \rp$.

\begin{definition}[\RobgenericNOparam set]
  \label{def:robgeneric}	
  Given
  \parameterspec
  {a graph $\Gp=(\Vp,\Ep)$}
  {$\Gp=(\Vp,\Ep)$}
  and
  \parameterspec
  {real numbers $\largerp,\rp,\qp'$ and $\spar$}
  {$\largerp,\rp,\qp,\spar \in \Rplus$}
  with $ \largerp\geq \rp$, a set $\Wp\subseteq \Vp$
  is \emph{ \robgeneric{\largerp}{\rp}{\qp}{\spar}}
  if there exists a
  set $S\subseteq \Vp$ of size~$\setsize{\Sp}\leq s$
  such that for every $\Rp\subseteq \Vp$ with
  $\setsize{\Rp}\leq\largerp$
  for which
  $\Wp$ is not \genericforR{\qp}, it holds that
  $\setsize{\Rp \cap \Sp} \geq \rp$.
\end{definition}

In what follows, it might be helpful for the reader to think of $\largerp$
and $\rp$ as linear in~$k$, and $\qp$ and~$\spar$ as
polynomial in $\np$,
where we also have $\spar\ll \qp$.

Now we are ready to define a property of graphs that makes it hard for
regular resolution to certify that graphs with this property are
indeed $k$-clique-free.

\begin{definition}[\CliquedenseNOparam graph]
  \label{def:propertyP}	
  Given
  {$k \in \Nplus$ and $\tp, \spar, \newepsilon \in \Rplus$, $1\leq\tp\leq k$}
  we say that
  a graph $\Gp=(\Vp,\Ep)$ with a $k$-partition
  $\Vp_1 \cup \cdots \cup \Vp_k = \Vp$ is
  \emph{\cliquedense{k}{\tp}{\rp}{\spar}{\newepsilon}}
  if there exist
  {$ \rp,\qp \in \Rplus$, $\rp \geq 4k/\tp^2$,}
  such that
  \begin{enumerate} \itemsep=0pt
  \item 	
    \label{item:hyp-generic}
    $\Vp_i$ is  \generic{\tp\rp}{\tp\qp} for all $i\in [k]$, and
  \item
    \label{item:hyp-awesome}
    every \generic{\rp}{\qp} set $\Wp\subseteq \Vp$
    is \robgeneric{\tp\rp}{\rp}{\qp'}{\spar}
    for
    $\qp'=\newepsilon\rp\spar^{1+\newepsilon}\log\spar$.
  \end{enumerate}	 	
\end{definition}

\begin{remark}[The complete $(\cliquesize-1)$-partite graph is not \cliquedenseNOparam{}]
  Since the property of \cliquedensenessNOparam{} in
  Definition~\ref{def:propertyP} is a sufficient condition for the
  lower bound, it is worth to pause and observe that this property
  does not hold for examples such as 
  $(\cliquesize-1)$-colourable graphs, which have non-trivially short proofs.
  
  Indeed, consider the
  $(\cliquesize-1)$-colourable graph $\Gp=(\Vp,\Ep)$ with balanced colour classes and
  maximum edge set.
  Namely, $\Vp=\bigcup_{c} U_{c}$ for $c\in[\cliquesize-1]$ and 
  $\setsize{U_{c}}={n/(\cliquesize-1)}$, and
  the edges of $\Gp$ are all pairs $\{u,v\}$ for $u \in U_{c}$ and
  $v \in U_{c'}$ with $c\neq c'$.
  The graph~$\Gp$ satisfies property~\eqref{item:hyp-generic} of clique-denseness for
  any $\cliquesize$-partition of $\Vp$ that splits each colour class
  roughly equally among parts, but 
  fails to satisfy property~\eqref{item:hyp-awesome} in a rather extreme way.
  To see why, fix any integer $\rp < \cliquesize-1$ and let~$W$ be the union of $\rp+1$
  arbitrarily chosen colour classes.
  The set $W$ is \generic{\rp}{\qp} for any $\qp$ up to
  $n/(\cliquesize-1)$, because the common neighbourhood of any $r$
  vertices in~$V$ must contain one of the colour classes
  $U_{c}\subseteq W$.

  Can $W$ be \robgeneric{\tp\rp}{\rp}{\qp'}{\spar} for some choice
  of parameters? First note that $tr \geq r+1$ (since $r \leq k$ implies $t \geq 2$) and 
  that $\commonneighbourhood{\Rp}{\Wp}=\emptyset$
  for any set~$R$ of size $r+1$ that has one vertex from each colour
  class in~$W$.
  So in order for~$W$ to be \robgeneric{\tp\rp}{\rp}{\qp'}{\spar} %
  there should be a set~$S$ of 
  size $\spar \ll \qp' \leq n/(\cliquesize-1)$ that has a large intersection with any such~$R$. 
  This, however, is not possible
  since %
  $S$ cannot completely cover any of the colour
  classes in~$W$ (because $s \ll n/(\cliquesize-1)$) and thus, for any choice of~$S$, 
  there are sets~$R$ completely disjoint from~$S$ for which 
  $\commonneighbourhood{\Rp}{\Wp}=\emptyset$.
\end{remark}

\begin{theorem}
  \label{thm:main}
  Given $k\in \Nplus$ and $\tp, \spar, \newepsilon \in \Rplus$
  if the graph $\Gp=(\Vp,\Ep)$ with balanced $k$-partition
  $\Vp_1 \cup \cdots \cup \Vp_k = \Vp$
  is \cliquedense{k}{\tp}{\rp}{\spar}{\newepsilon}, then
  every regular resolution refutation of the CNF formula
  $\cliqueblock{\Gp}{k}$ has \lengthsize at least $\Bigomega{\spar^{\newepsilon k/\tp^2}}$.
\end{theorem}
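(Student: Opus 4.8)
The plan is to carry out the bottleneck-counting argument described in the introduction, working with a read-once branching program $P$ solving the falsified clause search problem for $\cliqueblock{\Gp}{k}$ (equivalently, a regular resolution refutation, by the correspondence recalled in Section~\ref{sec:preliminaries}). First I would set up a probability distribution over source-to-sink paths in $P$. Since $\Gp$ is clique-free, the search problem cannot terminate with an edge axiom being falsified "for free"; rather, the natural random walk is driven by trying to build a transversal clique block-by-block. Concretely, I would maintain a partial clique $\Rp$ (initially empty) together with the common neighbourhood $\neighcV{\Rp}$, and whenever the program queries a variable $\cliquemembervar{v}$, answer $1$ if $v$ lies in a block not yet used by $\Rp$ and $v\in\neighcV{\Rp}$ (so $v$ extends the clique), choosing among such vertices in proportion dictated by the genericness guarantees; otherwise answer $0$. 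Property~\eqref{item:hyp-generic} of \cliquedensenessNOparam{} (each $\Vp_i$ is \generic{\tp\rp}{\tp\qp}) ensures that as long as $\setsize{\Rp}\le \tp\rp$, the current clique can still be extended within each unused block, so such a random path never gets stuck and never falsifies a clause until it is quite long.

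Second, I would define the set of \emph{bottleneck pairs} of nodes. Roughly, a node $a$ of $P$ is "heavy" if the partial assignment $\beta(a)$ forced along all paths through $a$ already pins down a large partial clique — say of size close to $\rp$ — and I would pair up a node where the clique first reaches size about $\rp$ with a later node where it reaches size about $\tp\rp$; the pair records both the "small" clique $\Rp_1$ and the "large" clique $\Rp_2\supseteq\Rp_1$ that the path has committed to. The key structural claim is that every random path must pass through at least one such bottleneck pair: this is because a full path that reaches a sink must have falsified some clause, and since $\Gp$ is clique-free the only way for the search to terminate along a "clique-building" path is to have committed to enough vertices that some edge or functionality axiom is forced — which forces the clique-size milestones $\rp$ and $\tp\rp$ to be crossed. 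Here the hypothesis $\rp\ge 4k/\tp^2$ and the balanced $k$-partition are what make the milestones meaningful relative to $k$.

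Third — and this is the heart of the argument — I would bound, for each fixed bottleneck pair $(a,b)$ with associated cliques $(\Rp_1,\Rp_2)$, the probability that a random path passes through both $a$ and $b$. This is where property~\eqref{item:hyp-awesome} and the Markov-chain reasoning in the style of \cite[Theorems 3.2, 3.5]{RWY02Read-once} enter. The point is that between $a$ and $b$ the read-once program is not allowed to re-query the variables already fixed by $\beta(a)$, so the "memory" available to the program is limited; meanwhile, the set $\Wp$ of vertices the path is still free to use is an \generic{\rp}{\qp} set (because $\Rp_1$ has size $\le\rp$ and each block is generic), hence by \eqref{item:hyp-awesome} it is \robgeneric{\tp\rp}{\rp}{\qp'}{\spar} with $\qp'=\newepsilon\rp\spar^{1+\newepsilon}\log\spar$. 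The robust-genericness localizes the "bad" extensions to a small set $\Sp$ of size $\spar$, so at each of the $\approx(\tp-1)\rp$ extension steps between $a$ and $b$ the path has at least $\qp'$ essentially-free choices of which vertex to add but the node $b$ only "remembers" one specific clique; a birthday/collision bound then shows the probability of landing exactly on $\Rp_2$ decays like $(\spar/\qp')$ per step, giving an overall bound of roughly $\spar^{-\Omega(\rp(\tp-1))}=\spar^{-\Omega(k/\tp)}$ times lower-order factors. Combining via the union bound: since the random path hits some bottleneck pair with probability $1$ but each fixed pair is hit with probability at most $\spar^{-\Omega(k/\tp^2)}$ (after absorbing the various polynomial-in-$n$ and combinatorial factors into the constant), the number of bottleneck pairs — hence the size of $P$, hence the refutation length — must be at least $\Bigomega{\spar^{\newepsilon k/\tp^2}}$.

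The main obstacle I expect is the third step: making the Markov-chain estimate rigorous in the DAG (rather than tree) setting, where a node $b$ can be reached by many prefixes and the "state" of the random walk is the pair (current clique, set of still-available vertices) whose evolution must be shown to be a genuine Markov chain with the right mixing/anti-concentration behaviour. In particular one has to handle the interaction between the read-once constraint (which bounds how much the program remembers) and the adversarial choice of which variable to query next, and one has to be careful that the localization set $\Sp$ from robust genericness — which depends on $\Wp$, which in turn depends on where along the path we are — does not accumulate error over the $\Theta(k/\tp)$ extension steps. Getting the quantitative relationship $\spar\ll\qp'$ to propagate through all these steps with the stated exponent $\newepsilon k/\tp^2$, rather than something weaker, is the delicate part; everything else (the path distribution, the "every path hits a bottleneck pair" claim, the final union bound) is comparatively routine.
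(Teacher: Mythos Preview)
Your high-level four-step outline (random paths, bottleneck pairs, probability bound, union bound) matches the paper, but the instantiation of each step is substantially different from what actually works, and the differences are not cosmetic.

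\textbf{The distribution.} The paper does \emph{not} use a greedy clique-building walk. Instead, at every non-forced query it flips an $\spar^{-(1+\newepsilon)}$-biased coin, so that a $1$-answer is \emph{always} the result of a rare coin flip. This bias is the engine of the entire probability calculation: it simultaneously makes ``hitting many specific $1$'s'' cheap to bound (each costs a factor $\spar^{-(1+\newepsilon)}$) and makes ``getting many $0$'s in a row'' only mildly likely (each costs a factor $1-\spar^{-(1+\newepsilon)}$). Your deterministic ``answer $1$ whenever $v$ extends the clique'' scheme has no such tunable parameter, and the phrase ``choosing among such vertices in proportion dictated by the genericness guarantees'' does not define a distribution: the branching program picks the query, you only pick the bit.

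\textbf{The bottleneck pair.} The useful pair $(a,b)$ is \emph{not} defined by the clique reaching size $\rp$ and then $\tp\rp$. It is defined by the existence of a block index $i$ with $\setones{i}{b}=\emptyset$, $i$ not forgotten at $b$, and such that the set $\Wp=\setzeros{i}{b}\setminus\setzeros{i}{a}$ of \emph{new zeros in block $i$} is \generic{\rp}{\qp}. Property~\eqref{item:hyp-generic} is then used contrapositively: split the path into $\tp$ segments with at most $\lceil k/\tp\rceil$ ones each; if no segment had its set of new $i^*$-zeros \generic{\rp}{\qp} (where $i^*$ is the index of the clique axiom reached at the sink), the union of the witnessing bad $\Rp_j$'s would contradict that $\Vp_{i^*}$ is \generic{\tp\rp}{\tp\qp}. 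This is quite different from ``the clique must cross milestones $\rp$ and $\tp\rp$''.

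\textbf{The probability bound.} There is no birthday/collision argument. The mechanism is a dichotomy on the set $\Rp(\alphar)$ of vertices set to $1$ between $a$ and $b$. Either $\commonneighbourhood{\Rp(\alphar)\cup\unionsetones}{\Wp}\geq\qp'$, in which case all of these $\geq\qp'$ vertices of $\Wp$ must have received $0$ by a genuine coin flip (none can be forced, since $i^*$ is not forgotten and no edge/\fun axiom applies), costing $(1-\spar^{-(1+\newepsilon)})^{\qp'}\leq\spar^{-\newepsilon\rp/2}$; or the common neighbourhood is small, in which case \robgenericNOparam of $\Wp$ forces $\setsize{(\Rp(\alphar)\cup\unionsetones)\cap\Sp}\geq\rp$, hence $\setsize{\Rp(\alphar)\cap\Sp}\geq\rp/2$, and since $\setsize{\Sp}\leq\spar$ and each $1$-answer costs $\spar^{-(1+\newepsilon)}$, this event has probability at most $\binom{\spar}{\rp/2}\spar^{-(1+\newepsilon)\rp/2}\leq\spar^{-\newepsilon\rp/2}$. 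Your description of ``$\qp'$ free choices per extension step, node $b$ remembers one specific clique'' does not correspond to either branch of this dichotomy, and it is not clear how to extract the exponent $\newepsilon\rp/2$ from it.

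In short, the places you flagged as ``comparatively routine'' (the distribution and the bottleneck definition) are exactly where the real ideas live, and the step you flagged as the obstacle (Markov-chain mixing over many extension steps) is replaced in the actual proof by a single clean case split with no iteration.
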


The value of $\qp^\prime$ in Definition~\ref{def:propertyP} can be tailored in order to prove Theorem~\ref{thm:k-clique-erdos-renyi-block} for slightly larger values of $k$. For example, setting $\qp^\prime=3\newepsilon \spar^{1+\newepsilon}\log\spar$ and making the necessary modifications in the proof would yield
Theorem~\ref{thm:k-clique-erdos-renyi-block} for $k \ll \np^{1/3}$ but for a smaller range of edge densities.
A similar adjustment was done in the conference
version of this paper~\cite{ABdRLNR18Clique} to obtain Theorem~\ref{thm:k-clique-erdos-renyi} for $k \ll \np^{1/2}$.

We will spend the rest of this section establishing
\refth{thm:main}.
{Fix $\rp,\qp \in  \Rplus$ witnessing that}
$G$ is \cliquedense{k}{\tp}{\rp}{\spar}{\newepsilon}
as per  \refdef{def:propertyP}.
We first note that we can assume that $\tp\rp\leq k$ since otherwise, by
property~\ref{item:hyp-generic} of \refdef{def:propertyP},
$G$ contains a block-respecting $k$-clique and the
theorem follows immediately.

By the discussion in \refsec{sec:preliminaries}
it is sufficient to consider read-once branching
programs, since they are equivalent to regular resolution refutations,
and so in what follows this is the language in which we will phrase
our lower bound.
Thus,
for the rest of this section let $\bprogpi$ be an arbitrary, fixed
read-once branching program that solves the falsified
clause search problem for $\cliqueblock{\Gp}{k}$.
We will use the convention of referring to  ``vertices'' of the
graph~$G$ and  ``nodes'' of the branching program~$\bprogpi$ to distinguish between the two.
We sometime abuse notation and say that a vertex $v\in \Vp$
is set to $0$ or to $1$ when we mean that the corresponding variable~$\cliquemembervar{v}$ is set to $0$ or to $1$.

Recall that for a node $a$ of
$\bprogpi$, $\beta(a)$ denotes the maximal partial assignment
that is contained in every $\beta(\sigma,a)$ for all $\sigma$ such that $\mathrm{path}(\sigma)$ 
passes through $a$, where $\beta(\sigma,a)$ is the restriction of $\sigma$ to the variables that are 
queried in $\mathrm{path}(\sigma)$ in the segment of the path that goes from the 
source to $a$.
For any partial assignment~$\beta$ we write $\betaone$
to denote the partial assignment that contains exactly the variables
that are set to $1$ in $\beta$.
Clearly,
if $\beta$ falsifies
an \edgeaxiom or a \funaxiom,
then so does~$\betaone$.
Furthermore, for any $\gamma \supseteq \beta$,
if $\beta$ falsifies
an axiom
so does~$\gamma$.
We will use this monotonicity property of partial assignments
throughout the proof.

For each node $a$ of
$\bprogpi$ and each index $i \in [k]$
we  define two sets of vertices
  \begin{subequations}
    \begin{align}
\setzeros{i}{a} &= \setdescr{ u \in V_i }{ \beta(a) \text{ sets } \cliquemembervar{u} \text{
		to }  0}
  \\
  \setones{i}{a} &= \setdescr{ u \in V_i }{ \beta(a) \text{ sets } \cliquemembervar{u} \text{
		to }  1}
    \end{align}
  \end{subequations}
of $G$. Observe that for $\beta=\beta(a)$
the set of vertices referenced by variables in $\betaone$ is $\Union_i \setones{i}{a}$.

Intuitively, one can think of $\setzeros{i}{a}$ and~$\setones{i}{a}$
as the only sets of vertices in $\Vp_i$ assigned~$0$ and~$1$,
respectively, that are ``remembered'' at the
node~$a$ (in the language of resolution, they correspond to negative and positive occurrences
of variables in the clause $D_a$ associated with the node $a$).
Other assignments to vertices in $\Vp_i$ encountered along some path
to~$a$ have been ``forgotten'' and may not be queried any more on any path starting at $a$.
Formally, we say that a vertex $v$ is \emph{forgotten at~$a$} if
there is a path from the source of~$\bprogpi$ to~$a$ passing
through a node~$b$ where $v$ is queried, but $v$ is not
in~$\setzeros{i}{a}$ nor in~$\setones{i}{a}$.
Furthermore, we say index~$i$ \emph{is forgotten at~$a$} if
some vertex~$v\in \Vp_i$ is forgotten at $a$.
Of utter importance is the fact that these notions are persistent: if a variable
or an index is forgotten at a node $a$, then it will also be the case for any node
reachable from $a$ by a path.
We say that a path in~$\bprogpi$
\emph{ends in the $i$th  \cliqueaxiom}
if the clause that labels its last node is the
\cliqueaxiom~\eqref{eq:clique_defined3}
of $\cliqueblock{\Gp}{k}$ with index~$i$.
The above observation implies that
the index~$i$ cannot be forgotten at any node along such a path.

We establish our lower bound via a bottleneck counting argument for
paths in~$\bprogpi$.
To this end, let us define a distribution $\Ddist$ over paths in~$\bprogpi$
by the following random process.  The path starts at the source and ends whenever it
reaches a sink of $\bprogpi$.  At an internal node $a$ with successor nodes
$a^0$ and $a^1$, reached by edges labelled $0$ and $1$ respectively,
the process proceeds as follows.
\begin{enumerate}
\itemsep0em
\item \label{item:forced-choice1}
 If $X(a) = \cliquemembervar{u}$ for $u\in \Vp_i$ and $i$ is forgotten at $a$
then the path proceeds via the edge labelled~$0$ to~$a^0$.
\item \label{item:forced-choice2}
 If $X(a) = \cliquemembervar{u}$ and
 $\beta(a) \cup \set{ \cliquemembervar{u} = 1 }$ falsifies
an \edgeaxiom~\eqref{eq:clique_edge3}
or a \funaxiom~\eqref{eq:functionality_axiom3},
then the path proceeds to~$a^0$.
\item \label{item:free-choice}
Otherwise, an independent
$\spar^{-(1+\newepsilon)}$\nobreakdash-biased coin is
tossed with outcome~$\bpbit \in \set{0,1}$ and the random path
proceeds to~$a^{\bpbit}$.
\end{enumerate}
We say that in cases~\ref{item:forced-choice1} and~\ref{item:forced-choice2}
the answer to the query
$X(a)$ is \emph{forced}.
Note that any path $\alpha$ in the support of~$\Ddist$ must end
in a \cliqueaxiom since $\alpha$ does not
falsify any edge or \funaxiom by item~\ref{item:forced-choice2}
in the construction.
Moreover, a property that will be absolutely crucial is that only
answers $0$ can be forced---answers~$1$ are always the result of a
coin flip.

\begin{claim} \label{claim:atmostk}
Every path in the support of $\Ddist$ sets at most $k$ variables
to $1$.
\end{claim}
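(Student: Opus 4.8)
The plan is to prove the stronger statement that, for every index $i\in[k]$, every path $\alpha$ in the support of $\Ddist$ sets at most one vertex of $\Vp_i$ to $1$; since $\Vp_1,\dots,\Vp_k$ is a $k$-partition of $\Vp$, summing over $i$ then yields the bound $k$. Before the main step I would record the two structural facts already stressed in the construction of $\Ddist$: first, an answer $1$ is never \emph{forced}, so whenever $\alpha$ sets a variable to $1$ this must have happened in case~\ref{item:free-choice}, and in particular neither case~\ref{item:forced-choice1} nor case~\ref{item:forced-choice2} applied at that node; second, being forgotten is persistent along $\bprogpi$, so it suffices to exhibit forgetting at the node currently under consideration.

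The main step would be a proof by contradiction. Fix a path $\alpha$ in the support of $\Ddist$ and an index $i\in[k]$, and suppose $\alpha$ sets two distinct vertices $u,v\in\Vp_i$ to $1$. Since $\bprogpi$ is read-once, $u$ and $v$ are each queried exactly once along $\alpha$, so I may assume $u$ is queried at a node $b$ occurring before the node $a$ at which $v$ is queried. By the first fact above, neither case~\ref{item:forced-choice1} nor case~\ref{item:forced-choice2} applied at $a$. From case~\ref{item:forced-choice2} not applying, $\beta(a)\cup\{\cliquemembervar{v}=1\}$ does not falsify the \funaxiom{} $\lnot\cliquemembervar{u}\lor\lnot\cliquemembervar{v}$, hence $\beta(a)$ does not set $\cliquemembervar{u}=1$ and so $u\notin\setones{i}{a}$. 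On the other hand $\beta(a)$ is contained in the partial assignment determined by the segment of $\alpha$ from the source to $a$, and that assignment sets $\cliquemembervar{u}=1$ because $b$ precedes $a$ on $\alpha$; hence $\beta(a)$ does not set $\cliquemembervar{u}=0$ and so $u\notin\setzeros{i}{a}$. But $u$ was queried at $b$, which lies on a path from the source to $a$, so $u$ is forgotten at $a$ and therefore index $i$ is forgotten at $a$. This makes case~\ref{item:forced-choice1} applicable at $a$ (since $X(a)=\cliquemembervar{v}$ with $v\in\Vp_i$), forcing the answer to $0$ and contradicting the assumption that $\alpha$ sets $\cliquemembervar{v}=1$.

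The argument is short, so I do not expect a genuine obstacle; the only thing to get right is the case analysis for where $u$ sits relative to $\beta(a)$. One must exclude that $u$ is still ``remembered as $1$'' (ruled out by case~\ref{item:forced-choice2}) and that $u$ is ``remembered as $0$'' (ruled out by the monotonicity of $\beta(a)$ inside the source-to-$a$ segment of $\alpha$, together with read-onceness), leaving exactly the possibility that $u$ --- and hence the index $i$ --- is forgotten, which is precisely the situation case~\ref{item:forced-choice1} is designed to react to. In particular the proof uses all three cases of the random process and makes no use of the structure of $G$.
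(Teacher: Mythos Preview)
Your proof is correct and follows essentially the same approach as the paper: show that within each block $\Vp_i$ at most one vertex can be set to~$1$, because once some $u\in\Vp_i$ is set to~$1$, any later query $\cliquemembervar{v}$ with $v\in\Vp_i$ is forced to~$0$ either by the \funaxiom{} (case~\ref{item:forced-choice2}) or because $i$ has been forgotten (case~\ref{item:forced-choice1}). The paper's proof simply asserts this disjunction directly, whereas you unpack it by contradiction---assuming neither forcing case applies and then showing that the failure of case~\ref{item:forced-choice2} together with monotonicity of $\beta$ forces $u$ to be forgotten, making case~\ref{item:forced-choice1} apply after all---but the content is identical.
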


\begin{proof}
Let $\alpha$ be a path in the support of $\Ddist$.
We argue that for each $i\in [k]$ at most one vertex $u\in \Vp_i$
is such that the variable $\cliquemembervar{u}$ is set to $1$ on $\alpha$.
Let $a$ and $b$ be two nodes that appear in this
order in $\alpha$.
If for some $i\in[k]$, and for some $u,v\in \Vp_i$, $\cliquemembervar{u}$ is set to $1$ by $\alpha$ at node $a$ and
$\cliquemembervar{v}$ is queried at $b$, then $v \neq u$ by regularity and,
by definition of $\Ddist$, the answer
to query $\cliquemembervar{v}$ will
be forced to~$0$, either to avoid violating a
\fun or an \edgeaxiom,
or because $i$ is forgotten at $b$.
\end{proof}
Let us call a pair
$(a,b)$ of nodes of $\bprogpi$ \emph{\useful} if
there exists an index~$i$ such that
$\setones{i}{b}=\emptyset$, $i$ is not forgotten at $b$, and
  the set $\setzeros{i}{b}
  \setminus \setzeros{i}{a}$ is \generic{\rp}{\qp}. In particular, if $a$ appears before $b$ in some path, then $V_i^1(a)=\emptyset$ and $V_i^0(a)\subseteq V_i^0(b)$.
For each \useful pair $(a,b)$,
let $i(a,b)$ be an arbitrary but fixed index witnessing
that $(a,b)$ is \useful.
A path is said to \emph{\frugally} traverse a \useful pair $(a,b)$ if
it goes through $a$ and $b$ in that order and sets at most
$\lceil{k/\tp}\rceil$ variables to $1$ between $a$ and $b$ (with $a$
included and $b$ excluded).

As already mentioned, the proof of \refth{thm:main} is based on a bottleneck counting argument in the spirit of~\cite{Haken85Intractability}, with the twist that we consider pairs of bottleneck nodes.
To establish the theorem we make use of the following two lemmas which will be proven
subsequently.

\begin{lemma}
  \label{lemma:legitimate}
  Every path in the support of $\Ddist$ \frugally traverses a
  \useful pair.
\end{lemma}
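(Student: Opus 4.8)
The plan is to trace a random path $\alpha\sim\Ddist$ from the source and, along the way, maintain a running decomposition of $\alpha$ into at most $\tp$ consecutive segments, each of which sets at most $\lceil k/\tp\rceil$ variables to $1$. By Claim~\ref{claim:atmostk} the whole path sets at most $k$ variables to $1$, so such a decomposition into $\tp$ pieces always exists; the issue is to locate the endpoints of these segments at nodes of the branching program that, together with suitable earlier nodes, form \useful pairs. So the heart of the argument is to identify, for a well chosen segment, a pair $(a,b)$ such that $(a,b)$ is \useful and $\alpha$ sets at most $\lceil k/\tp\rceil$ variables to $1$ between $a$ and $b$.

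First I would recall that $\alpha$ ends in some \cliqueaxiom, say the one with index $i^\star$ (item~\ref{item:forced-choice2} in the construction of $\Ddist$ rules out ending in an edge or \funaxiom), and that $i^\star$ is therefore never forgotten along $\alpha$; moreover $\setones{i^\star}{b}=\emptyset$ at the last node $b$ because no vertex of $\Vp_{i^\star}$ was ever set to $1$ (if one had been, the \cliqueaxiom of index $i^\star$ would be satisfied, contradicting that it labels the sink). Now walk backwards along $\alpha$ from this sink: the sets $\setzeros{i^\star}{\cdot}$ only grow as we move forward, i.e.\ shrink as we move backward, and at the source $\setzeros{i^\star}{\cdot}=\emptyset$. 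Partition $\alpha$ into the $\leq\tp$ segments described above; for at least one of these segments, call its starting node $a$ and its ending node $b$ (taking $b$ to be the sink for the last segment), I want the set $\setzeros{i^\star}{b}\setminus\setzeros{i^\star}{a}$ to be \generic{\rp}{\qp}. The point is that since $\Vp_{i^\star}$ itself is \generic{\tp\rp}{\tp\qp} by property~\ref{item:hyp-generic} of clique-denseness, and since the vertices of $\Vp_{i^\star}$ not in $\setzeros{i^\star}{b}$ are exactly the vertices still ``available'' to be the $i^\star$th clique member at the sink, a counting/pigeonhole argument over the $\leq\tp$ segments shows that one segment must contribute an $\rp$-neighbour-dense chunk of zeros; concretely, if each of the $\tp$ increments $\setzeros{i^\star}{b_j}\setminus\setzeros{i^\star}{a_j}$ failed to be \generic{\rp}{\qp}, then for each there is a witnessing set $R_j$ of size $\leq\rp$ with few common neighbours inside that chunk, and one aggregates these into a set $R$ of size $\leq\tp\rp$ witnessing that $\Vp_{i^\star}$ is not \generic{\tp\rp}{\tp\qp}, a contradiction. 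I would need to be a little careful here: the natural quantity that telescopes is $\Neighcsize{R}{\Vp_{i^\star}\setminus\setzeros{i^\star}{b}}$ versus the sum of the per-segment deficits, so the aggregation has to be set up so that a single $R$ (or at most one per segment, unioned) simultaneously witnesses small common neighbourhood across all $\tp$ chunks — this is the step I expect to require the most care.

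Having found such a segment, set $a$ to its start and $b$ to its end. Then $(a,b)$ is \useful with witness $i(a,b)=i^\star$: indeed $\setones{i^\star}{b}=\emptyset$, $i^\star$ is not forgotten at $b$, and $\setzeros{i^\star}{b}\setminus\setzeros{i^\star}{a}$ is \generic{\rp}{\qp} by the choice of segment. Moreover, by construction of the segment decomposition, $\alpha$ sets at most $\lceil k/\tp\rceil$ variables to $1$ strictly between $a$ and $b$ (with $a$ included, $b$ excluded), which is exactly the definition of $\alpha$ \frugally traversing $(a,b)$. This completes the proof.

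\textbf{Main obstacle.} The delicate point is the aggregation step: combining the $\tp$ per-segment ``non-denseness'' witnesses into one set of size at most $\tp\rp$ that contradicts the \generic{\tp\rp}{\tp\qp}-ness of $\Vp_{i^\star}$, while correctly accounting for the fact that the relevant common neighbourhoods are taken inside the shrinking sets $\Vp_{i^\star}\setminus\setzeros{i^\star}{b_j}$ rather than inside a fixed set. One must check that the deficits genuinely add up and that the union of the witnessing sets, restricted in size, still forces $\Neighcsize{R}{\Vp_{i^\star}} < \tp\qp$; handling the book-keeping of ``which zeros belong to which segment'' and the persistence of forgotten indices (so that $i^\star$ stays un-forgotten throughout) is where the argument needs to be written out carefully rather than sketched.
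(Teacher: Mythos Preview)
Your approach is exactly the paper's. The ``main obstacle'' you flag is simpler than you fear: because the path ends in the $i^\star$th \cliqueaxiom, at the sink $\setzeros{i^\star}{a_\tp}=\Vp_{i^\star}$, so the increments $\Wp_j=\setzeros{i^\star}{a_j}\setminus\setzeros{i^\star}{a_{j-1}}$ \emph{partition} $\Vp_{i^\star}$ (the common neighbourhoods are taken inside these fixed disjoint $\Wp_j$, not inside shrinking complements $\Vp_{i^\star}\setminus\setzeros{i^\star}{b_j}$ as you wrote), and with $R=\bigcup_j R_j$ one simply computes $\Neighcsize{R}{\Vp_{i^\star}}=\sum_j\Neighcsize{R}{\Wp_j}\le\sum_j\Neighcsize{R_j}{\Wp_j}\le \tp\qp$, contradicting \generic{\tp\rp}{\tp\qp}-ness of $\Vp_{i^\star}$.
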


\begin{lemma}
  \label{lemma:probability}
  For every \useful pair
  $(a,b)$, the probability that a random $\randomvaralpha$ chosen from
  $\Ddist$ \frugally traverses $(a,b)$ is at most
  $2\spar^{-\newepsilon\rp/2}$.
\end{lemma}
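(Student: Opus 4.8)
The plan is to analyze the random-path distribution $\Ddist$ closely on the segment of $\randomvaralpha$ lying between $a$ and $b$, invoking the combinatorial hypothesis on $\Gp$ at exactly one point. Fix a \useful pair $(a,b)$, set $i=i(a,b)$ and $\Wp=\setzeros{i}{b}\setminus\setzeros{i}{a}$; by the definition of usefulness $\Wp$ is \generic{\rp}{\qp}, so by property~\ref{item:hyp-awesome} of \cliquedensenessNOparam{} it is also \robgeneric{\tp\rp}{\rp}{\qp'}{\spar} with $\qp'=\newepsilon\rp\spar^{1+\newepsilon}\log\spar$; fix a witness set $\Sp\subseteq\Vp$, $\setsize{\Sp}\le\spar$, as provided by \refdef{def:robgeneric}. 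The goal is to show $\Pr_{\randomvaralpha\sim\Ddist}\bigl[\randomvaralpha\text{ \frugally traverses }(a,b)\bigr]\le 2\spar^{-\newepsilon\rp/2}$.

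First I would record the structural consequences of \frugally traversing $(a,b)$. Since $i$ is not forgotten at $b$, persistence of forgetting forces $i$ to be not forgotten — and $\setones{i}{c}=\emptyset$ — at every node $c$ on the segment from $a$ to $b$; in particular no vertex of $\Vp_i$ is set to $1$ there. Moreover, because $i$ is not forgotten at $a$, every vertex of $\Vp_i$ queried before $a$ on such a path already belongs to $\setzeros{i}{a}$ (a queried $\Vp_i$-vertex outside $\setzeros{i}{a}\cup\setones{i}{a}$ would be forgotten at $a$, hence so would $i$). Therefore every $u\in\Wp$ has $\cliquemembervar{u}$ queried — for the first and only time — strictly between $a$ and $b$ and answered $0$ there, and by the description of $\Ddist$ this $0$ is either \emph{forced}, because $u$ is a non-neighbour of a vertex currently remembered as a $1$ (so that answering $\cliquemembervar{u}=1$ would falsify an \edgeaxiom), or is the outcome of a fresh $\spar^{-(1+\newepsilon)}$-biased coin.

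The core of the argument is then a dichotomy. Let $\Rp$ be the set of vertices that, at some node on the segment from $a$ to $b$, are remembered as having been set to $1$; these are exactly the vertices that can force members of $\Wp$ to $0$, and — using that a path which \frugally traverses $(a,b)$ sets at most $\lceil k/\tp\rceil$ variables to $1$ between $a$ and $b$, together with the choice of the \useful pair and the inequality $\tp\rp\ge 4k/\tp\ge\lceil k/\tp\rceil$ — one obtains $\setsize{\Rp}\le\tp\rp$. If $\setsize{\Rp\cap\Sp}\ge\rp$: each of the at most $\spar$ vertices of $\Sp$ is queried at most once on $\randomvaralpha$ and, conditioned on the prior history, answered $1$ with probability at most $\spar^{-(1+\newepsilon)}$, so the number of $\Sp$-vertices set to $1$ is stochastically dominated by $\mathrm{Bin}(\spar,\spar^{-(1+\newepsilon)})$, and this occurs with probability at most $\binom{\spar}{\rp}\spar^{-(1+\newepsilon)\rp}\le\spar^{-\newepsilon\rp}\le\spar^{-\newepsilon\rp/2}$. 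If $\setsize{\Rp\cap\Sp}<\rp$: since also $\setsize{\Rp}\le\tp\rp$, \refdef{def:robgeneric} applied to $\Rp$ (read contrapositively) gives $\Neighcsize{\Rp}{\Wp}\ge\qp'$, so at least $\qp'$ members of $\Wp$ are common neighbours of all of $\Rp$ and hence never forced; the $0$-answers to these $\qp'$ variables are then the outcomes of $\qp'$ distinct independent coin flips that must all come up $0$, an event of probability at most $(1-\spar^{-(1+\newepsilon)})^{\qp'}\le\exp\!\bigl(-\qp'\spar^{-(1+\newepsilon)}\bigr)=\exp(-\newepsilon\rp\log\spar)=\spar^{-\newepsilon\rp}\le\spar^{-\newepsilon\rp/2}$. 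Adding the two cases yields $2\spar^{-\newepsilon\rp/2}$, as required.

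The step I expect to be the main obstacle is turning these two probability estimates into a rigorous argument. Since which variable is queried next, and whether a given query is forced, both depend on earlier coin outcomes, everything must be done ``on-line'': one reveals the coins in the order the path visits them and controls them by a supermartingale/coupling argument of the type in \cite[Theorems 3.2, 3.5]{RWY02Read-once}; in particular the $\qp'$ coin flips that are required to come up $0$ must be singled out without circularity. A secondary but genuinely delicate point is the inequality $\setsize{\Rp}\le\tp\rp$: one must identify precisely which $1$'s can force members of $\Wp$ and reconcile this with the weaker bound $\setsize{\Rp}\le k$ coming from Claim~\ref{claim:atmostk} and with $\tp\rp\le k$, which is exactly the point at which the specific \useful pair produced by Lemma~\ref{lemma:legitimate} is needed.
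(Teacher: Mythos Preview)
Your overall strategy and your dichotomy on $\setsize{\Rp\cap\Sp}$ are essentially the paper's, but the inequality $\setsize{\Rp}\le\tp\rp$---the step you yourself flag as ``delicate''---is a genuine gap, and your proposed fix does not work. Your set $\Rp$ contains not only the at most $\lceil k/\tp\rceil\le\tp\rp/2$ vertices newly set to~$1$ between $a$ and~$b$, but also all of $\unionsetones=\bigcup_j\setones{j}{a}$. Nothing in the definition of a \useful pair bounds $\setsize{\unionsetones}$; it can be as large as~$k$, and we are precisely in the regime $\tp\rp\le k$. Your suggested remedy---appealing to ``the specific \useful pair produced by Lemma~\ref{lemma:legitimate}''---does not help either: the pairs $(a_{j-1},a_j)$ constructed there carry no control on $\setsize{V^1(a_{j-1})}$ beyond the global bound~$k$, and in any case Lemma~\ref{lemma:probability} is stated for \emph{every} \useful pair and must be proved as such, since the union bound in the proof of Theorem~\ref{thm:main} ranges over all of them.

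The paper closes exactly this gap with a preliminary case split on $\setsize{\unionsetones}$. If $\setsize{\unionsetones}>\rp/2$, then already $\Pr[\randomvaralpha\text{ reaches }a]\le(\spar^{-(1+\newepsilon)})^{\rp/2}\le\spar^{-\newepsilon\rp/2}$, since every $1$-answer is the result of a fresh biased coin and is irreversible. If $\setsize{\unionsetones}\le\rp/2$, then one finally has $\setsize{\Rp}\le\setsize{\unionsetones}+\lceil k/\tp\rceil\le\rp/2+\tp\rp/2\le\tp\rp$, and from here your dichotomy goes through; the paper phrases it as a split on whether $\Neighcsize{\Rp}{\Wp}<\qp'$ and lowers the threshold for $\setsize{\cdot\cap\Sp}$ from $\rp$ to $\rp/2$ to account for the possible contribution of $\unionsetones$, which is why $\rp/2$ rather than $\rp$ appears in the final exponent. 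Your concern about an ``on-line'' supermartingale argument is secondary by comparison: the paper's treatment is quite direct, relying only on the irreversibility of decisions in a read-once program and a straightforward sequential-conditioning bound, with no martingale machinery beyond that.
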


Combining the above lemmas, it is immediate to prove \refth{thm:main}.
By Lemma~\ref{lemma:legitimate} the probability that a random path
$\randomvaralpha$ sampled from $\Ddist$ \frugally traverses some \useful
pair is $1$. By Lemma~\ref{lemma:probability}, for any fixed
\useful pair $(a,b)$, the probability that a random
$\randomvaralpha$ \frugally traverses $(a,b)$ is at most
$2\spar^{-\newepsilon\rp/2}$. By a standard union bound argument,
it follows that the number of \useful pairs is
at least $\frac{1}{2}\spar^{\newepsilon\rp/2}$, so the number of nodes in $\bprogpi$
cannot be smaller than~$\Bigomega{\spar^{\newepsilon\rp/4}} \geq \Bigomega{\spar^{\newepsilon k/\tp^2}}$
(recall that $r\geq 4k/t^2$ according to Definition~\ref{def:propertyP}).

\smallskip
To conclude the proof it remains only to establish
Lemmas~\ref{lemma:legitimate} and~\ref{lemma:probability}.
\begin{proof}[Proof of Lemma~\ref{lemma:legitimate}]
  Consider any
  path in the support of $\Ddist$. As we already remarked, this path ends in the $i^*$th \cliqueaxiom
  for some $i^* \in [k]$ which in particular implies that $V^1_{i^\ast}(b)=\emptyset$ and that $i^\ast$
  is not forgotten at any $b$ along this path.
  By Claim~\ref{claim:atmostk}, the path sets
  at most~$k$ variables to~$1$ and hence we can split it into $t$ pieces by
  nodes $a_0,a_1,\ldots,a_t$ ($a_0$ is the source, $a_t$ the sink) so that
  between~$a_j$ and~$a_{j+1}$ at most $\lceil k/t\rceil$ variables are set to 1.
  It remains to prove that for at least one $j \in [t]$ the set
\begin{equation}
  \label{eq:Wj}
  \Wp_j = \setzeros{i^*}{a_j}
  \setminus \setzeros{i^*}{a_{j-1}}
\end{equation}
is \generic{\rp}{\qp}.
Note that this will prove Lemma~\ref{lemma:legitimate}
since by construction $(a_{j-1},a_j)$ is then a
pair that is \frugally traversed by the path.

Towards contradiction, assume instead that no $\Wp_j$ is \generic{\rp}{\qp},
\ie that for all $j\in [\tp]$ there exists a set of
vertices $\Rp_j\subseteq V$ with
$\setsize{\Rp_j} \leq \rp$ such that
$\Setsize{\commonneighbourhood{\Rp_j}{\Wp_j}} \leq q$.
Let $\Rp = \bigcup_{j \in [\tp]} \Rp_j$.
Since the path ends in the $i^*$th \cliqueaxiom
we have $\setzeros{i^*}{a_\tp} = \Vp_{i^*}$. It follows that the sets $\Wp_1,\ldots,\Wp_\tp$
in~\refeq{eq:Wj} form a partition of~$\Vp_{i^*}$, and therefore
\begin{equation}
  \Setsize{\commonneighbourhood{\Rp}{\Vp_{i^*}}}
  = \sum_{j \in [\tp]} \Setsize{\commonneighbourhood{\Rp}{\Wp_j}}
  \leq \sum_{j \in [\tp]}
  \Setsize{\commonneighbourhood{\Rp_j}{\Wp_j}}
  \leq \tp\qp
  \eqperiod
\end{equation}
Since
$\setsize{\Rp} \leq
\sum_{j \in [\tp]} \setsize{\Rp_j} \leq \tp\rp$
this contradicts the assumption that $\Vp_{i^*}$ is
\generic{\tp\rp}{\tp\qp}.
Lemma~\ref{lemma:legitimate} follows.
\end{proof}

\begin{proof}[Proof of Lemma~\ref{lemma:probability}]
  Fix a \useful pair $(a,b)$. Let $\eventE$ denote the event
  that a random path sampled from~$\Ddist$
  \frugally traverses $(a,b)$. Let
     $i^* = i(a,b)$,
     $\unionsetones = \bigcup_{j \in
       [k]} \setones{j}{a}$, and
     $\Wp = \setzeros{i^*}{b} \setminus \setzeros{i^*}{a}$.
Notice that~$\Wp$ is guaranteed to be  \generic{\rp}{\qp} by our definition of $i(a,b)$.
  Since   $\Gp$ is \cliquedense{k}{\tp}{\rp}{\spar}{\newepsilon} by assumption, this
  implies that $\Wp$ is \robgeneric{\tp\rp}{\rp}{\qp'}{\spar}, and we let
  $\Sp$ be the set that
  witnesses this as per
  \refdef{def:robgeneric}.
 We bound the probability of the event
  $\eventE$ by a case analysis based on the size of the
  set~$\unionsetones$.
  We remark that
  all probabilities in the calculations that follow are over
  the choice
  of~$\sample{\randomvaralpha}{\Ddist}$.

 {\textbf{Case 1 ($\setsize{\unionsetones} > \rp/2$)}}: In this case, we simply prove
 that already the probability of reaching $a$ is small.
By definition of $\unionsetones$, we have that $\setsize{\betaone(a)} = \setsize{\unionsetones}$.
Recall that every answer $1$ is necessarily the result of a $\spar^{-(1+\newepsilon)}$-biased
coin flip, and that all these decisions are irreversible. That is, if a path
ever decides to set a variable in $V^1(a)$ to 0, then its case is lost and it is guaranteed to miss $a$. Thus we
can upper bound
the probability of the event $\eventE$ by the probability
 that a random $\randomvaralpha$ passes through $a$, and, in particular,
 by the probability of setting all variables in $\betaone(a)$ to~$1$
  as follows:
 \begin{align}
 \prob{\eventE }
 & \leq
 \prob{ \randomvaralpha \text{ passes through } a}
 \\&
 \leq  \big(\spar^{-(1+\newepsilon)}\big)^{\setsize{\betaone(a)}}
 \\&
  \leq   \spar^{-\newepsilon\setsize{\unionsetones}}
  \\&
  \leq  2\spar^{-\newepsilon\rp/2}
\eqperiod
  \label{eqn:oneside}
\end{align}

{\textbf{Case 2 ($\setsize{\unionsetones} \leq \rp/2$)}}:
For every path $\alpha$, let $\Rp(\alpha)$ denote the set of
vertices $u$ set to $1$ by the path~$\alpha$
at some node between
$a$ and $b$ (with $a$ included and $b$ excluded);
note that $\Rp(\alpha) = \emptyset$ if
$\alpha$ does not go through $a$ and $b$,
and that
$\setsize{\Rp(\alpha)}\leq \lceil{k/\tp}\rceil$ for all paths $\alpha$ that satisfy the
event~$\eventE$.
For the sets
\begin{subequations}
\begin{align}
  \mathcal{\Rp}_0
  &= \set{ \Rp : \setsize{\Rp} \leq \lceil{k/\tp}\rceil \text{ and }
    \Setsize{\commonneighbourhood{\Rp \cup \unionsetones}{\Wp}} < \qp'}
  \\
  \mathcal{\Rp}_1
  &= \set{ \Rp : \setsize{\Rp} \leq \lceil{k/\tp}\rceil \text{ and }
    \Setsize{\commonneighbourhood{\Rp \cup \unionsetones}{\Wp}} \geq \qp'}
\end{align}
\end{subequations}
we have that
\begin{equation}
  \prob{\eventE } = \prob{\eventE \text{ and }
  \randomvarR \in \mathcal{\Rp}_0 } +
  \prob{\eventE \text{ and } \randomvarR \in \mathcal{\Rp}_1 }\eqperiod
  \label{eqn:split}
\end{equation}

The first term in \eqref{eqn:split} is bounded from above by the
probability of
  $\randomvarR \in \mathcal{\Rp}_0$.
  Note that $\setsize{\Rp} \leq
  \lceil{k/\tp}\rceil \leq 2k/\tp \leq \tp\rp/2$ (since $\rp\geq 4k/\tp^2$) for $\Rp\in\mathcal{\Rp}_0$.
  Hence we have $|R\cup V^1(a)|\leq \tp\rp/2+\rp/2\leq \tp\rp$ and
  therefore
  $\setsize{(\Rp \cup \unionsetones) \cap \Sp} \geq \rp$ by
  the choice of~$\Sp$. Thus, the probability of $\randomvarR \in
  \mathcal{\Rp}_0$ is bounded by the probability that $\setsize{\randomvarR \cap \Sp}
  \geq \rp/2$ since $\setsize{\unionsetones} \leq
  \rp/2$. But since $S$ is small, we can now apply the union bound and conclude that
  \begin{align}
    \prob{\eventE \text{ and } \randomvarR \in \mathcal{\Rp}_0 }
    & \leq
    \prob{ \randomvarR \in \mathcal{\Rp}_0 }
    \\ & \leq
    \prob{ \setsize{\randomvarR \cap \Sp} \geq \rp/2 }
    \\ &
    \leq \binom{\setsize{\Sp}}{\rp/2 }(\spar^{-(1+\newepsilon)})^{\rp/2 } \label{eqn:irreversibility}
    \\ &
    \leq
    \setsize{\Sp}^{\rp/2 }\spar^{-(1+\newepsilon)\rp/2}
    \\ &
    \leq
    \spar^{-\newepsilon\rp/2} \eqcomma \label{eqn:theother}
  \end{align}
  where for \eqref{eqn:irreversibility} we used the same ``irreversibility'' argument as in Case 1.

  We now bound the second term in \eqref{eqn:split}. First note that, by definition of~$W\!$, if~$\alpha$ is a path that passes through~$a$ and~$b$ in this order, then
all $u \in \Wp$ must be set to~$0$ in~$\alpha$ at
  some node between $a$ and $b$. For each path in the support of $\Ddist$
  that passes through~$a$ and~$b$,
  some of the vertices in $\Wp$ will be set to
  zero as a result of a coin flip and others will be forced choices.

     Fix a path $\alpha$ contributing to the second term in \eqref{eqn:split}. We claim that along
     this path all the $\geq q'$ variables in $\commonneighbourhood{\Rp(\alpha) \cup \unionsetones}{\Wp}$ are set to $0$ as a result of a coin flip.
     Indeed, since $\setones{i^*}{b}=\emptyset$ and $i^*$ is not forgotten at $b$, by the monotonicity
     property the same holds for every node along $\alpha$ before~$b$. This implies that the answer to
     a query of the form $\cliquemembervar{u}\ (u\in W)$ made along $\alpha$ cannot be forced by neither item~\ref{item:forced-choice1}
     (forgetfulness) in the definition of $\Ddist$ nor by a functionality axiom. Moreover, since $V^1(c)\subseteq \Rp(\alpha) \cup \unionsetones$
     for any node $c$ on the path $\alpha$ between $a$ and~$b$, it holds that all variables $\cliquemembervar{u}$
     with $u\in\commonneighbourhood{\Rp(\alpha) \cup \unionsetones}{\Wp}$ can not be forced to $0$ by an edge axiom either.

     The analysis of the second term in \eqref{eqn:split} is completed by the same type of argument
     as in Case 1, where we again use the fact that, due to
     the read-once property of the branching program, the decisions that the
     random path makes are irreversible:
       \begin{align}
    \prob{ \eventE \text{ and } \randomvarR \in \mathcal{\Rp}_1 }
    &\leq
    \prob{\randomvaralpha\ \text{flips}\ \geq \qp'\ \text{coins}\ \text{and gets 0-answers}}
    \\
    &\leq (1-\spar^{-(1+\newepsilon)})^{\qp'}
    \\
    & \leq \spar^{-\newepsilon\rp/2} \eqperiod \label{eqn:theother2}
  \end{align}

  Adding~\eqref{eqn:theother} and~\eqref{eqn:theother2} we obtain the lemma.
\end{proof}
\section{Random Graphs Are Almost Surely Clique-Dense}
\label{sec:lowerbound2}

In this section we show that \aas an \ErdosRenyi
random graph $\sample{\randomvarG}{\randG{\np}{p}}$ is
\cliquedense{k}{\tp}{\rp}{\spar}{\newepsilon} for the right choice of
parameters.

\begin{theorem}
	\label{thm:erdos-renyi-clique-dense}
	For any real constant $\oldxi \in (0,1/4)$,
        any sufficiently large integer~$n$,
        any positive integer~$k\leq {n^{1/4-\oldxi}}$,
        and any real $\oldeta>1$,
        if $\sample{\randomvarG}{\randG{\np}{\np^{-2\oldeta/(k-1)}}}$ is an
        \ErdosRenyi random graph then
        with probability at least $1-\exp(-\sqrt{\np})$ it holds that
        $\randomvarG$ is
         \cliquedense{k}{\tp}{\rp}{\spar}{\newepsilon} with
	$\tp={32\oldeta}/{\oldxi}$ and $\spar= \sqrt{\np}$.
\end{theorem}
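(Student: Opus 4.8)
The plan is to verify the two conditions of \refdef{def:propertyP} for the stated values $\tp = 32\oldeta/\oldxi$ and $\spar = \sqrt\np$, choosing the remaining parameters as $\rp = \lceil 4k/\tp^2\rceil$ (the smallest value the definition permits), $\newepsilon = \oldxi$, and $\qp$ slightly below $\tfrac{1}{2\tp}\lfloor \np/k\rfloor\, p^{\tp\rp}$, the expected number of common neighbours a set of size $\tp\rp$ has inside one block, where $p = \np^{-2\oldeta/(k-1)}$. With these choices the bookkeeping works out (the numerics do force $k$ to exceed a constant depending on $\oldeta$ and $\oldxi$, but outside that range the graph is so sparse that it trivially contains no $k$-clique): one has $\rp \ge 4k/\tp^2$ and $\tp\rp \le k$; all common neighbourhoods inside $\Vp$ of sets of size at most $\tp\rp$ have size of order $\np\, p^{\tp\rp} = \np^{1-\bigoh{\oldxi}}$; and $\qp' = \newepsilon\rp\,\spar^{1+\newepsilon}\log\spar$, being of order $\rp\,\np^{(1+\newepsilon)/2}\log\np = \np^{3/4+\bigoh{\oldxi}}$, is comfortably below both $\qp$ and $\np^{1-\bigoh{\oldxi}}$.

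Condition~\eqref{item:hyp-generic} is routine. Fix a block $\Vp_i$, of size at least $\lfloor\np/k\rfloor$, and a set $\Rp$ with $\setsize{\Rp} = m \le \tp\rp$. Then $\Setsize{\commonneighbourhood{\Rp}{\Vp_i}}$ is a sum of at least $\lfloor\np/k\rfloor - m$ independent Bernoulli variables of bias $p^{m} \ge p^{\tp\rp}$, so its expectation is at least $(\lfloor\np/k\rfloor - \tp\rp)p^{\tp\rp} = \np^{1-\bigoh{\oldxi}}$, and by the multiplicative Chernoff bound it drops below $\tp\qp$ with probability at most $\exp(-\np^{1-\bigoh{\oldxi}}/8)$. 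A union bound over the $k \le \np^{1/4}$ blocks and the at most $\np^{\tp\rp+1} = \exp(\bigoh{k\log\np}) = \exp(\littleoh{\sqrt\np})$ choices of $\Rp$ leaves failure probability $\exp(-\Omega(\np^{1-\bigoh{\oldxi}})) = \littleoh{\exp(-\sqrt\np)}$, and on the complementary event every $\Vp_i$ is \generic{\tp\rp}{\tp\qp}.

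Condition~\eqref{item:hyp-awesome} is the crux, and it is delicate for a structural reason: one \emph{cannot} afford a union bound over the (up to $2^\np$) candidate \generic{\rp}{\qp} sets $\Wp$, since every relevant quantity ($\qp$, $\qp'$, common-neighbourhood sizes) is $\np^{1-\Omega(1)}$ rather than $\littleomega{\np}$, so no per-$\Wp$ event is rare enough to survive. The plan is to isolate a graph event $\mathcal{E}$ referring only to a \emph{polynomially-bounded} family of objects — the common neighbourhoods $\neighcV{\Rp_1}$ of all $\Rp_1$ with $\setsize{\Rp_1} \le \tp\rp$, together with concentration statements about how single vertices meet these sets and their large subsets — and then to prove \emph{deterministically} that $\mathcal{E}$ forces every \generic{\rp}{\qp} set $\Wp$ to be \robgeneric{\tp\rp}{\rp}{\qp'}{\spar}. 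Since $\mathcal{E}$ is a conjunction of $\np^{\bigoh{\tp\rp}} = \exp(\littleoh{\sqrt\np})$ Chernoff-type statements, each failing with probability $\exp(-\Omega(\np^{1-\bigoh{\oldxi}}))$ because the sets involved are polynomially large, $\mathcal{E}$ holds with probability at least $1 - \exp(-\sqrt\np)$, which gives the probability bound in the theorem.

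For the deterministic implication, given $\mathcal{E}$ and a \generic{\rp}{\qp} set $\Wp$, I would take $\Sp = \Sp(\Wp)$ to be the vertices ``dangerous for $\Wp$'': those $v$ for which some $\Rp_1$ with $\setsize{\Rp_1} \le \tp\rp$ has the property that adjoining $v$ shrinks $\commonneighbourhood{\Rp_1}{\Wp}$ by substantially more than the ideal factor $p$. Since $\Wp$ is \generic{\rp}{\qp}, every set of size at most $\rp$ has at least $\qp > \qp'$ common neighbours in $\Wp$, so processing a would-be bad set $\Rp$ (size at most $\tp\rp$ with $\Setsize{\commonneighbourhood{\Rp}{\Wp}} < \qp'$) by taking its dangerous vertices as a core and then adjoining its non-dangerous vertices one at a time keeps the running common neighbourhood above $\qp'$ unless $\Rp$ contains at least $\rp$ dangerous vertices, which is precisely the required $\setsize{\Rp\cap\Sp} \ge \rp$ (with $\Sp$ playing the role of the set in \refdef{def:robgeneric}). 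Essentially all of the work lies in the two points that make this argument non-trivial. First, \emph{bounding} $\setsize{\Sp(\Wp)} \le \spar = \sqrt\np$: here $\mathcal{E}$ must be used to charge each dangerous vertex to an abnormally small intersection with one of the \emph{controlled} sets $\neighcV{\Rp_1}$, i.e.\ one must bridge from $\commonneighbourhood{\cdot}{\Wp}$, which depends on the arbitrary $\Wp$, to $\neighcV{\cdot}$, which $\mathcal{E}$ governs — and the \genericness of $\Wp$ has to be invoked to ensure that killing too much of $\commonneighbourhood{\Rp_1}{\Wp}$ already kills too much of a \emph{fixed} large subset of $\neighcV{\Rp_1}$. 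Second, \emph{preventing the multiplicative slack from compounding}: over $\tp\rp = \Theta(k/\oldeta)$ peeling steps even a constant-factor loss per step would wreck the bound, so the threshold defining ``dangerous'' — and, in lockstep, $\newepsilon$ and hence $\qp'$ — must be tuned so that each non-dangerous vertex shrinks the running set by essentially exactly $p$ while all sets encountered remain large enough for that precision to be available. Once this bridge and precision control are in place, combining the two conditions shows that $\randomvarG$ is \cliquedense{k}{\tp}{\rp}{\spar}{\newepsilon} with probability at least $1 - \exp(-\sqrt\np)$, which is the theorem.
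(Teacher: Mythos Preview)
Your central claim—that one \emph{cannot} afford a union bound over the $2^\np$ candidate sets $\Wp$—is false, and this is precisely where your plan diverges from a working argument. The paper does take a crude union bound over all $\Wp \subseteq \Vp$; the reason this succeeds is a product structure you have overlooked. If an \generic{\rp}{\qp} set $\Wp$ fails to be \robgeneric{\tp\rp}{\rp}{\qp'}{\spar}, then one can greedily extract from the bad sets an \emph{$\rp$-disjoint} tuple $(\Rp_1,\ldots,\Rp_\yp)$ with $\yp = \lceil \spar/\tp\rp \rceil = \Theta(\sqrt\np/k)$, each $\Rp_j$ of size at most $\tp\rp$ and with $\Neighcsize{\Rp_j}{\Wp} \le \qp'$. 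Conditioning sequentially and using the $\rp$-disjointness to separate the fresh vertices $\Rp_j^d$ of each $\Rp_j$ from the previous ones, one gets $\yp$ nearly-independent Chernoff events, each of probability $\exp(-\Omega(\qp\,\np^{-\tp\delta\rp}))$. The \emph{product} has probability $\exp(-\Omega(\qp\,\np^{-\tp\delta\rp}\yp)) = \exp(-\Omega(\np^{1+\oldxi}))$, which comfortably beats $2^\np \np^{\spar}$. So a single Chernoff bound is indeed only $\exp(-\np^{1-\Omega(1)})$, as you say, but the argument uses $\Theta(\sqrt\np/k)$ of them in series.

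Your alternative route—fixing a polynomially-sized event $\mathcal{E}$ and deducing condition~\eqref{item:hyp-awesome} deterministically—has a real gap exactly where you flag the difficulty. You propose to bound $\setsize{\Sp(\Wp)}$ by charging each dangerous vertex $v$ to an abnormally small intersection of $\neigh{v}$ with a \emph{fixed} set $\neighcV{\Rp_1}$, invoking the \genericnessNOparam{} of $\Wp$ to bridge from $\commonneighbourhood{\Rp_1}{\Wp}$ to $\neighcV{\Rp_1}$. But $\commonneighbourhood{\Rp_1}{\Wp}$ is an \emph{arbitrary} subset of $\neighcV{\Rp_1}$ of size $\ge \qp$ (for $\setsize{\Rp_1} \le \rp$; for larger $\Rp_1$ you have no lower bound at all), and a vertex $v$ with perfectly typical behaviour on $\neighcV{\Rp_1}$ can still have atypically few neighbours in such a subset. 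Controlling $\neigh{v}$ against all large subsets of $\neighcV{\Rp_1}$ is again an exponential union bound, so the bridging step as stated does not reduce to a polynomial family of events. You have correctly identified the obstacle but not supplied a mechanism to overcome it; the paper's $\rp$-disjoint-sequence argument is that mechanism.
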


As a corollary of \refthm{thm:main} and \refthm{thm:erdos-renyi-clique-dense} we
obtain Theorem \ref{thm:k-clique-erdos-renyi-block},
the main result of
this paper.

\begin{proof}[Proof of Theorem \ref{thm:k-clique-erdos-renyi-block}]
Clearly $\tp \geq 1$ as required by Definition~\ref{def:propertyP}.
We can also assume w.l.o.g. that $\tp \leq k$ since otherwise $k/\xi^2\leq
32/(\xi\epsilon)\leq O(1)$ and the bound becomes trivial.
By plugging in the parameters given by  \refthm{thm:erdos-renyi-clique-dense}
to \refthm{thm:main} we immediately get that 
any regular refutation $\pi$ of $\cliqueblock{\randomvarG}{k}$
has \lengthsize
		\begin{align}
	\label{eq:clique-G(n,p)-actual-lb}
	\setsize{\pi}
	\geq
	\Bigomega{\spar^{\newepsilon k/\tp^2}}
    \geq
    n^{\bigomega{k/\xi^2}}
	\eqcomma
	\end{align}	
as stated.
\end{proof}

We will spend the rest of this section proving \refthm{thm:erdos-renyi-clique-dense}.

Let $\delta = 2\oldeta/(k-1)$.
We show that, with probability
        at least $1-\E^{-\sqrt{\np}}$, the random graph
        $\randomvarG$ is
        \cliquedense{k}{\tp}{\rp}{\spar}{\newepsilon} for parameters as in the statement of the
        theorem, $\rp=4k/\tp^2$ and $\qp=\frac{\np^{1-\tp\delta\rp}}{4k\tp}$.

Recall that $\qp'=\newepsilon\rp\spar^{1+\newepsilon}\log\spar$.
Let us argue that the parameters we use satisfy constraints
	\begin{align}
	\label{eq:bound-tdeltar}
	& \tp\delta \rp
	\leq
	\frac{\oldxi}{2}
	\eqcomma
	\\
	\label{eq:bound-tr}
	& \log k + \tp\rp\log \np
	\leq
	\frac{\np^{1-\tp\delta\rp}}{32k}\cdot\frac{2\log\np}{\np^{1/2}}
	\eqcomma
		\\
	\label{eq:bound-x-lb}
	& \frac{\qp\np^{-\tp\delta\rp}\spar}{16\tp\rp}
	\geq
	\frac{\np^{1+\oldxi}}{256}
	\eqcomma
	\\
	\label{eq:bound-repsilon}
	& \qp'
	\leq \frac{\qp\np^{-\tp\delta \rp}}{4} \cdot \frac{\log \np}{\np^{\oldxi/2}}
	\eqcomma
	\\
	\label{eq:tr}
	& \tp\rp
	\leq \frac{\qp}{2}
	\eqcomma
	\end{align}
which will be used further on in the proof.	

As a first step note that
	\begin{equation}
	\tp\delta\rp
	=
	\frac{8\oldeta k}{t(k-1)}
	\leq \frac{\oldxi}{2}	
	\eqcomma
	\end{equation}
	and hence \refeq{eq:bound-tdeltar} holds. Equation
        \refeq{eq:bound-tr} follows from the
        chain of inequalities
	\begin{align}
	\log k + \tp\rp\log \np \leq 2\tp\rp\log \np
	=
	\frac{8k\log \np }{\tp}
	\leq
	\frac{k\log \np }{16}
	\leq
	\frac{\np^{1/2-2\oldxi}\log \np}{16k}
	\leq
	\frac{\np^{1-\tp\delta\rp}}{32k}\cdot\frac{2\log\np}{\np^{1/2}}
	\eqperiod
	\end{align}
	To obtain \refeq{eq:bound-x-lb} observe that
	\begin{equation}
	\frac{\qp\np^{-\tp\delta\rp}\spar}{16\tp\rp}
	=
	\frac{\np^{1-2\tp\delta\rp + 1/2}}{256k^2}
	\geq
	\frac{\np^{1-2\tp\delta\rp+2\oldxi}}{256}
	\geq
	\frac{\np^{1+\oldxi}}{256}
	\eqperiod
	\end{equation}
To see that \refeq{eq:bound-repsilon} holds, note that
	\begin{equation}
	\qp'
	=
	\frac{2\oldxi k\np^{{(1+\newepsilon)}/{2}}\log \np}{\tp^2}
	\leq
	\frac{k^2\np^{{(1+\oldxi)}/{2}}\log \np}{16k\tp} \label{eq:def-epsilon-appl}
	\leq
	\frac{\np^{1-{3\oldxi/2}}\log \np}{16k\tp}
	\leq
	\frac{\qp\np^{-\tp\delta \rp}}{4} \cdot \frac{\log \np}{\np^{\oldxi/2}}
	\eqperiod
	\end{equation}
	Finally, for \eqref{eq:tr}, we just observe that
	\begin{align}
		\label{eq:tp-proof}
	\tp\rp
	=
	\frac{4k}{\tp}
	\leq
	\frac{k^3}{8k^2}
	\leq
	\frac{n^{1-\tp\delta\rp}}{8k\tp}
	=
	\frac{\qp}{2}
	\eqcomma
	\end{align}
using the fact that $k \geq \tp$ and $k^3 \leq n^{1-\tp\delta\rp}$.

\medskip

	We must now prove that \aas $\randomvarG$ is \cliquedense{k}{\tp}{\rp}{\spar}{\newepsilon} for the  chosen parameters.
	All probabilities in this section are over the choice of~$\randomvarG$, and all previously introduced concepts like $\commonneighbourhood{\Rp}{\Wp}$, neighbour-denseness etc. should be
understood with respect to~$\randomvarG$ as well (so that they are actually random variables and events in this sample space).
	Let $\Vp = \Vp(\randomvarG)$
	and $\Vp_1 \cup \cdots \cup \Vp_k = \Vp$ be a balanced $k$-partition of $\Vp$.
	
	The fact that \aas $\Vp_i$ is \generic{\tp\rp}{\tp\qp}  for all $i\in[k]$ is
        quite immediate. First, for any $i\in[k]$ and any $\Rp \subseteq \Vp$ with $\setsize{\Rp}\leq \tp\rp$,	
        \begin{align}\Expect{\Neighcsize{\Rp}{\Vp_i}}
        =
        \setsize{\Vp_i\setminus \Rp}\np^{-\delta \setsize{\Rp}}
        \geq
        \left(\frac{\np}{k}-\tp\rp\right)
        \np^{-\delta\tp \rp}
        \geq
        \left(\frac{\np}{k}-\frac{\qp}{2}\right)
        \np^{-\delta\tp \rp}
        \geq
        \label{eq:expect-V_i-last}
        \frac{\np^{1-\delta\tp \rp}}{2k}
        \eqcomma
        \end{align}
        where 
        the second-to-last inequality follows from \eqref{eq:tr} and 
        the last inequality from the trivial fact that $\qp\leq \frac{\np}{k}$.
         Hence, we can bound the probability that there exists an $i\in[k]$ such that $\Vp_i$ is not \generic{\tp\rp}{\tp\qp} by
    \begin{align}
    & \PROB{\exists i\in [k]\ \exists \Rp\subseteq \Vp,\ \setsize{\Rp}= \lfloor\tp\rp\rfloor \land
    	\Neighcsize{\Rp}{\Vp_i}\leq \tp\qp} \\
	& \;\;\;\;\; \leq
       \label{eq:use-bound-union}
	k \binom{\np}{\tp\rp}\max_{i,\Rp}\PROB{\Neighcsize{\Rp}{\Vp_i}\leq \tp\qp}
	\\
	\label{eq:use-bound-tb}
	&  \;\;\;\;\; \leq
	k \np^{\tp\rp}\max_{i,\Rp}\PROB{\Neighcsize{\Rp}{\Vp_i}\leq \frac{\np^{1-\tp\delta\rp}}{4k}}
	\\	
	\label{eq:use-bound-chernoff}
	&  \;\;\;\;\; \leq
	k \np^{\tp\rp}\exp\left(-\frac{\np^{1-\tp\delta\rp}}{16k}\right)
	\\
	\label{eq:use-bound-tr}
	&  \;\;\;\;\; \leq
	\exp\left(-\frac{\np^{1-\tp\delta\rp}}{32k} \cdot\left(2-2\frac{\log \np}{\np^{1/2}} \right)\right)
	\\
	\label{eq:use-bound-tdeltar}
	&  \;\;\;\;\; \leq
	\E^{-\sqrt{\np}}
	\eqperiod
	\end{align}
We note that \refeq{eq:use-bound-union} is a
union bound, \refeq{eq:use-bound-tb} follows from
        the definition of $\qp$,  \refeq{eq:use-bound-chernoff} is the
        multiplicative form of Chernoff bound (note that the events $v\in\commonneighbourhood{\Rp}{\Vp_i} (v\in \Vp\setminus \Rp)$
        are mutually independent),
        \refeq{eq:use-bound-tr} follows from
        \refeq{eq:bound-tr}, and \refeq{eq:use-bound-tdeltar} holds for large enough $\np$ by
        \refeq{eq:bound-tdeltar} and the fact that $\oldxi<1/4$ and $k<n^{1/4}$.
	
	All that is left to prove is that \aas $\randomvarG$ satisfies property \ref{item:hyp-awesome} in Definition~\ref{def:propertyP},	that is that
	every \generic{\rp}{\qp} set $\Wp\subseteq \Vp$
	is \robgeneric{\tp\rp }{\rp}{\qp'}{\spar}.
	For shortness let $\eventA$ be the event that $\randomvarG$ satisfies this
	property.
	We wish to show that $\prob{\lnot \eventA}\leq \E^{-\bigomega{\np}}$, and it turns out that due to our choice of parameters we can afford to use the crude union bound over all $2^n$ choices of $W$.
	
	To be more specific, let $Q(W)$ denote the event that $W$ is  \generic{\rp}{\qp}. Given an \generic{\rp}{\qp} set $\Wp\subseteq\Vp$ we will define
	a set $\Sp_\Wp$ which will be a ``candidate witness'' of the fact that $\Wp$ is \robgeneric{\tp\rp}{\rp}{\qp'}{\spar}.
	First observe that, since $\Wp$ is \generic{\rp}{\qp} and $\qp' \leq \qp$ by~\eqref{eq:bound-repsilon}, any set $\Rp\subseteq \Vp$ with $\setsize{\Rp}\leq \tp\rp$ and $\Neighcsize{\Rp}{\Wp}\leq \qp'$ must be such that $\setsize{\Rp} > \rp$. We will use a sequence of such sets $\Rp$ and construct $\Sp_\Wp$ in a greedy fashion. To this end, the following definition will be useful.
	A tuple of sets $(\Rp_1,\ldots, \Rp_m)$ is said to be \emph{\disj{\rp}} if $\Setsize{\Rp_i \cap \big(\bigcup_{j<i}\Rp_j\big)}\leq \rp$ for every~$i\in[m]$.

	Fix an arbitrary ordering of the subsets of $\Vp$. Define $\vec \Rp_\Wp=(\Rp_1,\dots,\Rp_m)$ to be
	a maximally long tuple such that, for every $i= 1, \ldots, m$, the set $\Rp_i$ is the first in the ordering such that $\setsize{\Rp_i}\leq \tp\rp$, $\Neighcsize{\Rp_i}{\Wp}\leq \qp'$ and $\Setsize{\Rp_i \cap \big(\bigcup_{j<i}\Rp_j\big)}\leq \rp$. Note that $\vec\Rp_\Wp$ is \disj{\rp}.
	Now let $\Sp_\Wp=\bigcup_{i\leq m} \Rp_i$.
	
Observe that, by maximality of $\vec\Rp_\Wp$, any set $\Rp\subseteq \Vp$ with $\setsize{\Rp}\leq \tp\rp$ and $\Neighcsize{\Rp}{\Wp}\leq \qp'$ must be such that $\setsize{\Rp\cap \Sp}> \rp$.
	This implies that if $\setsize{\Sp_\Wp} \leq \spar$ then $\Sp_\Wp$ witnesses the fact that $\Wp$ is \robgeneric{\tp\rp }{\rp}{\qp'}{\spar}. Therefore we have that
	\begin{equation}
	\prob{\lnot \eventA}\leq \prob{\exists \Wp\subseteq \Vp,\ Q(W)  \land \setsize{\Sp_\Wp}>\spar}\eqperiod
	\end{equation}
	
	Moreover, let $\setWR$ be the collection of all pairs $(\Wp,\vec \Rp)$ such that \mbox{$\Wp\subseteq \Vp$}, %
	$\vec \Rp=(\Rp_1,\ldots,\Rp_\yp)$ for $\yp= \ceiling{\spar/\tp\rp}$,  $\Rp_j \subseteq \Vp$ and \mbox{$0<\setsize{\Rp_j}\leq \tp\rp$} for each $j\in [\yp]$,
	and  $\vec \Rp$ is \disj{\rp}.
	Notice that if there exists an \generic{\rp}{\qp} $\Wp$ such that  $\vec \Rp_\Wp=(\Rp_1,\ldots,\Rp_m)$ and $\setsize{\Sp_\Wp}>\spar$, then
	$m\geq \yp$ and {$(\Wp,(\Rp_1,\ldots,\Rp_\yp))\in \setWR$}.
	Furthermore, by definition of $\vec \Rp_\Wp$,
	for every~$j \in [\yp]$
	it holds that $\Neighcsize{\Rp_j}{\Wp}\leq \qp'$.
	Hence we can conclude that
	\begin{align}
	\prob{\lnot \eventA}
	&\leq
	\Prob{\exists (\Wp,\vec \Rp)\in \setWR,\ Q(W) \land \forall j \in [\yp],\ \Neighcsize{\Rp_j}{\Wp}\leq \qp'}
	\\
	& \leq
	2^\np\np^{\tp\rp\yp}
	\max_{(\Wp,\vec R)\in \setWR}
	\Prob{ Q(W) \land \forall j \in [\yp],\ \Neighcsize{\Rp_j}{\Wp}\leq \qp'}
	\\
	\label{eq:use-bound-repsilon}
	& \leq
	2^\np\np^{\spar}
	\max_{(\Wp,\vec \Rp)\in \setWR}
	\Prob{Q(W) \land\forall j \in [\yp],\ \Neighcsize{\Rp_j}{\Wp}\leq\frac{\qp}{4}\np^{-\tp\delta \rp}}
	\eqcomma
	\end{align}
	where \refeq{eq:use-bound-repsilon} follows for $\np$ large enough from the bound in \refeq{eq:bound-repsilon}.

Now fix $(\Wp,\vec \Rp)\in \setWR$ and
let~$\Rp_j^d$ (resp. $\Rp_j^c$) be the subset of $\Rp_j$ disjoint from (resp. contained in) $\bigcup_{j^\prime <j} \Rp_{j^\prime}$.
Since $\setsize{\Rp_j^c}\leq \rp$ by definition, it holds that if $\Wp$ is \generic{\rp}{\qp} then $\Neighcsize{\Rp_j^c}{\Wp}> \qp$.
Let $\neighcineq{j}$ be the event that  $\Neighcsize{\Rp_j^c}{\Wp}> \qp$ and
$\Neighcsize{\Rp_j}{\Wp}\leq \frac{\qp}{4}\np^{-\tp\delta
	\rp}$. Note that
$\Prob{Q(W) \land\forall j \in [\yp],\ \Neighcsize{\Rp_j}{\Wp}\leq\frac{\qp}{4}\np^{-\tp\delta \rp}}$
is at most
$\Prob{\forall j \in [\yp],\ \neighcineq{j}}$.
Let $\neighcineqp{j}$ be the event that $\neighcineq{j'}$ holds for all
$j' \in [j-1]$. We have that
	\begin{equation}
	\label{eq:product}
	\Prob{\forall j \in [\yp],\ \neighcineq{j}}
	=
	\prod_{j \in [\yp]}\Condprob{\neighcineq{j}\ }{\ \neighcineqp{j} }
	\eqperiod
	\end{equation}
We can consider the factors of the previous product separately and bound each one by
\begin{align}
& \nonumber \Condprob{\neighcineq{j}\ }{\ \neighcineqp{j} }
        \\
&	 \leq
	\sum_{\substack{U\subseteq \Wp\\ \setsize{U}\geq \qp}}
	\CONDPROB{\Neighcsize{\Rp^d_j}{U}\leq\frac{\qp}{4}\np^{-\tp\delta \rp}\ }{\ \neighc{\Rp^c_j}{\Wp}=U \land \neighcineqp{j}} \cdot \CONDPROB{\neighc{\Rp^c_j}{\Wp}=U \ }{\ \neighcineqp{j}}
	\\
&	\leq
	\label{eq:indip}
	\sum_{\substack{U\subseteq \Wp\\ \setsize{U}\geq \qp}}
	\PROB{\Neighcsize{\Rp^d_j}{U}\leq\frac{\qp}{4}\np^{-\tp\delta \rp}} \cdot \CONDPROB{\neighc{\Rp^c_j}{\Wp}=U \ }{\  \neighcineqp{j}}
	\\
&	\leq
	\label{eq:R_j-chernoff}
	\sum_{\substack{U\subseteq \Wp\\ \setsize{U}\geq \qp}}
	\exp\left(-\frac{	\qp\np^{-\tp\delta \rp}}{16}\right) \cdot
	\CONDPROB{\neighc{\Rp^c_j}{\Wp}=U \ }{\ \neighcineqp{j}}
	\\
&	=
	 \exp\left(-\frac{	\qp\np^{-\tp\delta \rp}}{16}\right) \cdot
	 \sum_{\substack{U\subseteq \Wp\\ \setsize{U}\geq \qp}}
	 \CONDPROB{\neighc{\Rp^c_j}{\Wp}=U \ }{\ \neighcineqp{j}}
	\\
&	\leq
	 \exp\left(-\frac{\qp\np^{-\tp\delta \rp}}{16}\right)
	\eqperiod
	\end{align}
	Equation \refeq{eq:indip} follows from the independence of any two events
        that involve disjoint sets of potential edges and \refeq{eq:R_j-chernoff} follows from the multiplicative Chernoff bound and the fact that
	\begin{align}
	\Expect{\Neighcsize{\Rp_j^d}{U}}
	=
	\setsize{U\setminus \Rp_j^d}\np^{-\delta \setsize{\Rp_j^d}}
	\geq (\setsize{U}-\tp\rp)\np^{-\delta\tp \rp}
	\geq
	\frac{\qp}{2}\np^{-\delta\tp \rp}
	\eqperiod
	\end{align}
	So, putting everything together, we have that
	\begin{equation}
	\prob{\lnot \eventA}
	\leq
	2^\np\np^{\spar} \exp\left(-\frac{	\qp\np^{-\tp\delta \rp}\yp}{16}\right)
	\leq \E^{(\log 2)\np+\sqrt{\np}\log \np -(\np^{1+\oldxi})/256}
	\leq \E^{-\bigomega{\np}}
	\eqcomma
	\end{equation}
	where the last inequality holds for $\np$ large enough, and the second to last inequality follows immediately from the bound in \refeq{eq:bound-x-lb}.
	This concludes the proof of \refthm{thm:erdos-renyi-clique-dense}.
\section{State-of-the-Art Algorithms for Clique}
\label{sec:algorithms}

\newcommand{\expand}{\mathtt{expand\xspace}}
\newcommand{\colourOrder}{\mathtt{colourOrder\xspace}}
\newcommand{\permute}{\mathtt{permute\xspace}}
\newcommand{\glob}{\textbf{global\xspace}}

\newcommand{\inc}{\mathit{incumbent\xspace}}
\newcommand{\bounds}{\mathit{bounds\xspace}}
\newcommand{\order}{\mathit{order\xspace}}
\newcommand{\sol}{\mathit{solution\xspace}}
\newcommand{\found}{\mathit{found\xspace}}

\newcommand{\True}{\textsf{true\xspace}}
\newcommand{\False}{\textsf{false\xspace}}
\newcommand{\Then}{\textbf{then\xspace}}
\newcommand{\Myelse}{\textbf{else\xspace}}
\newcommand{\Myif}[2]{\textbf{if\xspace}{} {#1} \Then{} {#2}}
\newcommand{\Downto}{\textbf{down to\xspace}}
\newcommand{\Or}{\textbf{or\xspace}}

In this section we describe state-of-the-art algorithms
for maximum clique and explain how regular resolution
proofs bound from below the running time of
these algorithms.

At the heart of most (if not all) of the state-of-the-art algorithms 
for maximum clique 
is a backtracking search, which 
in its simplest form examines
all maximal cliques by enlarging a set of vertices that
form a clique and backtracking when it certifies
that the current set forms
a maximal clique. 
A classical 
example of such a backtracking search is 
the Bron–Kerbosch~\cite{BK73FindingAllCliques} algorithm
which enumerates
all maximal cliques in a graph. 
This algorithm can be adapted to find a maximum
clique as done in~\cite{CP90ExactAlgorithm} 
improving the running time considerably by
using a branch and bound strategy.
At some point in the search tree it becomes clear that 
the current search-branch will not lead to a clique larger
than the largest one found so far---in such cases the 
algorithm cuts off the search and backtracks immediately.

The most successful algorithms in practice are
search trees with clever branch and bound strategies.
In this section we will discuss 
the algorithm by Östergård~\cite{Ostergard02FastAlgorithm} using
Russian doll search 
and a collection of algorithms that use
colour-based branch and bound strategies~\cite{
Wood97AlgorithmForFinding,
Fahle02SimpleAndFast,
TS03EfficientBranchAndBound,
TK07EfficientBranchAndBound,
KJ07ImprovedBranchAndBound,
TSHTW10SimpleAndFaster,
ST10NewImplicitBranching,
SRJ11ExactBitParallel,
SMRH13ImprovedBitParallel,
SLB14InitialSorting,
SLBNP16ImprovedInitialVertex,
TYHNIW16MuchFaster}.

\paragraph{Östergård's algorithm}

Östergård's algorithm~\cite{Ostergard02FastAlgorithm} is a branch and
bound algorithm that uses Russian doll search as a pruning strategy:
it considers smaller subinstances recursively and solves them in
ascending order using previous solutions as upper bounds.
This algorithm, which is the main component of the Cliquer software,
is often used in practice and has been available online since
2003~\cite{NO03Cliquer}. Cliquer is also the software
of choice to compute maximum cliques in the open source mathematical
software SageMath~\cite{Sage2017}.

\begin{algorithm}[t]
\begin{algorithm2e}[H]
$\mathtt{Cliquer}(G)$:\;
\Begin{
$G \leftarrow \permute(G)$\;
$\inc \leftarrow \emptyset$\;
\For{$i =n$ \Downto{} $1$}{ \label{line:for1}
$\found \leftarrow \False$\;
$\expand (G[V_i\cap \neigh{v_i}], \set{v_i})$\;
$\bounds[i] \leftarrow \setsize{\inc}$ \label{line:bounds1}
}
\Return $\inc$ \;
}
$\expand(H,\sol)$:\;
\Begin{
\While{$V(H) \neq \emptyset$}{
\Myif{$\setsize{\sol} + \setsize{V(H)}\leq \setsize{\inc}$}{\Return}\; \label{line:boundsize1}
$i\leftarrow \min\set{j\mid v_j \in V(H)}$\; 
\Myif{$\setsize{\sol} + \bounds[i] \leq \setsize{\inc}$}{\Return}\; \label{line:bound1}
$\sol' \leftarrow \sol \cup \set{v_i}$\; \label{line:sol1}
$V' \leftarrow V(H)\cap \neigh{v_i}$\;
{$\expand(H[V'],\sol')$}\; \label{line:recurse1}
\Myif{$\found = \True$}{\Return}\; 
$H \leftarrow H\setminus \set{v_i}$\; \label{line:notv1}
}
\If{$\setsize{\sol'} > \setsize{\inc}$}{
$\inc \leftarrow \sol'$\; \label{line:newmax1}
$\found \leftarrow \True$ \label{line:found1}
} 
\Return\;
}
\end{algorithm2e}
\caption{$\mathtt{Cliquer}(G)$ algorithm}
\label{alg:Cliquer}
\end{algorithm}

The $\mathtt{Cliquer}(G)$ algorithm described in
Figure~\ref{alg:Cliquer} is essentially the same as Algorithm 2 in~\cite{Ostergard02FastAlgorithm}.
The algorithm first permutes the vertices of $G$ according to some criteria.
Let $v_{1}, \ldots, v_{n}$ be the enumeration of $V(G)$ induced by said
permutation, and $V_i = \set{v_i, \ldots,v_n}$ for $i\in[n]$.
In practice this permutation has a large impact on the running time of
the algorithm, but for our analysis the knowledge of the specific
order is irrelevant.

In the main loop (lines~\ref{line:for1}--\ref{line:bounds1}) subgraphs of~$G$ are considered and at each iteration
the size of a maximum clique
containing only vertices of $V_i$ is stored in $\bounds[i]$.
The algorithm keeps the best solution (largest clique) found so far in the global variable $\inc$ which is initially empty. The array $\bounds$ and the flag $\found$ are global variables. The current growing clique is stored in $\sol$ and passed as an argument of the subroutine $\expand$ together with the current subgraph $H\subseteq G$ being considered.

The main subroutine $\expand$ recursively goes through all vertices of $H$ from smallest to largest index. First note that if the size of the current growing clique plus $\setsize{H}$ is not larger than the current maximum clique (line~\ref{line:boundsize1}) then this branch can be cut. 
Moreover, if $v_i$ is the smallest-index vertex in $H$ then $V(H)\subseteq V_i$ and $\bounds[i]$ is an upper bound on the size of a maximum clique in $H$.
This implies that this branch can be cut if the size of the current growing clique plus $\bounds[i]$ is not larger than the current maximum clique (line~\ref{line:bound1}). If it is larger,
the algorithm branches on the vertex $v_i$. 

First $v_i$ is taken to be part of the solution: it is added to (a copy of) the current growing solution, (a copy of) the graph is updated to contain only neighbours of $v_i$ and a recursive call is made (lines~\ref{line:sol1}--\ref{line:recurse1}). If the recursive call finds a clique larger than the current largest clique, it sets the flag $\found$ to true. This allows the algorithm can return to the main routine (line~\ref{line:bounds1}) since a maximum clique containing only vertices of $V_i$ can be at most one unit larger than a maximum clique containing only vertices of $V_{i+1}$. If no larger clique was found, the algorithm then proceeds to the opposite branch choice, that is, taking vertex $v_i$ to not be in the solution (line~\ref{line:notv1}) and considering the next vertex in the ordering. If $V(H)$ is empty and a larger clique has been found, the best solution so far is updated 
and the flag $\found$ is set to true (lines~\ref{line:newmax1}--\ref{line:found1}).

We now argue that the running time of the $\mathtt{Cliquer}(G)$ algorithm is 
bounded from below by the size of a 
regular resolution refutation of 
$\cliqueblock{G}{k}$
up to a constant factor. 
First note that a straightforward modification
of the $\mathtt{Cliquer}(G)$ algorithm gives
an algorithm 
that determines whether $G$ contains a block-respecting $k$-clique.

Given a graph $G$ that does not contain a block-respecting $k$-clique,
the last call of the subroutine $\expand$ in the main loop
(lines~\ref{line:for1}--\ref{line:bounds1}, when $i$ is set to $1$)
can be represented by an ordered decision tree with labelled leafs.
A decision tree is said to be ordered if there exists a linear ordering of the variables 
such that if $x$ is queried before $y$ then $x \prec y$. 
In our setting, the order is determined by the permutation of the vertices, 
and without loss of
generality we assume $v_i \prec v_j$ if $i<j$. 
For each leaf, if $R$ is the set of vertices identified as clique members
by the branch leading to this leaf, then the leaf is
labelled either by a pair $(u,v)$ such that $u,v\in R$ and there is no edge between $u$ and $v$
or by an index $\ell\in[k]$ such that 
all vertices in the $\ell$th block are outside the clique, 
or  by a vertex $v_i$ such that $i=\min\set{j\mid v_j \in \neigh{R}}$ and the largest clique containing only vertices of $V_{i}$
has size at most $k-\setsize{R}-1$. 
For each vertex $v_i$ that labels some leaf, we construct the decision tree
corresponding to the $i$th call of the subroutine $\expand$.

In order to weave these decision trees into a read-once branching program, 
at each leaf labelled $v_i$ we query all non yet queried vertices $v_j$ such that $j<i$ and 
$v_j$ is in the same block as $v_i$. Let $B_i$ denote the set of vertices. 
Observe that taking any vertex in $B_i$ to be in
the clique yields an immediate contradiction since $B_i \cap \neigh{R} = \emptyset$ 
by definition of $i$. 
Moreover note that the branch leading to 
the leaf where all of $B_i$ is taken to be outside the clique does not
contain any query to vertices in $V_{i}$.
We can therefore identify this leaf
with the root of the decision tree corresponding to $v_i$ and
still maintain regularity.
After repeating this procedure at every leaf labelled by some vertex, 
only leafs labelled by indices $\ell\in[k]$ and by pairs $(u,v)$ remain,
which have a direct correspondence to falsified clauses of $\cliqueblock{G}{k}$.
Therefore, the directed graph obtained by this process corresponds to 
a read-once branching program that solves the falsified clause search 
problem on $\cliqueblock{G}{k}$ and 
the bound on the running time follows immediately.

\paragraph{Colour-based branch and bound algorithms}
We consider a class of algorithms
which are arguably the most successful in practice. An extended
survey together with a computational analysis of  
algorithms published until 2012 can be found in~\cite{Prosser12Exact} 
and an overview of algorithms reported since then in~\cite{McCreesh17Thesis}.
These algorithms are
branch and bound algorithms 
that use colouring as a
bounding---and often also as a branching---strategy.
The basic idea is that if a graph can be coloured with $\ell$ colours then
it does not contain a clique larger than $\ell$.

\begin{algorithm}[t]
\begin{algorithm2e}[H]
$\mathtt{Max Clique}(G)$:\;
\Begin{
\glob {} $\inc \leftarrow \emptyset$\;
$\expand (G, \emptyset)$\;
\Return $\inc$ \;
}
$\expand(H,\sol)$:\;
\Begin{
$(\order,\bounds) \leftarrow \colourOrder(H)$\; \label{line:colourOrder}
\While{$V(H) \neq \emptyset$}{
$i\leftarrow \setsize{V(H)}$\; 
\Myif{$\setsize{\sol} + \bounds[i] \leq \setsize{\inc}$}{\Return}\; \label{line:bound}
$v\leftarrow \order[i]$\; \label{line:nextvertex}
$\sol' \leftarrow \sol \cup \set{v}$\; \label{line:sol}
$V' \leftarrow V(H)\cap \neigh{v}$\;
{$\expand(H[V'],\sol')$}\; \label{line:recurse}
$H \leftarrow H\setminus \set{v}$\; \label{line:notv}
}
 \Myif{$\setsize{\sol'} > \setsize{\inc}$}{$\inc \leftarrow \sol'$}\; \label{line:newmax}
\Return\;
}
\end{algorithm2e}
\caption{$\mathtt{Max Clique}(G)$ algorithm}
\label{alg:MaxClique}
\end{algorithm}

The $\mathtt{Max Clique}(G)$ algorithm described in
Figure~\ref{alg:MaxClique},
a generalized version of Algorithm 2.1 in~\cite{McCreesh17Thesis}, 
is a basic maximum clique algorithm which uses a colour-based branch and bound strategy. %
The algorithm keeps the best solution (largest clique) found so far  in the global variable $\inc$ which is initially empty. The current clique is stored in $\sol$ and passed as an argument of the subroutine $\expand$ together with the current subgraph $H\subseteq G$ being considered.
The subroutine $\colourOrder(H)$ (line~\ref{line:colourOrder}) returns 
an ordering of the vertices in $H$, say $v_1, v_2, \ldots, v_n$,  
and for every $i\in[n]$ an upper bound on the number
of colours needed to colour the graph induced by vertices
$v_1$ to $v_i$.

The vertices are then considered in reverse order.
If the vertex $v$ is being considered
and the size of the current growing clique plus the (upper bound on the) number of colours needed to colour the remaining graph is not larger than the current maximum clique (line~\ref{line:bound}) then this branch can be cut. If it is larger,
the algorithm branches on the vertex $v$. First $v$ is taken to be part of the solution: $v$ is added to (a copy of) the current growing solution, (a copy of) the graph is updated to contain only neighbours of $v$ and a recursive call is made (lines~\ref{line:sol}--\ref{line:recurse}). If the recursive call finds a clique larger than the current largest clique, the best solution so far is updated (line~\ref{line:newmax}). The algorithm proceeds to the opposite branch choice, that is, considering vertex $v$ not in the solution (line~\ref{line:notv}). Returning to the loop the algorithm continues to consider the next vertex in the ordering.

It was reported in~\cite{CZ12BranchAndBound} that it is possible to capture the algorithms for solving the maximum clique problem in~\cite{CP90ExactAlgorithm,
	Fahle02SimpleAndFast,
	TS03EfficientBranchAndBound,
	TK07EfficientBranchAndBound,
	KJ07ImprovedBranchAndBound,
	TSHTW10SimpleAndFaster} in a same framework. 
The general algorithm they present is an iterative 
version of the $\mathtt{Max Clique}(G)$ algorithm.
We observe that $\mathtt{Max Clique}(G)$ 
captures also the more recent algorithms in~\cite{ST10NewImplicitBranching,
	SRJ11ExactBitParallel,
	SMRH13ImprovedBitParallel,
	SLB14InitialSorting,
	SLBNP16ImprovedInitialVertex,
	TYHNIW16MuchFaster}. 
The differences in these algorithms reside in the
colouring procedure and in
how the graph operations are implemented
(see~\cite{Prosser12Exact,McCreesh17Thesis} for details). 
For our purpose, that is, in order to show that the running time of
these algorithms can be bounded from below by the length of
the shortest regular resolution refutation of the $k$-clique
formula, we assume
that the colouring algorithm and the graph
operations take constant time and prove
the lower bound for this general framework.
Moreover, 
we can assume that optimal colouring bounds and optimal
ordering of vertices are given.

We now argue that the running time of the $\mathtt{MaxClique}(G)$ algorithm is 
bounded from below by the size of a 
regular resolution refutation of 
$\cliqueblock{G}{k}$
up to a multiplicative factor of $2^kn^{\bigoh{1}}$.
We first note that a straightforward modification
of the $\mathtt{MaxClique}(G)$ algorithm gives
an algorithm, which we refer to as $\mathtt{Clique}(G,k)$,
that determines whether $G$ contains a $k$-clique.
Given a graph $G$ that does not
contain a $k$-clique, an execution of $\mathtt{Clique}(G,k)$
can be represented by a search tree with
leafs labelled by a subgraph $H\subseteq G$ of 
potential clique-members and a number $q$ such that
the branch leading to this leaf has identified 
$k-q$ clique members, has not queried 
any vertex of $H$, and $H$ is $(q-1)$-colourable.
Note that a read-once branching program can simulate this search tree
and, by Proposition~\ref{stm:upperbound_colourable} and the equivalence between read-once branching programs and regular resolution, at each leaf establish that $H$ does not 
contain a $q$-clique in size at most $2^q\cdot q^2 \cdot \setsize{V(H)}^2$. The bound on the running time follows directly.

Observe that establishing that $H$ does not 
contain a $q$-clique is done in a read-once fashion 
by querying only vertices of~$H$. 
Since the vertices of $H$ were not queried earlier on this branch, 
the whole branching program is read-once.
\section{Concluding Remarks}
\label{sec:open-problems}

In this paper we prove optimal average-case lower bounds for regular
resolution proofs certifying $k$-clique-freeness of \ErdosRenyi graphs
not containing $k$-cliques. These lower bounds are also strong enough to
apply for several state-of-the-art clique algorithms used in practice.

The most immediate and 
compelling
question arising from this work is
whether the lower bounds for regular resolution can be strengthened to
hold also for general resolution.  A closer study of our proof reveals
that there are several steps that rely on regularity. 
However, there is no connection per se between regular resolution
and the abstract combinatorial property of graphs that we show to be
sufficient to imply regular resolution lower bounds.  Thus, it is
tempting to speculate that this property, or perhaps some modification
of it, might be sufficient to obtain lower bounds also for general
resolution.
If so, a natural next step would be to try to extend the lower bound further
to the polynomial calculus proof system capturing Gröbner basis calculations.
It is worth mentioning that proving a general resolution lower bound of
$n^{\bigomega{k}}$ for the $k$-clique formula would have 
interesting consequences in parameterized proof 
complexity~\cite{DMS11ParameterizedProofCplx}.

Another intriguing question is whether the lower bounds we obtain asymptotically almost surely for random graphs can also be shown to hold deterministically under the weaker assumption that the graph has certain pseudorandom properties. 
Specifically, is it possible to get an $n^{\bigomega{\log n}}$ \lengthsize lower bound for the class of Ramsey graphs?
A graph on~$n$ vertices is called \emph{Ramsey} if it has no set of
$\lceil{2\log_2 n}\rceil$ vertices forming a clique or an independent set.
It is known that for sufficiently large~$n$ a random graph sampled
from $\randG{\np}{1/2}$ is Ramsey with high probability. Is it true that
for a Ramsey graph~$G$ on~$n$ vertices the formula
$\clique{G}{\lceil{2\log_2 n}\rceil}$ requires (regular) resolution
refutations of \lengthsize $n^{\bigomega{\log n}}$?
The main difficulty towards adapting our argument to this setting is
that Ramsey graphs are, in some sense, less well structured than
random graphs. For example, a random graph plus a constant number of
isolated vertices is, with high probability, still a Ramsey graph, but
it no longer satisfies the first property of clique-denseness
(Definition~\ref{def:propertyP}). This particular problem can be
circumvented using a result from\ \cite[Theorem 1]{PR.99}---as was
done in~\cite{LPRT17ComplexityRamsey} to obtain a lower bound for
tree-like resolution---but proving that a Ramsey graph satisfies the
second property of clique-denseness, or some suitable version of it, 
seems significantly
more challenging.

\paragraph{Acknowledgements}
This work has been a long journey, and 
different subsets of the authors want to acknowledge fruitful and
enlightening discussions with different subsets from the following list
of colleagues:
Christoph~Berkholz,
Olaf~Beyersdorff,
Nicola~Galesi,
Ciaran~McCreesh,
Toni~Pitassi,
Pavel~Pudl\'{a}k,
Ben~Rossman,
Navid~Talebanfard, 
and
Neil~Thapen.
A special thanks to Shuo Pang for having pointed out an inaccuracy in the probabilistic argument 
in Section~\ref{sec:lowerbound2} and having suggested a fix.

The first, second, and fourth authors were
supported by the European Research Council
under the European Union's \mbox{Horizon} 2020 Research and Innovation
Programme / ERC grant agreement no.~648276 AUTAR.

The third  and fifth authors were supported
by the European Research Council under the
European Union's Seventh Framework Programme \mbox{(FP7/2007--2013) /}
ERC grant agreement no.~279611 as well as by 
Swedish \mbox{Research} Council grants
\mbox{621-2012-5645}
and
\mbox{2016-00782},
and the second author did part of this work while at 
KTH Royal Institute of Technology supported by the same grants.
The last author was supported by the Russian Foundation for Basic Research.
%

% \bibliographystyle{alpha}
% \bibliography{refLocal,refArticles,refBooks,refOther}

\newcommand{\etalchar}[1]{$^{#1}$}

\end{document}